\newcommand{\SOCGVer}[1]{}%
\newcommand{\RegVer}[1]{#1}%
\newcommand{\SOCGVer}[1]{#1}%
\newcommand{\RegVer}[1]{}%
\newtheorem{problem}[theorem]{Problem}
\newtheorem{fact}[theorem]{Fact}
\newcommand{\myparagraph}[1]{\bigskip\noindent{\textbf{#1}}}
\newcommand{\suffixmeb}{\mu}
\newcommand{\remove}[1]{}
\renewcommand{\Re}{\mathbb{R}}
\newcommand{\Term}[1]{\textsf{#1}}
\newcommand{\DAG}{\Term{DAG}\xspace}
\newcommand{\complex}{\mathcal{C}}
\newcommand{\dcA}{\complex_1}
\newcommand{\dcB}{\complex_2}
\newcommand{\Vertices}[1]{V\pth{#1}}
\newcommand{\Edges}[1]{E\pth{#1}}
\newcommand{\curveA}{\pi}
\newcommand{\curveB}{\sigma}
\newcommand{\ball}{B}
\newcommand{\br}[1]{\textcolor{blue}{\textbf{BR}:: {#1}}}
\definecolor{seagreen}{RGB}{45, 138, 86}
\newcommand{\distFr}[2]{\mathsf{d}_{\EuScript{F}}\pth{#1, #2}} 
\newcommand{\distFrW}[2]{\mathsf{d}_{\EuScript{F}}^\mathrm{w} \pth{#1, #2}}
\newcommand{\distDFr}[2]{\mathsf{d}_{\EuScript{DF}}\pth{#1, #2}}
\newcommand{\distDFrW}[2]{\mathsf{d}_{\EuScript{DF}}^\mathrm{w} \pth{#1, #2}}
\newcommand{\edistFr}[2]{\mathsf{ed}_{\EuScript{F}}\pth{#1, #2}}
\newcommand{\edistFrW}[2]{\mathsf{ed}_{\EuScript{F}}^\mathrm{w} \pth{#1, #2}}
\newcommand{\edistDFr}[2]{\mathsf{ed}_{\EuScript{DF}}\pth{#1, #2}}
\newcommand{\edistDFrW}[2]{\mathsf{ed}_{\EuScript{DF}}^\mathrm{w} \pth{#1, #2}}
\newcommand{\normalFontI}{\text{\textnormal{I}}}
\newcommand{\dedistFr}[2]{\mathsf{Ded}_{\EuScript{F}}\pth{#1, #2}}
\newcommand{\iedistFr}[2]{\normalFontI\mathsf{ed}_{\EuScript{F}}\pth{#1, #2}}
\newcommand{\dedistFrW}[2]{\mathsf{Ded}_{\EuScript{F}}^\mathrm{w} \pth{#1, #2}}
\newcommand{\dedistDFr}[2]{\mathsf{Ded}_{\EuScript{DF}}\pth{#1, #2}}
\newcommand{\dedistDFrW}[2]{\mathsf{Ded}_{\EuScript{DF}}^\mathrm{w} \pth{#1, #2}}
\newcommand{\iedistFrW}[2]{\normalFontI\mathsf{ed}_{\EuScript{F}}^\mathrm{w} \pth{#1, #2}}
\newcommand{\iedistDFr}[2]{\normalFontI\mathsf{ed}_{\EuScript{DF}}\pth{#1, #2}}
\newcommand{\iedistDFrW}[2]{\normalFontI\mathsf{ed}_{\EuScript{DF}}^\mathrm{w} \pth{#1, #2}}
\newcommand{\dedistdp}[2]{\mathsf{DedDP} \pth{#1, #2}}
\newcommand{\iedistdp}[2]{\normalFontI\mathsf{edDP} \pth{#1, #2}}
\newcommand{\edistdp}[2]{\mathsf{edDP} \pth{#1, #2}}
\providecommand{\pth}[2][\!]{#1\left({#2}\right)}
\newcommand{\Frechet}{Fr\'{e}chet\xspace}
\newcommand{\pntA}{\mathsf{p}}
\newcommand{\pntB}{\mathsf{q}}
\newcommand{\thresh}{\delta}
\newcommand{\seq}[1]{\langle #1\rangle}
\newcommand{\seclab}[1]{\label{sec:#1}}
\newcommand{\secref}[1]{Section~\ref{sec:#1}}
\newcommand{\apndlab}[1]{\label{apnd:#1}}
\newcommand{\apndref}[1]{Appendix~\ref{apnd:#1}}
\newcommand{\obslab}[1]{\label{observation:#1}}
\newcommand{\obsref}[1]{Observation~\ref{observation:#1}}
\newcommand{\remlab}[1]{\label{remark:#1}}
\newcommand{\remref}[1]{Remark~\ref{remark:#1}}
\newcommand{\figlab}[1]{\label{fig:#1}}
\newcommand{\figref}[1]{Figure~\ref{fig:#1}}
\newcommand{\problab}[1]{\label{problem:#1}}
\newcommand{\probref}[1]{Problem~\ref{problem:#1}}
\newcommand{\factlab}[1]{\label{fact:#1}}
\newcommand{\factref}[1]{Fact~\ref{fact:#1}}
\newcommand{\deflab}[1]{\label{definition:#1}}
\newcommand{\defref}[1]{Definition~\ref{definition:#1}}
\newcommand{\corlab}[1]{\label{corollary:#1}}
\newcommand{\corref}[1]{Corollary~\ref{corollary:#1}}
\newcommand{\lemlab}[1]{\label{lemma:#1}}
\newcommand{\lemref}[1]{Lemma~\ref{lemma:#1}}
\newcommand{\thmlab}[1]{\label{theorem:#1}}
\newcommand{\thmref}[1]{Theorem~\ref{theorem:#1}}
\title{\Frechet Edit Distance}
\author{Emily Fox}{Department of Computer Science, University of Texas at Dallas, USA}{emily.fox@utdallas.edu}{}{Work on this paper was partially supported by NSF CAREER Award 1942597 and CCF Award 2311179.}
\author{Amir Nayyeri}{School of Electrical Engineering and Computer Science,
Oregon State University, USA}{amir.nayyeri@oregonstate.edu}{}{Work on this paper was partially supported by NSF Awards CCF 2311180 and CCF 1941086.}
\author{Jonathan James Perry}{Department of Computer Science, University of Texas at Dallas, USA}{jperry@utdallas.edu}{https://orcid.org/0009-0003-0042-249X}{Work on this paper was partially supported by NSF CAREER
      Award 1750780 and CCF Award 2311179.}
\author{Benjamin Raichel}{Department of Computer Science, University of Texas at Dallas, USA}{benjamin.raichel@utdallas.edu}{https://orcid.org/0000-0001-6584-4843}{Work on this paper was partially supported by NSF CAREER
      Award 1750780 and CCF Award 2311179.}
\authorrunning{E. Fox, A. Nayyeri, J. Perry, and B. Raichel} %TODO mandatory. First: Use abbreviated first/middle names. Second (only in severe cases): Use first author plus 'et al.'
\keywords{\Frechet distance, Edit distance, Hardness} %TODO mandatory; please add comma-separated list of keywords
\begin{document}

\maketitle

\begin{abstract}
 %In this paper
 We define and investigate the \Frechet edit distance problem.
 %Here,
 Given two polygonal curves $\curveA$ and $\curveB$ and a threshhold value $\thresh>0$, we seek the minimum number of edits to $\curveB$ such that the \Frechet distance between the edited $\curveB$ and $\curveA$ is at most $\thresh$. For the edit operations we consider three cases, namely, deletion of vertices, insertion of vertices, or both. For this basic problem we consider a number of variants. Specifically, we provide polynomial time algorithms for both discrete and continuous \Frechet edit distance variants, as well as hardness results for weak \Frechet edit distance variants.
\end{abstract}

\section{Introduction}
\seclab{intro}

\subsection{Motivation}
We consider the general problem of shape matching between polygonal curves.
In a standard formulation of this problem, one is given two sequences of points embedded in a common ambient space like \(\mathbb{R}^d\) with \(d\) a constant.
Depending on the specific application, these inputs may be interpreted directly as the discrete point sequences they are or as the vertices of continuous curves obtained by interpolating between contiguous sequence points.
%Either way, one needs to specify an actual goal for matching the curves, for example to determine if the curves are the same/similar or to match corresponding portions of the curves based on the location of points both in the ambient space and within the ordering given by the sequences.

The computational geometry community has strongly promoted the use of the %
\RegVer{continuous and discrete \Frechet distances}\SOCGVer{\Frechet distance}
to handle %both tasks of 
determining similarity of the curves and matching corresponding portions.
The \emph{continuous} \Frechet distance is often presented using the walks of a person and their dog along the curves;
both entities move at any positive variable speed from the beginning to the end of their respective curve, and one seeks the smallest length of a leash needed to keep them connected during their walks.
For the \emph{discrete} variant, the person and dog are replaced by a leashed pair of frogs that iteratively hop between contiguous vertices, either individually or at the same time, and
the length of the leash is only considered during the moments between hops.
%Further complicating things is the existence of 
Some prior works have also considered the \emph{weak} variants of continuous and discrete \Frechet,  
where the entities are allowed to move \emph{backwards} at times to keep their leashes short.
Beyond its theoretical interest, the \Frechet distance has seen use in mapping and map construction~\cite{akpw-mca-15,cdgnw-ammrf-11}, handwriting recognition~\cite{skb-fdbas-07}, and protein alignment~\cite{jxz-psadf-08}.

We naturally consider two curves to be similar if their \Frechet distance does not exceed some predetermined threshold value~\(\thresh\).
This notion of similarity allows for a single choice of \(\thresh\) that can be used regardless of the curves' length, and it is resilient to differing sampling rates (as long as the sequences are sufficiently dense in the case of discrete \Frechet).
Unfortunately, this intuitively satisfying notion of similarity has some severe issues once we start applying it to noisy real world data such as GPS traces from individuals' phones or vehicles.
In particular, nearly all variants of the \Frechet distance are extremely sensitive to outliers.
Adding even a single point to one of the input curves can increase their distance by an arbitrarily high amount if that point lies far away from the other curve, and this issue is present regardless of how many points are present in the curves' input sequences.
Similarly, a sparsely sampled continuous curve can change dramatically if even a single vertex is ignored.

\subsection{The \Frechet Edit Distance}
Multiple modifications of and even alternatives to the \Frechet distance have been proposed to address the issue described above, and
we review the most relevant of these alternatives in Section~\ref{sec:intro-related}.
At the end of the day, though, we want to keep that our final notion of similarity is based on the standard definitions of \Frechet distance as it remains the best tool we have for working with continuous and densely sampled discrete curves.

We take inspiration from the string edit distance (Levenstein distance).
Viewing the input curves' point sequences as a pair of strings, we ask for the minimum number of edits (point deletions and/or insertions) needed to bring one curve within \Frechet distance \(\thresh\) of the other.
Intuitively, the fewer edits needed, the more likely it is that the input curves really do represent two instances of the same or at least very similar trajectories through the ambient space.
Depending on which of the above variants of the \Frechet distance we use and which edit operations we allow, we obtain one of several specific similarity measures between the curves.
However, we refer to any of these combinations under the general term \emph{\Frechet edit distance}.
We give the formal definitions and notation for these measures in Section~\ref{sec:prelim-fed}.

\subsection{Our Results}

We describe polynomial time algorithms and NP-hardness results for nearly every variant of \Frechet edit distance proposed above.
Let \(m\) and \(n\) denote the number of points in the two input sequences, with \(n\) denoting the number of points in the sequence that can be edited.
\begin{enumerate}[itemindent=20pt,
labelwidth=15pt, labelsep=5pt, listparindent=0.7cm,
align=left]
    \item We describe polynomial time algorithms for certain cases of \Frechet edit distance using the \textbf{strong continuous \Frechet distance}.
    When limited to deletions, in any \(\mathbb{R}^d\)
    we can compute the \Frechet edit distance in \(O(mn^3)\) time.
    If only \(k\) deletions are needed, our algorithm can be made to run in \(O(k^2mn)\) time.
    Further, we can also handle the case when we allow deletions on \emph{either} input curve, and the corresponding running times respectively become \(O(m^3n^3)\) or \(O(k^4mn)\).
    %Further, we can compute the minimum number of deletions needed from \emph{either} input curve to reduce their \Frechet distance in \(O(mn(m + n)^4)\) or \(O(k^4mn)\) time.
    %
%    In the plane, \(\mathbb{R}^2\), and with either edits limited to insertions or with both deletions and insertions allowed, respectively, we also describe algorithms running in \(O(nm^5)\) and $O((m+n)^3nm^3)$ time.
    In the plane, \(\mathbb{R}^2\), for insertions only we describe algorithms with times \(O(nm^5)\), or $O(nm^3(k^2+m\log^2 m))$ when limiting to $k$ insertions. When we allow both deletions and insertions these times become $O((m+n)^3nm^3)$ and $O(knm^3(k^2+m\log^2 m))$.

    All of our algorithms for strong continuous \Frechet distance include an embedding of the curve(s) being edited into a \emph{DAG complex}~\cite{hr-fdre-14}, a geometrically embedded directed acyclic graph that represents the different routes one can take through a curve \emph{and its optionally edited portions}.
    For deletions, we include every direct vertex-to-vertex segment in the complex.
    Insertions require substantially more care, because it is not clear ahead of time where one should place the new vertices or where the new subcurves they determine will map to. 
    In fact, a newly inserted subcurve may map to a portion on the curve not being edited that starts or ends on the interior of a segment. 
    Despite this challenge, we can argue that one can restrict attention to a bounded set of canonical subcurves, 
 %   in fact, the best walks for the person and dog may have the dog on a new vertex and the person on the interior of an edge.
%    For this purpose, we compute new optional detours for the curve being edited with a minimum number of vertices that have small \Frechet distance relative to portions of the curve \emph{not} being edited.
    and these subcurves can be computed with the aid of a result from~\cite{ghms-apsmlp-93} who describe how to compute minimum vertex curves lying within small \Frechet distance to another curve, via the computation of so-called minimum link stabbers.

\item For the \textbf{strong discrete \Frechet distance} with edits limited to deletions, we describe an \(O(mn)\) time algorithm for any pair of curves in \(\mathbb{R}^d\).
    This result cannot be improved upon by any polynomial factor without violating a conditional lower bound known for the discrete \Frechet distance itself~\cite{DBLP:journals/corr/Bringmann14,DBLP:journals/jocg/BringmannM16}.
    For insertions (with or without deletions as well), the running time becomes \(O(m^2 + mn)\).
    These algorithms use relatively straightforward dynamic programming recurrences, although we do some non-trivial precomputation to compute a small set of positions in which to insert new points.

    \item We show that the variant with \textbf{weak discrete \Frechet distance} is NP-hard even for curves in \(\mathbb{R}^1\) when attempting to minimize the number of deletions, minimize the number of insertions, and minimize the number of either kind of edit.
    In fact, even determining if \emph{any} number of deletions leads to small weak discrete \Frechet distance is NP-hard.
    These results can be extended to \textbf{weak continuous \Frechet distance} after moving to the plane \(\mathbb{R}^2\).
    All of our hardness results are shown by a reduction from 3SAT using a similar argument to that used in~\cite{blopuv-cfducod-23} for the hardness of finding a minimum weak discrete \Frechet distance realization for uncertain curves in \(\mathbb{R}^1\).
\end{enumerate}

%Our results can be extended in a straightforward manner to a variant of the Fr\'echet edit distance that allows for and tries to minimize the number of \emph{substitutions} between original and new points of the sequence that can be edited in addition to the aforementioned deletions and insertions.
%We defer the details to the paper's full version.

In addition to deletions and insertions, our results can be extended in a straightforward manner to include \emph{substitutions} as a third possible edit operations for the Fr\'echet edit distance.
We defer the details to the future journal version of the paper.

\subsection{Related and Improved Upon Prior Work}
\label{sec:intro-related}

As far as we are aware, we are the first to consider this particular measure of similarity in full generality, although there is past work that comes close.
The most relevant large body of work concerns the \emph{shortcut \Frechet distance} between curves where one asks for the minimum \Frechet distance possible after replacing disjoint subcurves with line segment shortcuts~\cite{df-jdcfds-13,bds-cfdsn-14,cd-ckfd-22,afkks-dsfds-15,fk-adfdt-20}.
For continuous curves, one can either allow the shortcuts to go between any two (interpolated) points on the curve, or restrict the shortcuts to be between vertices of the curve.
This vertex-to-vertex shortcut restriction is similar to the deletion only version of \Frechet edit distance, except deletion of multiple contiguous points counts as a single shortcut operation.
%(Prior works often did not focus on the number of shortcuts, with the notable exception of \cite{cd-ckfd-22}.)
(By default we assume the shortcut problem is defined without a bound on the number of shortcuts allowed, though the bounded version has also been considered before, and prominently so in \cite{cd-ckfd-22}.)

Most relevant to the current work is a known \(O(n^3 \log n)\) time algorithm for deciding if the continuous \Frechet distance with vertex-to-vertex shortcuts on one of two \(n\)-vertex curves is at most a given value \(\thresh\).
This algorithm is restricted to curves in \(\mathbb{R}^2\)~\cite{bds-cfdsn-14}.
A slight modification to our first algorithm improves the running time to \(O(n^3)\) and works for curves in any \(\mathbb{R}^d\).
We note that the equivalent shortcut problem for the discrete \Frechet distance has a known \emph{linear} time solution in the plane~\cite{afkks-dsfds-15}.
(Recall our discrete deletion only edit distance algorithm minimizes the number of point deletions, and thus its running time cannot see a substantial improvement without violating conditional lower bounds~\cite{DBLP:journals/corr/Bringmann14,DBLP:journals/jocg/BringmannM16}.)
Surprisingly, the shortcut problem becomes NP-hard when shortcuts are allowed between any two points on the continuous curve~\cite{bds-cfdsn-14}. Further developing the hardness picture, our Section~\ref{sec:hardness} result for any number of deletions implies even vertex-to-vertex shortcutting is NP-hard if we switch from the strong to the weak \Frechet distance, 
%with the interpretation that when one shortcuts past a subset of vertices one cannot shortcut back to any vertex in the subset (which is automatically not possible in the strong version).
with the interpretation that the curve must be shortcut before the traversal (i.e.\ one cannot shortcut a subset of vertices and then later go back to a vertex in the subset, which is automatically not possible in the strong version).

Buchin and Pl\"{a}tz~\cite{bp-kfd-22} proposed an alternative to the above problem where one seeks the minimum Fr\'{e}chet distance possible between discrete or continuous curves after removing up to~\(k\) vertices on one or both curves.
By wrapping them in a binary search, their algorithms can be used to solve the deletion only strong Fr\'{e}chet distance versions of our problem.
Our algorithms are faster than theirs by at least a~\(\log n\) factor in every case except allowing deletions from two continuous curves where their algorithm uses one fewer factor of~\(k\).

Leaving behind the \Frechet distance allows one to consider other distance measures that are best defined over the discrete input sequences as opposed to their interpolated curves~\cite{DBLP:journals/corr/AgarwalFPY15,fl-aged-22,Gold:2018:DTW:3266298.3230734,Chen2005,CHEN2004,Marteau2009,Sankararaman:2013:MMS:2525314.2525360,Stojmirovic2009GeometricAO,wmdtsk-exrmd-13}.
Of particular note is the so-called \emph{geometric edit distance} where one attempts to edit one sequence to look \emph{exactly} like the other one, assigning smaller costs for substitutions between nearby points~\cite{DBLP:journals/corr/AgarwalFPY15,fl-aged-22,Gold:2018:DTW:3266298.3230734}.
As opposed to the above measures for discrete sequences, our use of \Frechet distance allows us to work with continuous interpolations of the input sequences.
Even when considering the discrete \Frechet distance, we avoid the issue of two nearly identical but offset curves from having a high distance just because they contain a large number of input points.
If an input resembling two such curves results in a high \Frechet edit distance, it must be because there is a large number of \emph{outlier} points that need to be cleaned up before similarity is evident.

\section{Preliminaries}
\seclab{prelim}

Throughout, given points $p,q\in \Re^d$, $||p-q||$ denotes their Euclidean distance.  
Moreover, given two (closed) sets $P,Q\subseteq \Re^d$, $||P-Q|| = \min_{p\in P, q\in Q} ||p-q||$ denotes their distance, where for a single point $x\in \Re^d$ we write $||x-P||=||\{x\}-P||$. 
$\ball(x,r)$ will be used to denote the closed ball of radius $r$ centered at $x$.
We use angled brackets to denote an ordered list $\seq{x_1,\ldots,x_n}$, and use $L_1\circ L_2$ to denote the concatenation of ordered lists $L_1$ and $L_2$, where for a single item $x$ we sometimes write $x\circ L = \seq{x}\circ L$. 
%Overloading notation, we write $\emptyset=\seq{}$ to denote the empty sequence. \br{Is there better notation for the empty sequence, and if so propagate throughout.}

\myparagraph{\Frechet Distance.}
A \emph{polygonal curve} of length $m$ is a sequence of $m$ points $\curveA=\seq{\curveA_1,\ldots,\curveA_m}$ where $\curveA_i\in \Re^d$ for all $i$. 
Such a sequence induces a continuous mapping from $[1,m]$ to  $\mathbb{R}^d$, which we also denote by $\curveA$, such that for any integer $1\leq i< m$, the restriction of $\curveA$ to the interval $[i,i+1]$ is defined by $\curveA(i+\alpha) = (1-\alpha)\curveA_i+\alpha\curveA_{i+1}$ for any $\alpha\in [0,1]$, i.e.\ a straight line segment. 
We will view $\curveA$ as both a discrete point sequence and a continuous function interchangeably, and when it is clear from the context, we also may use $\curveA$ to denote the image $\curveA([1,m])$.  
We use \(\curveA[i, j]\), for \(i \leq j\), to denote the restriction of \(\curveA\) to the interval \([i, j]\). Given a curve $\curveA=\seq{\curveA_1,\ldots,\curveA_m}$, we write $|\curveA|=m$ to denote its size.
%The set of vertices of $\curveA$ is defined as $V(\curveA) = \{\curveA_0,\curveA_1 \ldots, \curveA_n\}$, where $\curveA_i = \curveA(i)$, and the set of edges is $E(\curveA) = \{\curveA_0\curveA_1, \ldots, \curveA_{n-1}\curveA_n\}$, where $\curveA_{i}\curveA_{i+1}$ is the line segment connecting $\curveA_{i}$ and $\curveA_{i+1}$.

A reparameterization for a curve $\curveA$ of length $m$ is a continuous non-decreasing bijection $f:[0,1]\rightarrow [1,m]$ such that $f(0) = 1,f(1)=m$.
Given reparameterizations $f,g$ of an $m$ length curve $\curveA$ and an $n$ length curve $\curveB$, respectively, the \emph{width} between $f$ and $g$ is defined as
\RegVer{\[width_{f,g}(\curveA,\curveB)= \max_{\alpha\in[0,1]} ||\curveA(f(\alpha))-\curveB(g(\alpha)) || .\]}
\SOCGVer{\(width_{f,g}(\curveA,\curveB)= \max_{\alpha\in[0,1]} ||\curveA(f(\alpha))-\curveB(g(\alpha)) ||\).}
The (standard, i.e.\ continuous and strong) \emph{Fr\'{e}chet distance} between $\curveA$ and $\curveB$ is then 
\RegVer{\[\distFr{\curveA}{\curveB}=\inf_{f,g} width_{f,g}(\curveA,\curveB)\]}
\SOCGVer{\(\distFr{\curveA}{\curveB}=\inf_{f,g} width_{f,g}(\curveA,\curveB)\)}
where $f,g$ range over all possible reparameterizations of $\curveA$ and $\curveB$.

\RegVer{In this paper we will consider several standard variants of the \Frechet distance.}
The \emph{discrete \Frechet distance} is similar to the above defined \Frechet distance, except that we do not traverse the edges but rather discontinuously jump to adjacent vertices. Specifically, define a monotone correspondence as a sequence of index pairs $\seq{(i_1,j_1),\ldots,(i_k,j_k)}$ such that $(i_1,j_1)=(1,1)$, $(i_k,j_k)=(m,n)$, for any $1\leq z\leq k$ we have $1\leq i_z\leq m$ and $1\leq j_z\leq n$, and for any $1\leq z<k$ we have $(i_{z+1},j_{z+1})\in \{(i_{z}+1,j_{z}),(i_{z},j_{z}+1),(i_{z}+1,j_{z}+1)\}$. Let $C$ denote the set of all monotone correspondences, then the discrete \Frechet distance is 
\RegVer{\[
\distDFr{\curveA}{\curveB} = \inf_{c\in C} \max_{(i,j)\in c} ||\curveA_{i}-\curveB_{j}||
.\]}
\SOCGVer{\(\distDFr{\curveA}{\curveB} = \inf_{c\in C} \max_{(i,j)\in c} ||\curveA_{i}-\curveB_{j}||\).}

Both the \Frechet distance and the discrete \Frechet distance have a corresponding \emph{weak} variant, which is defined analogously except that one is allowed to backtrack on the curves. Specifically, the \emph{weak \Frechet distance}, denoted $\distFrW{\curveA}{\curveB}$, is defined similarly to the standard \Frechet distance above, except that when defining the width $f$ and $g$ are no longer required to be non-decreasing bijections, but are still required to be continuous and have $f(0)=1, g(0)=1$ and $f(1)=m,g(1)=n$. Similarly, the \emph{weak discrete \Frechet distance}, denoted $\distDFrW{\curveA}{\curveB}$, is defined similarly to the discrete \Frechet distance above, except that we no longer require the correspondence to be monotone. Specifically, a (non-monotone) correspondence is sequence of index pairs $\seq{(i_1,j_1),\ldots,(i_k,j_k)}$ such that $(i_1,j_1)=(1,1)$, $(i_k,j_k)=(m,n)$, for any $1\leq z\leq k$ we have $1\leq i_z\leq m$ and $1\leq j_z\leq n$, and for any $1\leq z<k$ we have $(i_{z+1},j_{z+1})\in \{(i_{z}\pm 1,j_{z}),(i_{z},j_{z} \pm 1),(i_{z}\pm 1,j_{z}\pm 1)\}$.

\myparagraph{Free Space.}
To compute the \Frechet distance one normally looks at the so called \emph{free space}.
For the continuous case, the $\thresh$ free space between curves $\curveA$ and $\curveB$, of sizes $m$ and $n$ respectively, is defined as
\[F_\thresh=\{(\alpha,\beta)\in [1,m]\times [1,n]\ \mid\ ||\curveA(\alpha)-\curveB(\beta)|| \leq \thresh\}.\]
%
%$C(i,j) = [i,i+1]\times [j,j+1]$ is referred to as the cell of the free space diagram determined by edges $\curveA_{i}\curveA_{i+1}$ and $\curveB_{j}\curveB_{j+1}$, and the free space within this cell is $$F_\thresh(i,j)=\{(\alpha,\beta)\in [i,i+1]\times [j,j+1]\ \mid\ ||\curveA(\alpha)-\curveB(\beta)|| \leq \thresh\}.$$

%Alt and Godau \cite{ag-cfdpc-95} showed that the free space within a cell is always a convex set (specifically, the clipping of an affine transformation of a disk to the cell). Moreover, any $x,y$ monotone path in the free space from $(1,1)$ to $(n,m)$ corresponds to a pair of reparameterizations $f$, $g$ of $\curveA$, $\curveB$ such that $width_{f,g}(\curveA,\curveB)\leq \thresh$. The converse also holds and hence $d_{\cal F}(\curveA,\curveB)\leq \thresh$ if and only if such a monotone path exists.
%These two statements together imply that in order to determine if $d_{\cal F}(\curveA,\curveB)\leq \thresh$, it suffices to restrict attention to the free space intervals on the boundaries of the cells. Specifically, let $L^F_{i,j}$ (resp.\ $B^F_{i,j}$) denote the left (resp.\ bottom) free space interval of $C(i,j)$, i.e.\ $L^F_{i,j} = F_\thresh(i,j) \cap (\{i\}\times [j,j+1])$ (resp. $B^F_{i,j} = F_\thresh(i,j) \cap ([i,i+1]\times \{j\})$). 

Alt and Godau \cite{ag-cfdpc-95} observed that any $x,y$ monotone path in the $\thresh$ free space from $(1,1)$ to $(m,n)$ corresponds to a pair of reparameterizations $f$, $g$ of $\curveA$, $\curveB$ such that $width_{f,g}(\curveA,\curveB)\leq \thresh$. The converse also holds and hence $d_{\cal F}(\curveA,\curveB)\leq \thresh$ if and only if such a monotone path exists. Thus in order to determine if $d_{\cal F}(\curveA,\curveB)\leq \thresh$, we define the reachable free space,
\[R_{\thresh}=\{(\alpha,\beta)\in F_\thresh \ \mid\ \text{there exists an $x,y$ monotone path from $(1,1)$ to $(\alpha,\beta)$}\}.\]
Hence  $d_{\cal F}(\curveA,\curveB)\leq \thresh$ if and only if $(m,n)\in R_{\thresh}$.

\newcommand{\freespacedetails}{
$C(i,j) = [i,i+1]\times [j,j+1]$ is referred to as the \emph{cell} of the free space diagram determined by edges $\curveA_{i}\curveA_{i+1}$ and $\curveB_{j}\curveB_{j+1}$. Alt and Godau \cite{ag-cfdpc-95} made the crucial observation that the free space within any cell, i.e.\ $F_\thresh(i,j)=F_\thresh\cap C(i,j)$ is always a convex set (specifically, the clipping of an affine transformation of a disk to the cell). Thus one can restrict attention solely to the free and reachable spaces restricted to the boundaries of each cell. 
Specifically, let $L^F_{i,j}$ (resp.\ $B^F_{i,j}$) denote the left (resp.\ bottom) free space interval of $C(i,j)$, i.e.\ $L^F_{i,j} = F_\thresh(i,j) \cap (\{i\}\times [j,j+1])$ (resp. $B^F_{i,j} = F_\thresh(i,j) \cap ([i,i+1]\times \{j\})$). Analogously define $L^{R}_{i,j}$ and $B^{R}_{i,j}$ for the reachable subsets of the left and bottom boundaries of $C(i,j)$. By convexity of the free space, given $L^{R}_{i,j}$ and $B^{R}_{i,j}$, one can determine $L^{R}_{i+1,j}$ by setting $L^{R}_{i+1,j}=L^{F}_{i+1,j}$ if $B^{R}_{i,j}$ is non-empty, and otherwise $L^{R}_{i+1,j}$ is the subinterval of $L^{F}_{i+1,j}$ whose $y$-coordinate is greater than or equal to the smallest $y$-coordinate of a point in $L^{R}_{i,j}$. $B^{R}_{i,j+1}$ is computed analogously. 
Finally, since the cells of the free space have a natural partial order (i.e.\ $C(i,j)\preceq C(k,l)$ if and only if $i\leq k$ and $j\leq l$), the reachable intervals can now be propagated cell by cell according to a topological sorting of the cells. This determines whether $(m,n)\in R_{\thresh}$, and hence whether $d_{\cal F}(\curveA,\curveB)\leq \thresh$, in linear time in the number of cells in the free space (i.e.\ $O(mn)$ time).
}
\RegVer{\freespacedetails}

For the case of the discrete \Frechet distance, the free space can still be considered, and is simply described by an $m\times n$ grid graph. Specifically, the vertices are all pairs $(i,j)$ such that $1\leq i\leq m$ and $1\leq j\leq n$, and for any vertex $(i,j)$ we create the directed edges $(i,j)\rightarrow (i+1,j)$, $(i,j)\rightarrow (i,j+1)$, and $(i,j)\rightarrow (i+1,j+1)$ (whenever the corresponding destination vertex exists). A vertex $(i,j)$ is then called \emph{free} if $||\curveA_i-\curveB_j||\leq \delta$. Analogous to the continuous case, we then have that $\distDFr{\curveA}{\curveB}\leq \delta$ if and only if there is a path in this directed graph from $(1,1)$ to $(m,n)$ which only uses free vertices.

For the weak discrete \Frechet distance the free space is described by the undirected graph on the same set of vertices, where vertex $(i,j)$ and vertex $(i',j')$ are adjacent if and only if $|i-i'|\leq 1$ and $|j-j'|\leq 1$.
Again, $\distDFrW{\curveA}{\curveB}\leq \delta$ if and only if there is a path in this undirected graph from $(1,1)$ to $(m,n)$ which only uses free vertices. 
Analogously, the free space for the weak continuous \Frechet distance is the same as that for the strong continuous \Frechet distance, but now we no longer require the path through the free space be monotone.  
%The continuous weak case is also only changed from continuous strong by disregarding the monotone requirement.

\myparagraph{\Frechet Edit Distance.}
\seclab{prelim-fed}
Given a curve $\curveB=\seq{\curveB_1,\ldots,\curveB_n}$, a \emph{deletion} of the vertex $\curveB_i$ produces the $n-1$ vertex curve $\curveB' = \seq{\curveB_1,\ldots,\curveB_{i-1},\curveB_{i+1},\ldots,\curveB_n}$.  Conversely the \emph{insertion} of a vertex $\pntA$ into $\curveB$ at position $i$ produces the $n+1$ vertex curve $\curveB' = \seq{\curveB_1,\ldots,\curveB_{i-1},\pntA, \curveB_{i},\ldots, \curveB_n}$. Both deletions and insertions are referred to as \emph{edits}.
\RegVer{

}
Let $\thresh>0$ be a fixed threshold distance. Then given polygonal curves $\curveA=\seq{\curveA_1,\ldots,\curveA_m}$ and $\curveB=\seq{\curveB_1,\ldots,\curveB_n}$ define the $\delta$-threshold \emph{\Frechet edit distance} from $\curveB$ to $\curveA$ as the minimum number of edits to $\curveB$, producing a new curve $\curveB'$, such that $\distFr{\curveA}{\curveB'}\leq \delta$. As $\delta>0$ is some fixed value, and the term ``\Frechet'' is implicit, throughout we refer to this more simply as the \emph{edit distance} from $\curveB$ to $\curveA$, and we denote it as $\edistFr{\curveA}{\curveB}$. 
We analogously define the weak edit distance, denoted $\edistFrW{\curveA}{\curveB}$, the discrete edit distance, denoted $\edistDFr{\curveA}{\curveB}$, and the weak discrete edit distance, denoted $\edistDFrW{\curveA}{\curveB}$, by replacing the condition $\distFr{\curveA}{\curveB'}\leq \delta$ with $\distFrW{\curveA}{\curveB'}\leq \delta$, $\distDFr{\curveA}{\curveB'}\leq \delta$, and $\distDFrW{\curveA}{\curveB'}\leq \delta$, respectively.

For any one of these variants we may consider the case when only deletions or only insertions are allowed. In this case we prepend $\mathsf{D}$ for deletion only, or $\normalFontI$ for insertion only. (For example, $\edistFr{\curveA}{\curveB}$ becomes $\dedistFr{\curveA}{\curveB}$ or $\iedistFr{\curveA}{\curveB}$.) Note that by considering only deletions or only insertions, there may be no valid edit sequence, in which case we define the edit distance as $\infty$.  Conversely, if we allow both deletions and insertions, there is always a solution of cost $m+n$ by deleting all vertices of $\curveB$ and inserting all vertices of $\curveA$.

\section{\DAG Complexes}\seclab{ogdagcomplex}

\cite{hr-fdre-14} define the following generalization of a curve. 
Consider a directed acyclic graph (\DAG)  with vertices in $\Re^d$, where a directed edge $\pntA\rightarrow\pntB$ is realized by the directed segment $\pntA\pntB$. We refer to such an embedded graph as being a \emph{\DAG complex}, denoted $\complex$, with embedded vertices $\Vertices{\complex}$ (i.e.\ points) and embedded edges $\Edges{\complex}$ (i.e.\ line segments). We assume the underlying graph is weakly connected and thus write $|\complex| = |\Edges{\complex}|$. Note also that a \DAG complex is allowed to have crossing edges or overlapping vertices (i.e.\ it is not necessarily an embedding in $\Re^d$).
Call a polygonal curve $\curveA=\seq{\curveA_1,\ldots,\curveA_k}$ \emph{compliant} with $\complex$ if $\curveA_i\in V(\complex)$ for all $i$ and $\curveA_i\curveA_{i+1}\in E(\complex)$ for all $1\leq i<k$. (Note this implies $\curveA$ traverses each edge in the direction compliant with its orientation from the \DAG.)
%
%Call a polygonal curve $\curveA$ \emph{compliant} with $\complex$ if $\curveA$ only uses edges from $\Edges{\complex}$, and it traverses them in the direction compliant with the orientation associated with each edge from the \DAG.
\cite{hr-fdre-14} considered the following.

\begin{problem}\problab{dag}
Given two \DAG complexes $\dcA$ and $\dcB$,
start vertices $s_1 \in \Vertices{\dcA},s_2 \in
\Vertices{\dcB}$, end vertices $t_1 \in
\Vertices{\dcA}, t_2 \in \Vertices{\dcB}$, and a value $\delta$, determine if there exists two polygonal curves $\curveA_1, \curveA_2$, such that:
    \RegVer{\begin{enumerate}[(a), leftmargin=1.5cm]}
    \SOCGVer{\begin{inparaenum}[(a)]}
    \item $\curveA_i$ is compliant with $\complex_i$ for $i=1,2$.
    \item $\curveA_i$ starts at $s_i$ and ends at $t_i$ in
    $\complex_i$, for $i=1,2$.
%    \item The strong \Frechet distance between $\curveA_1$ and $\curveA_2$ is minimized among all such curves.
    \item $\distFr{\curveA_1}{\curveA_2} \leq \delta$.
    \RegVer{\end{enumerate}}
    \SOCGVer{\end{inparaenum}}
\end{problem}

\SOCGVer{
\cite{hr-fdre-14} solve \probref{dag} in $O(|\complex_1||\complex_2|)$ time by considering the free space of the product complex of $\dcA$ and $\dcB$.
This is analogous to the standard procedure used for the \Frechet distance between curves.
%, which are a special case of \DAG complexes.
In the full version, we describe this standard procedure for curves and then how the proceedure extends to this more general product complex. This in turn allows us to remark how the procedure from \cite{hr-fdre-14} can easily be extended to the more general setting where we allow multiple starting and ending points, resulting in the following theorem.

%\cite{hr-fdre-14} solve \probref{dag} in $O(|\complex_1||\complex_2|)$ time by considering the free space of the product complex of $\dcA$ and $\dcB$. This is analogues to the procedure described above  for the standard \Frechet distance between curves, which are a special case of \DAG complexes. In the full version we describe this complex is more detail, allowing us to remark how the procedure from \cite{hr-fdre-14} can easily be extended to the more general setting where we allow multiple starting and ending points, resulting in the following theorem.
}

\newcommand{\dagcomplex}{
\cite{hr-fdre-14} solve \probref{dag} in $O(|\complex_1||\complex_2|)$ time by considering the free space of the product complex of $\dcA$ and $\dcB$.
This is analogues to the procedure described above  for the standard \Frechet distance between curves, which are a special case of \DAG complexes.

We now describe the product complex in more detail (in a manor slightly more tailored to our purposes than in \cite{hr-fdre-14}), which will allow us to remark how the procedure from \cite{hr-fdre-14} can easily be extended to the more general setting where we allow multiple starting and ending points. 

Given two \DAG complexes $\complex_1$ and $\complex_2$, their product complex $\complex = \complex_1\times \complex_2$ consists of a collection of cells, and a description of which cells are adjacent along different boundary edges. Specifically, each cell is the Cartesian product of a pair of (ordered) segments $uu'\in E(\complex_1)$ and $vv'\in E(\complex_2)$. Note that as $uu'$ and $vv'$ are ordered, their resulting cell has a well defined left, right, bottom, and top boundary edge. 
For two cells $uu'\times vv'$ and $ww'\times vv'$, if $w=u'$ then we identify the right bounding edge of the cell $uu'\times vv'$ with the left bounding edge of the cell $ww'\times vv'$, i.e.\ the cells are adjacent along this edge. 
Similarly, for cells $uu'\times vv'$ and $uu'\times ww'$, if $w=v'$ then we identify the bottom edge of $uu'\times ww'$ with the top edge of $uu'\times vv'$.
(Note we view $\complex$ as an abstract complex, and are not concerned with whether it can be embedded such that non-adjacent cells do not intersect.)

The product complex of two \DAG complexes is thus analogous to the standard free space diagram for curves described above, except that multiple cells can be adjacent along a given cell boundary edge. In particular, by the same reasoning the free space within each cell of the product complex is convex, and thus we only need to propagate reachability information on the bounding edges of the cells. Finally, the topological orderings of the \DAG complexes $\complex_1$ and $\complex_2$ induce a topological ordering of the cells of $\complex$, and thus there is a valid ordering to propagate the reachability information. Specifically, for a cell $uu'\times vv'$, the reachability interval of its left bounding edge $u\times vv'$ depends on all adjacent cells whose right boundary edge is $u\times vv'$. Any such adjacent cell precedes the cell $uu'\times vv'$ in the topological ordering. Thus we have the reachability intervals on the left and bottom boundaries of this adjacent cell, and so can propagate reachability to the edge $u\times vv'$ in an identical manner to that described above for the standard free space diagram between curves. Doing this for all adjacent cells produces a collection of reachability intervals on $u\times vv'$, and we simply take the union of all such intervals. Observe that from the description above for curves, any non-empty reachability interval on $u\times vv'$ always has top endpoint equal to the top endpoint of the free space on this edge, and therefore the union of these intervals is itself a single interval (and thus computable in linear time in the number of adjacent cells). 

In \cite{hr-fdre-14} there was a single starting point in each \DAG complex, $s_1\in V(\complex_1)$ and $s_2\in V(\complex_2)$. This determines a single vertex $(s_1, s_2)$. If this vertex is not in the free space (i.e.\ $||s_1-s_2||> \thresh$) then there are no reachable points. Otherwise, if $(s_1, s_2)$ is free, then the reachability is propagated from this reachable starting vertex in topological order as described above. (In particular the entire free space on any edge of the form $s_1\times s_2v$ or $s_1v\times s_2$, for any $v$, is initially marked a reachable.) 
We generalize this to allow sets of starting vertices $S_1\subseteq V(\complex_1)$ and $S_2\subseteq V(\complex_2)$. Now we simply apply the same reachable initialization process for any pair $s_1\in S_1$ and $s_2\in S_2$, i.e.\ if $(s_1, s_2)$ is free then we mark it as reachable (and the entire free space on edges of the form $s_1\times s_2v$ or $s_1v\times s_2$). 
Now propagate the reachable intervals according to the topological ordering of the cells. Note that in the above propagation procedure, for a given cell boundary edge we already were taking the union of reachable intervals propagated from adjacent cells, so the only difference is that now, if the edge was initially marked as reachable, then this union potentially contains one more interval.\footnote{In fact, in this case the union is the entire free space interval on the edge.}
Similarly, \cite{hr-fdre-14} considered a single target vertex in each \DAG complex, $t_1\in V(\complex_1)$ and $t_2\in V(\complex_2)$, whereas we will consider sets of target vertices, $T_1\subseteq V(\complex_1)$ and $T_2\subseteq V(\complex_2)$, and we wish to determine all reachable pairs $(t_1, t_2)$ such that $t_1\in T_1$ and $t_2\in T_2$. Note that as described above, we are already propagating the reachability information to the entire product complex, and thus this generalization to sets of target vertices does not require any modification to the algorithm. }
\RegVer{%
\dagcomplex
Thus in summary we have the following theorem.
}

\begin{theorem}\thmlab{dag}
Given two \DAG complexes $\complex_1$ and $\complex_2$, starting vertices $S_1\subseteq V(\complex_1)$ and $S_2\subseteq V(\complex_2)$, target vertices $T_1\subseteq V(\complex_1)$ and $T_2\subseteq V(\complex_2)$, and a value $\delta$, then in $O(|\complex_1||\complex_2|)$ time one can determine the set of all pairs $t_1\in T_1$ and $t_2\in T_2$, such that there are curves $\curveA_1$ and $\curveA_2$ such that 
%from $s$ to $t$ that is compliant with $\complex$ and such that $\distFr{\curveA}{\curveB} \leq \delta$. Denote this set of vertices as $\reach{\curveA}{\complex}$
    \RegVer{\begin{enumerate}[(a), leftmargin=1.5cm]}
    \SOCGVer{\begin{inparaenum}[(a)]}
    \item $\curveA_i$ is compliant with $\complex_i$ for $i=1,2$.
    \item $\curveA_i$ starts at some $s_i\in S_i$ and ends at $t_i$, for $i=1,2$.
%    \item The strong \Frechet distance between $\curveA_1$ and $\curveA_2$ is minimized among all such curves.
    \item $\distFr{\curveA_1}{\curveA_2} \leq \delta$.
    \RegVer{\end{enumerate}}
    \SOCGVer{\end{inparaenum}}
\end{theorem}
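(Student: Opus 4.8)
The plan is to build on \cite{hr-fdre-14}'s product-complex algorithm for \probref{dag} and show that the generalization to sets of start and target vertices costs essentially nothing. First I would form the product complex $\complex = \dcA \times \dcB$, whose cells are the Cartesian products of an ordered edge $uu' \in \Edges{\dcA}$ with an ordered edge $vv' \in \Edges{\dcB}$; each such cell has a well-defined left, right, bottom, and top boundary, and the right boundary of $uu' \times vv'$ is identified with the left boundary of $ww' \times vv'$ exactly when $w = u'$, and symmetrically in the second coordinate. As in the Alt--Godau analysis, the $\thresh$-free space inside any single cell is the intersection of the cell with an affine image of a disk, hence convex, so all reachability information can be kept on the one-dimensional cell boundaries.

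Next I would propagate reachability in topological order. The topological orders of $\dcA$ and $\dcB$ induce one on the cells of $\complex$: every cell $ww' \times vv'$ with $w' = u$ that shares the left boundary $u \times vv'$ of a cell $uu' \times vv'$ precedes $uu' \times vv'$, so by the time we process $uu' \times vv'$ the reachable intervals on its left and bottom boundaries are already known, and we push them to the right and top boundaries using the standard convexity rule. The point that needs care is the many-to-one gluing: a boundary edge may be shared by several cells, so its reachable set is the union of the intervals pushed into it from all predecessor cells; this union is again a single interval because every non-empty reachable interval on a boundary edge has the same top endpoint as the free-space interval there, so the union is determined by the smallest lower endpoint and can be computed in time linear in the number of incident cells.

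To handle start sets $S_1, S_2$, I would simply seed: for every $s_1 \in S_1$ and $s_2 \in S_2$ with $||s_1 - s_2|| \le \thresh$, mark the whole free-space interval on each boundary edge of the form $s_1 \times s_2 v$ or $s_1 v \times s_2$ reachable before propagation begins; since propagation already unions intervals along each edge, one extra (full) interval changes nothing structurally. Target sets $T_1, T_2$ require no change at all, since the algorithm already computes reachability on the entire product complex; after propagation we report, for each $t_1 \in T_1, t_2 \in T_2$, whether $(t_1, t_2)$ lies in a reachable interval, which by the Alt--Godau correspondence is equivalent to the existence of compliant curves $\curveA_1, \curveA_2$ from some $s_i \in S_i$ to $t_i$ with $\distFr{\curveA_1}{\curveA_2} \le \thresh$. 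For the time bound, each cell is touched $O(1)$ times and the per-edge merge costs sum, over all shared boundary edges, to $O(|\dcA| \cdot |\dcB|)$; the seeding touches $O(|S_1| \cdot |\dcB| + |S_2| \cdot |\dcA|) = O(|\dcA| \cdot |\dcB|)$ edges and the final readout is $O(|T_1| \cdot |T_2|)$, all within the claimed $O(|\complex_1| |\complex_2|)$. The main obstacle is precisely verifying that the single-interval structure of the reachable set on each boundary edge survives the many-to-one identification of edges, since that is what keeps the union cheap and the overall running time linear.
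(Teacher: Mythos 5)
Your proposal matches the paper's proof essentially step for step: the same product complex with identified boundary edges, the same convexity-based propagation in the topological order induced by the two DAGs, the same observation that every non-empty reachable interval on a shared boundary edge has the free-space top endpoint so the union over incident cells is a single interval, the same seeding for start sets, and the same remark that target sets need no modification. It is correct and takes the same approach as the paper.
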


%For the \Frechet distance between curves, one then propagates the reachable space in linear time in the number of cells in the product space. \cite{hr-fdre-14} observe this same linear time propagation works in the product complex of $\dcA$ and $\dcB$, since there is still a topological ordering over the cells. Thus \cite{hr-fdre-14} solve \probref{dag} in $O(|\dcA| |\dcB|)$ time, where $|\complex|$ denotes the number of edges in $\complex$. As \cite{hr-fdre-14}  compute the reachable intervals on the boundaries of all cells in the product space,  the following statement is thus implied by the algorithm from \cite{hr-fdre-14}.

%\begin{lemma}\lemlab{dag}
%Given a curve $\curveA$, a \DAG complex $\complex$ with a designated start vertex $s\in \Vertices{\complex}$, and a value $\delta$, then in $O(|\curveA||\complex|)$ time one can determine the set of all vertices $t\in \Vertices{\complex}$, such that there is curve $\curveB$ from $s$ to $t$ that is compliant with $\complex$ and such that $\distFr{\curveA}{\curveB} \leq \delta$. Denote this set of vertices as $\reach{\curveA}{\complex}$
%\end{lemma}

%Unless otherwise stated, throughout the paper $\delta>0$ will denote the given threshold value and the input curves will be denoted $\curveA=<\pntA_1,\ldots,\pntA_m>$ and $\curveB=<\pntB_1,\ldots,\pntB_n>$, where the vertices of each curve are point in $\Re^d$ and $\curveB$ is the one being edited. 

%%%%%%%%%%%%%%%%%%%%%%%%%%%%%%%%%%%%%%%%%%%%%%
%%%%%%%%%%%%%%%%%%%%%%%%%%%%%%%%%%%%%%%%%%%%%%

\section{Continuous \Frechet Distance}

%In this section
We give algorithms to compute $\dedistFr{\curveA}{\curveB}$,  $\iedistFr{\curveA}{\curveB}$, and $\edistFr{\curveA}{\curveB}$.
The high level approach in each case is to convert $\curveA$ and $\curveB$ into \DAG complexes and apply \thmref{dag}.
%At a high level, the approach for all these cases is straightforward, namely convert $\curveA$ and $\curveB$ into appropriate \DAG complexes and then apply \thmref{dag}. 
%We start with $\dedistFr{\curveA}{\curveB}$, which is the simplest case. 

Recall that in the \Frechet edit distance problems, we are only editing $\curveB$, not $\curveA$. As remarked above, $\curveA$ is itself a \DAG complex, and using this complex directly represents that $\curveA$ is not modified. Thus in the following the task is to model edits to $\curveB$ with an appropriate \DAG complex. (For $\dedistFr{\curveA}{\curveB}$ we will remark that creating such \DAG complexes for both $\curveA$ and $\curveB$ allows modelling the problem where deletion is allowed on either curve.) 

\subsection{Deletion Only}\seclab{cdo}
Given a curve $\curveB=\seq{\curveB_1,\ldots,\curveB_n}$, consider the \DAG complex produced by adding all possible forward edges to $\curveB$, namely all directed edges $\curveB_i\curveB_j$ for all $1\leq i<j\leq n$. We will refer to this as the \emph{complete \DAG complex} induced by $\curveB$.
Observe that any curve that is compliant with the complete \DAG complex is defined precisely by the subsequence of vertices from $\curveB$ it contains. Thus the set of curves that are compliant with the complete \DAG complex is in one to one correspondence with the set of subsequences of $\curveB$. Conversely, any curve obtained by deleting a subset of vertices from $\curveB$, is defined by the subsequence of $\curveB$ that remains. Thus one concludes that the set of all curves that are compliant with the complete \DAG complex of $\curveB$ are in one to one correspondence with the set of curves obtainable from $\curveB$ by deletions.

The above tells us that the complete \DAG complex encodes all possible curves produced by deletion, however, it needs to be further modified to also encode the cost of these deletions. To account for this cost we make $k$ additional copies of $\curveB$, where $k$ is some bound on the number of allowed deletions (which may be as large as $n$). Intuitively, the copy number of a given vertex encodes the number of deletions made so far.
So let $\curveB^\ell=\seq{\curveB_1^\ell,\ldots,\curveB_n^\ell}$ denote the $\ell$th copy.
Then to construct the \DAG complex, for all $0\leq \ell\leq k$ and all $i<j$ such that $\ell+(j-(i+1)) \leq k$, we add the directed edge $\curveB_i^{\ell}\curveB_j^{\ell+(j-(i+1))}$. Such edges are added since if we wish to delete all vertices between $\curveB_i$ and $\curveB_j$ (and hence use the edge $\curveB_i\curveB_j$) then we pay for these $(j-(i+1))$ deletions by advancing from the copy $\ell$ to copy $\ell+(j-(i+1))$ of $\curveB$. Call the resulting complex the \emph{complete weighted \DAG complex} of $\curveB$.

Now given $\curveA=\seq{\curveA_1,\ldots,\curveA_m}$ and $\curveB=\seq{\curveB_1,\ldots,\curveB_n}$, our goal is to decide if $\dedistFr{\curveA}{\curveB}\leq k$. As discussed above, the directed edges of $\curveA$ immediately define a \DAG complex, and thus we refer to this complex simply as $\curveA$.
On the other hand, for $\curveB$ we construct the complete weighted \DAG complex for $\curveB$, denoted $\complex_\curveB$.
Now for $\curveA$ we must start at $\curveA_1$ and end at $\curveA_m$, however, for $\curveB$ the optimal solution may delete some prefix of vertices $\curveB_1,\ldots,\curveB_{i}$, which would correspond to starting at vertex $\curveB_{i+1}^{i}$ in $\complex_\curveB$. Thus the set $S_\curveB$ of starting vertices consists of all vertices $\curveB_{i+1}^{i}$. Similarly, the optimal solution may delete some suffix of vertices from $\curveB$. To handle this case, however, we simply consider all possible ending vertices, namely $T_\curveB=V(\complex_\curveB)$.   
Then we call \thmref{dag}, which in $O(k^2mn)$ time (since $|\curveA|=O(m)$ and $|\complex_\curveB|=O(k^2n)$) computes the set of all pairs in $\curveA_m\times V(\complex_\curveB)$ such that there are compliant paths from allowable starting vertices whose \Frechet distance is $\leq \delta$. If no such pair exists then $\dedistFr{\curveA}{\curveB}> k$. Otherwise, let $(\curveA_m,\curveB_i^{\alpha})$ be one of the computed ending pairs. Then reaching this pair corresponds to deleting $\alpha$ vertices before $\curveB_i$, plus deleting all $n-i$ vertices after $\curveB_i$. Thus for each such pair $(\curveA_m,\curveB_i^{\alpha})$ we check if $\alpha+(n-i)\leq k$, and if this holds for some pair then $\dedistFr{\curveA}{\curveB}\leq k$, and otherwise $\dedistFr{\curveA}{\curveB}> k$.

%\jp{We now add three more cases before our summarizing theorem.} \sout{We thus have the following summarizing theorem. Note that}
Before stating our summarizing theorem, we observe several easy extensions. 
First, if deletions are allowed on both curves, then the same procedure works where instead of using $\curveA$ as one of the \DAG complexes, we use the complete weighted \DAG complex $\complex_\curveA$, yielding $O(k^4mn)$ time in total. 
Alternatively, again only allow deletions on $\curveB$, but consider the problem of  computing $\dedistFr{\curveA}{\curveB}$, rather than determine if $\dedistFr{\curveA}{\curveB}\leq k$ for some $k$. In this case, the same procedure works by setting $k=n$ (as one cannot delete more vertices than the curve contains), and then finding the pair $(\curveA_m,\curveB_i^{\alpha})$ of allowable end vertices minimizing $\alpha+(n-i)$, resulting in an $O(mn^3)$ running time. 
Finally, applying this same idea to computing $\dedistFr{\curveA}{\curveB}$ when deletions are allowed on both curves gives an $O(m^3n^3)$ time, as there can be at most $n$ deletions on $\curveB$ and at most $m$ on $\curveA$.

\begin{theorem}
Given curves $\curveA=\seq{\curveA_1,\ldots,\curveA_m}$ and $\curveB=\seq{\curveB_1,\ldots,\curveB_n}$, a threshold $\thresh$, and an integer parameter $k>0$, in $O(k^2mn)$ time one can determine if $\dedistFr{\curveA}{\curveB}\leq k$.

If deletions are allowed on both $\curveA$ and $\curveB$, then in $O(k^4mn)$ time one can determine if $\dedistFr{\curveA}{\curveB}\leq k$. 
Finally, one can compute $\dedistFr{\curveA}{\curveB}$ in $O(mn^3)$ time if deletions are only allowed on $\curveB$, and in $O(m^3n^3)$ time if deletions are allowed on both curves.
\end{theorem}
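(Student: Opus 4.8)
The plan is to reduce the decision problem to a single instance of \probref{dag} and invoke \thmref{dag}, following the construction sketched above, and then to verify three things: the combinatorial correspondence, the cost bookkeeping, and the running time.

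First I would pin down the combinatorial correspondence. In the complete \DAG complex of $\curveB$ (all forward edges $\curveB_i\curveB_j$ with $i<j$), a compliant curve is by definition given by an increasing index subsequence of $\seq{1,\dots,n}$, and conversely every such subsequence yields a compliant curve; at the same time the curves obtainable from $\curveB$ by deletions are exactly those determined by such subsequences, so compliant curves biject with deletion results. Next I would argue that layering with copies $0,\dots,k$ tracks the deletion count exactly. Any compliant path in $\complex_\curveB$ uses only edges of the form $\curveB_a^{\ell}\curveB_b^{\ell+(b-a-1)}$, so along the path the superscript increases by precisely the number of vertices deleted between consecutive retained vertices; starting at $\curveB_{i+1}^{i}\in S_\curveB$ charges for deleting the prefix $\curveB_1,\dots,\curveB_i$; and for an end vertex $\curveB_j^{\beta}$ the deletions of the suffix $\curveB_{j+1},\dots,\curveB_n$ are charged by the final test $\beta+(n-j)\le k$. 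Putting these together: there is a compliant path from some $s\in S_\curveB$ to some $t=\curveB_j^{\beta}$ with $\beta+(n-j)\le k$, whose associated curve $\curveB'$ satisfies $\distFr{\curveA}{\curveB'}\le\thresh$ (witnessed in the product complex by a reachable pair $(\curveA_m,\curveB_j^{\beta})$), if and only if $\dedistFr{\curveA}{\curveB}\le k$. Since \thmref{dag} already supports sets of start and target vertices, setting $S_\curveA=\{\curveA_1\}$, $T_\curveA=\{\curveA_m\}$, $T_\curveB=\Vertices{\complex_\curveB}$ and scanning the returned reachable pairs for one passing the test completes the decision procedure.

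For the running time, $\Vertices{\complex_\curveB}$ has size $O(kn)$, and an edge $\curveB_i^{\ell}\curveB_j^{\ell+(j-i-1)}$ is present only when $j-i-1\le k$, so $j$ ranges over $O(k)$ values for each $i$ and $\ell$ over $O(k)$ values for each admissible $(i,j)$, giving $|\complex_\curveB|=O(k^2n)$; with $|\curveA|=O(m)$, \thmref{dag} runs in $O(k^2mn)$ time, and the final scan over $T_\curveB$ is $O(k^2n)$. The extensions follow with no new ideas: replacing $\curveA$ by its complete weighted \DAG complex $\complex_\curveA$ (of size $O(k^2m)$) models deletions on both curves, with the same per-coordinate argument and the end test $\alpha+(m-i)+\beta+(n-j)\le k$, for $O(k^4mn)$ total. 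To compute the exact value instead of deciding, I would take $k=n$ (no more than $n$ deletions are possible), so $|\complex_\curveB|=O(n^3)$, and return $\min\{\beta+(n-j)\}$ over reachable pairs $(\curveA_m,\curveB_j^{\beta})$, for $O(mn^3)$; doing the same with both complexes layered ($O(n^3)$ and $O(m^3)$, respectively, and minimizing $\alpha+(m-i)+\beta+(n-j)$) gives $O(m^3n^3)$.

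The main obstacle I anticipate is entirely bookkeeping rather than conceptual: making the off-by-one count $j-(i+1)$ consistent across the edge definition, the prefix start vertices $\curveB_{i+1}^{i}$, and the suffix test, and confirming that \thmref{dag} applies verbatim to the layered product complex, in which a single cell-boundary edge may be shared by many cells — but its statement and the union-of-reachable-intervals propagation it relies on already accommodate exactly that. Everything else (convexity of the per-cell free space, topological-order propagation of reachability) is inherited directly from \thmref{dag}.
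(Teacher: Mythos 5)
Your proposal is correct and follows essentially the same route as the paper: the complete weighted \DAG complex with $k+1$ layers, edges $\curveB_i^{\ell}\curveB_j^{\ell+(j-(i+1))}$, start set $\{\curveB_{i+1}^{i}\}$, target set $\Vertices{\complex_\curveB}$ with the final check $\beta+(n-j)\le k$, and an invocation of \thmref{dag} giving $O(k^2mn)$, with the stated extensions for two-sided deletion and for exact computation via $k=n$ (resp.\ $k=m$). The size accounting $|\complex_\curveB|=O(k^2n)$ and the correspondence between compliant curves and deletion results match the paper's argument.
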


\SOCGVer{A slight variation of the above algorithm can be used to solve the vertex restricted shortcut \Frechet distance problem as described in \cite{df-jdcfds-13,bds-cfdsn-14}, improving the result in \cite{bds-cfdsn-14} for \(\Re^2\) by a \(\log n\) factor while also extending it to \(\Re^d\).
Details appear in the full version.}
\newcommand{\shortcutdetails}{
We now describe how the algorithm for continuous strong \Frechet distance with deletions only can be modified to handle the vertex restricted shortcut problem considered in \cite{df-jdcfds-13, bds-cfdsn-14}.  In this problem, $\curveA$ is fixed, and on $\curveB$ you are allowed to shortcut (i.e.\ take the line segment) directly from $\curveB_i$ to $\curveB_j$ for any $i<j$. Here you are allowed to shortcut as often as you like and for free and the question is whether you can get the \Frechet distance between the resulting curves to be $\leq \thresh$. This effectively is the zero cost version of our deletion only problem, although deleting $\curveB_1$ and $\curveB_n$ is not allowed. As described above the zero cost case is modeled by the (unweighted) complete \DAG complex induced by $\curveB$, i.e.\ we add all forward edges but we do not need to create multiple copies of $\curveB$. Moreover, to model that deleting $\curveB_1$ and $\curveB_n$ is not allowed we simply set our starting set $S_\curveB=\{\curveB_1\}$ and our ending set $T_\curveB=\{\curveB_n\}$.
}%
\RegVer{\shortcutdetails
We thus have the following corollary, which is faster than the result in \cite{bds-cfdsn-14} by a $\log n$ factor, and works for curves in $\Re^d$, whereas the result in \cite{bds-cfdsn-14} is restricted to curves in $\Re^2$.
}

\begin{corollary}
\label{cor:shortcuts}
Given a threshold $\thresh$, a fixed curve $\curveA=\seq{\curveA_1,\ldots,\curveA_m}$, and  a curve $\curveB=\seq{\curveB_1,\ldots,\curveB_n}$ which allows shortcuts, then in $O(mn^2)$ time one can determine if the vertex restricted shortcut \Frechet distance is $\leq \delta$. 
\end{corollary}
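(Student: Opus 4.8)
The plan is to reduce the vertex-restricted shortcut problem to a zero-cost instance of the deletion-only construction from \secref{cdo} and then invoke \thmref{dag}. Shortcutting from $\curveB_i$ to $\curveB_j$ for $i<j$ is exactly the deletion of the intermediate vertices $\curveB_{i+1},\ldots,\curveB_{j-1}$, and since shortcuts are free and unbounded in number there is no need to track how many vertices have been removed. So instead of the complete weighted \DAG complex it suffices to use the (unweighted) complete \DAG complex induced by $\curveB$, obtained by adding every forward edge $\curveB_i\curveB_j$ with $i<j$. As observed in \secref{cdo}, curves compliant with this complex are in bijection with subsequences of $\curveB$, which is precisely the set of curves reachable from $\curveB$ by a sequence of shortcuts.

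Next I would encode, purely through the start and target sets handed to \thmref{dag}, the restriction that $\curveB_1$ and $\curveB_n$ cannot be shortcut away: set $S_\curveB=\{\curveB_1\}$ and $T_\curveB=\{\curveB_n\}$, and, since $\curveA$ is never edited, use $\curveA$ itself as the second \DAG complex with $S_\curveA=\{\curveA_1\}$ and $T_\curveA=\{\curveA_m\}$. Invoking \thmref{dag} then decides whether there is a curve $\curveB'$ compliant with the complete \DAG complex of $\curveB$, starting at $\curveB_1$ and ending at $\curveB_n$, with $\distFr{\curveA}{\curveB'}\leq\thresh$; by the bijection above this holds if and only if some shortcutting of $\curveB$ that keeps both endpoints achieves \Frechet distance at most $\thresh$, which is exactly the decision we want. (A shortcut skipping several contiguous vertices is still a single compliant edge, matching the definition in \cite{df-jdcfds-13,bds-cfdsn-14}.) For the running time, the second complex is just $\curveA$, so it has $O(m)$ edges, while the complete \DAG complex of $\curveB$ has $O(n^2)$ edges; by \thmref{dag} the total time is $O(m n^2)$.

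I do not expect a genuine obstacle here; the only point that needs care is pinning down the semantics of ``vertex-restricted shortcut \Frechet distance'' so that it coincides with the set of compliant curves of the complete \DAG complex -- in particular that the curve is shortcut before being traversed (which is automatic, as a \DAG complex is acyclic, so one can never return to a skipped vertex) and that the endpoints are preserved. Once this is fixed, correctness is immediate from the bijection and from \thmref{dag}, and the claimed time bound follows by direct substitution into the bound of \thmref{dag}.
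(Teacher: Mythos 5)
Your proposal is correct and follows essentially the same route as the paper: the paper also models the shortcut problem as the zero-cost deletion case via the unweighted complete \DAG complex on $\curveB$ (all forward edges, no copies), forces the endpoints to survive by setting $S_\curveB=\{\curveB_1\}$ and $T_\curveB=\{\curveB_n\}$, and obtains $O(mn^2)$ directly from \thmref{dag}. Nothing is missing.
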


\newcommand{\mv}[1]{\mathsf{mv}_\thresh(#1)}
\newcommand{\enter}[1]{\mathsf{enter}_\thresh(#1)}
\newcommand{\leave}[1]{\mathsf{leave}_\thresh(#1)}
\newcommand{\clip}[1]{\mathsf{clip}_\thresh(#1)}
\newcommand{\canon}{\mathsf{CS}}
\newcommand{\ecanon}{\mathsf{ECS}}
\RegVer{
\subsection{Insertion Only}}
\SOCGVer{\subsection{Insertions}}
\seclab{cio}

\SOCGVer{
Applying the above approach for insertions only or both deletions and insertions is considerably more difficult.
We sketch the main argument here, and give the full details in the full version.
%Here sketch the main points about how to extend  the deletions only algorithm to work for insertions only or both deletions and insertions. (The full details are in the full version.)
For this section, we assume both \(\curveA\) and \(\curveB\) are in \(\Re^2\), since we use the results in $\Re^2$ from \cite{ghms-apsmlp-93}.}
\RegVer{
Given curves $\curveA=\seq{\curveA_1,\ldots,\curveA_m}$ and $\curveB=\seq{\curveB_1,\ldots,\curveB_n}$ and a  threshold $\thresh$, our goal in this section is to compute $\iedistFr{\curveA}{\curveB}$. Note that insertions are only allowed on $\curveB$, and in this section we will now  assume that both $\curveA$ and $\curveB$ are in $\Re^2$.}
For simplicity, we will first assume there are no insertions before $\curveB_1$ nor after $\curveB_n$.

Observe that if it is beneficial to insert a subcurve between two consecutive vertices of $\curveB$, then this subcurve should be a minimum vertex curve with \Frechet distance $\thresh$ to some portion of $\curveA$. 
\RegVer{Thus before giving our algorithm, we give background regarding such minimum vertex curves.}
Unfortunately, the portion of $\curveA$ that we are matching to may not begin and end on vertices of $\curveA$.
Regardless, it suffices to consider a bounded number of canonical starting and ending location pairs.
%Before giving our algorithm, we give background regarding computing minimum vertex curves within \Frechet distance $\thresh$ to a given curve $\curveA$, as well prove useful lemmas about such curves.

\SOCGVer{
\begin{definition}\deflab{mv2}
Given a curve $\curveA=\seq{\curveA_1,\ldots,\curveA_m}$, a value $\thresh$, 
and points $s$ and $t$ such that $||s-\curveA_1||\leq \thresh$ and $||t-\curveA_m||\leq \thresh$, 
%a point $s\in B(\curveA_1,\thresh)$, and a point $t\in B(\curveA_m,\thresh)$, 
let $\mv{s,t,\curveA}$ denote the curve $\curveB=\seq{\curveB_1,\ldots,\curveB_n}$ with the minimum number of vertices such that 
$\distFr{\curveA}{s\circ\curveB\circ t}\leq \delta$.
%$\curveB_1=s$, $\curveB_n=t$ and $\distFr{\curveA}{\curveB}\leq \delta$. 

For an ordered segment $q_1q_2$ and a point p such that $B(p,\thresh)\cap q_1q_2\neq \emptyset$, let $\enter{p,q_1q_2}$ denote the point in $B(p,\thresh)\cap q_1q_2$ closest to $q_1$, and similarly let $\leave{p,q_1q_2}$ denote the point in $B(p,\thresh)\cap q_1q_2$ closest to $q_2$.
Finally, given a curve $\curveA=\seq{\curveA_1,\ldots,\curveA_m}$ where $m>2$, and points $s$ and $t$ such that $||s-\curveA_1\curveA_2||\leq \thresh$ and $||t-\curveA_{m-1}\curveA_m||\leq \thresh$, define $\clip{s,t,\curveA}=\seq{\leave{s,\curveA_1\curveA_2},\curveA_2,\ldots,\curveA_{m-1},\enter{t,\curveA_{m-1}\curveA_m}}$
\end{definition}
Let $\mv{\curveA}$ be the analogue of $\mv{s,t,\curveA}$ from \defref{mv2}, except where we require $\distFr{\curveA}{\curveB}\leq \delta$ instead of $\distFr{\curveA}{s\circ\curveB\circ t}\leq \delta$, i.e.\ the starting points $s$ and $t$ are not specified. \cite{ghms-apsmlp-93} compute $\mv{\curveA}$ in their Theorem 14. The full version reduces $\mv{s,t,\curveA}$ to $\mv{\curveA}$.

\begin{theorem}[\cite{ghms-apsmlp-93}]%\thmlab{stabtime}
Given $\curveA=\seq{\curveA_1,\ldots,\curveA_m}$, a value $\thresh$, 
%a point $s\in B(\curveA_1,\thresh)$, and a point $t\in B(\curveA_m,\thresh)$, 
and points $s$ and $t$ such that $||s-\curveA_1||\leq \thresh$ and $||t-\curveA_m||\leq \thresh$,
then $\mv{s,t,\curveA}$ can be computed in $O(m^2\log^2 m)$ time.
\end{theorem}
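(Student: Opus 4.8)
The plan is to reduce the computation of $\mv{s,t,\curveA}$ to the endpoint-free quantity $\mv{\curveA}$, for which \cite[Theorem~14]{ghms-apsmlp-93} gives an $O(m^2\log^2 m)$-time algorithm, while only changing the number of vertices by an additive constant. The governing observation is that in any reparametrization witnessing $\distFr{\curveA}{s\circ\curveB\circ t}\le\delta$, the first edge of $s\circ\curveB\circ t$ (the one leaving $s$) can be made to cover the entire maximal prefix of $\curveA$ lying in $B(s,\delta)$, and symmetrically its last edge (the one entering $t$) can absorb the maximal suffix of $\curveA$ lying in $B(t,\delta)$. I would first run an $O(m)$-time pass that walks forward from $\curveA_1$ to identify this maximal prefix and backward from $\curveA_m$ for the suffix; if the prefix already covers all of $\curveA$ then $\mv{s,t,\curveA}=0$, and otherwise those portions are clipped off. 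Under the hypothesis $\|s-\curveA_1\curveA_2\|\le\delta$ (and its mirror for $t$), to which this pass reduces, the clipped curve is exactly $\clip{s,t,\curveA}=\seq{\leave{s,\curveA_1\curveA_2},\curveA_2,\ldots,\curveA_{m-1},\enter{t,\curveA_{m-1}\curveA_m}}$, while the case $m\le 2$ is handled directly. The reduction then rests on the identity $\mv{s,t,\curveA}=\mv{\clip{s,t,\curveA}}-2$.

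For the easy inequality $\mv{\clip{s,t,\curveA}}\le\mv{s,t,\curveA}+2$, take an optimal $\curveB$ together with a width-$\le\delta$ reparametrization of $\curveA$ against $s\circ\curveB\circ t$, and restrict that reparametrization to the clipped sub-curve $\curveA[\gamma_s,\gamma_t]=\clip{s,t,\curveA}$, where $\curveA(\gamma_s)=\leave{s,\curveA_1\curveA_2}$ and $\curveA(\gamma_t)=\enter{t,\curveA_{m-1}\curveA_m}$; the portion of $s\circ\curveB\circ t$ that it matches is a polygonal curve with at most $|\curveB|+2$ vertices and \Frechet distance at most $\delta$ from $\clip{s,t,\curveA}$. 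For the reverse inequality I would take an optimal curve $\curveD$ for $\mv{\clip{s,t,\curveA}}$ and argue it may be chosen with its first and last vertices equal to $s$ and $t$; writing $\curveD=s\circ\curveB\circ t$ for its interior vertices, we get $|\curveB|=\mv{\clip{s,t,\curveA}}-2$, and since $\curveA=\curveA[1,\gamma_s]\circ\clip{s,t,\curveA}\circ\curveA[\gamma_t,m]$ with $\curveA[1,\gamma_s]\subseteq B(s,\delta)$ and $\curveA[\gamma_t,m]\subseteq B(t,\delta)$, padding the reparametrization so that $s\circ\curveB\circ t$ rests at $s$ while $\curveA$ traverses its clipped-off prefix, and at $t$ while $\curveA$ traverses its suffix, yields $\distFr{\curveA}{s\circ\curveB\circ t}\le\delta$.

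The main obstacle is the \emph{anchoring claim} in the reverse direction: that an optimal minimum-vertex curve for $\clip{s,t,\curveA}$ can be chosen with its extremal vertices pinned at $s$ and $t$. Note $s\in B(\leave{s,\curveA_1\curveA_2},\delta)$, so it is a legal location for a first vertex, but one cannot merely translate the first vertex of a given optimal curve to $s$, since its first edge could then fail to monotonically cover its stretch of $\clip{s,t,\curveA}$. Instead I would rebuild the first (respectively last) edge greedily, showing that among all segments emanating from a point of $B(\leave{s,\curveA_1\curveA_2},\delta)$, one starting at $s$ can be taken to reach at least as far along $\clip{s,t,\curveA}$ as the first edge of any optimal curve, and symmetrically at the tail, so that a standard exchange argument produces an optimal curve anchored at $s$ and $t$. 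This is exactly where the geometry of $\clip{s,t,\curveA}$ enters, in particular that $\clip{s,t,\curveA}$ exits $B(s,\delta)$ immediately at its first vertex and meets $B(t,\delta)$ only at its last; I expect this lemma to need the most care. Everything else is bookkeeping plus a single invocation of \cite[Theorem~14]{ghms-apsmlp-93} on the $O(m)$-vertex curve $\clip{s,t,\curveA}$, which runs in $O(m^2\log^2 m)$ time, matching the stated bound.
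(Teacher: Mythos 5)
Your reduction stands or falls on the identity $\mv{s,t,\curveA}=\mv{\clip{s,t,\curveA}}-2$ (reading $\mv{\cdot}$ as a vertex count), and hence on the anchoring claim, and that claim is false. What is actually provable is only the two-sided bound $\mv{\clip{s,t,\curveA}}-2\le\mv{s,t,\curveA}\le\mv{\clip{s,t,\curveA}}$: the upper bound holds because prepending $s$ and appending $t$ to an unconstrained optimum $\curveD$ for the clipped curve preserves \Frechet distance at most $\thresh$ (the segment from $s$ to $\curveD_1$ lies in $\ball(\leave{s,\curveA_1\curveA_2},\thresh)$ by convexity of balls), but the slack of $2$ cannot in general be recovered. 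Concretely, take $\thresh=1$, $s=(0,0)$, $t=(0.2,0)$, and let $\curveA$ be a narrow zigzag with $\curveA_1=(0,1)$, $\curveA_m=(0.2,1)$, oscillating between $y=1$ and $y=2.9$ with $x\in[0,0.2]$. Then $\clip{s,t,\curveA}$ is essentially $\curveA$ itself and fits inside the single unit ball $\ball((0.1,1.95),1)$, so $\mv{\clip{s,t,\curveA}}\le 2$ and your formula reports $0$ edits; but $\distFr{\curveA}{\seq{s,t}}\ge 2.9>\thresh$, and in fact $\mv{s,t,\curveA}=1$ (one interior vertex at $(0.1,1.95)$ is necessary and sufficient). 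The root cause is that the unconstrained optimum may place its first vertex anywhere in $\ball(\leave{s,\curveA_1\curveA_2},\thresh)$ — up to $2\thresh$ from $s$ — and the reach of a first link is genuinely a function of its starting point, so the exchange argument you flag as delicate does not go through; your geometric premise that $\clip{s,t,\curveA}$ exits $\ball(s,\thresh)$ at its first vertex also fails whenever $\leave{s,\curveA_1\curveA_2}=\curveA_2$.

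The paper resolves exactly this difficulty in the opposite direction: instead of clipping $\curveA$ and trying to anchor the output, it \emph{extends} $\curveA$ to $\curveA'=\seq{p_1,p_2,p_1,p_2,s}\circ\curveA\circ\seq{t,q_2,q_1,q_2,q_1}$, where $p_1,p_2$ are the two points at distance exactly $\thresh$ from $s$ on a line through $s$ (similarly $q_1,q_2$ for $t$). Since $s$ is the unique point of the perpendicular line through $s$ within $\thresh$ of both $p_1$ and $p_2$, matching the zigzag $p_1,p_2,p_1,p_2$ without passing through $s$ provably costs at least three vertices, which forces every unconstrained optimum for $\curveA'$ to pass through $s$ (and symmetrically $t$); an anchored optimum for $\curveA$ is then extracted from it. You need a forcing gadget of this kind, or a genuinely new argument for minimum-link stabbing with a prescribed starting point; the clipping identity alone does not give the theorem.
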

}

\newcommand{\continuousinsert}{
\subsubsection{Minimum Vertex Curves}

\begin{definition}\deflab{mv}
Given a curve $\curveA=\seq{\curveA_1,\ldots,\curveA_m}$, a value $\thresh$, 
and points $s$ and $t$ such that $||s-\curveA_1||\leq \thresh$ and $||t-\curveA_m||\leq \thresh$, 
%a point $s\in B(\curveA_1,\thresh)$, and a point $t\in B(\curveA_m,\thresh)$, 
let $\mv{s,t,\curveA}$ denote the curve $\curveB=\seq{\curveB_1,\ldots,\curveB_n}$ with the minimum number of vertices such that 
$\distFr{\curveA}{s\circ\curveB\circ t}\leq \delta$.
%$\curveB_1=s$, $\curveB_n=t$ and $\distFr{\curveA}{\curveB}\leq \delta$. 

For an ordered segment $q_1q_2$ and a point p such that $B(p,\thresh)\cap q_1q_2\neq \emptyset$, let $\enter{p,q_1q_2}$ denote the point in $B(p,\thresh)\cap q_1q_2$ closest to $q_1$, and similarly let $\leave{p,q_1q_2}$ denote the point in $B(p,\thresh)\cap q_1q_2$ closest to $q_2$.
Finally, given a curve $\curveA=\seq{\curveA_1,\ldots,\curveA_m}$ where $m>2$, and points $s$ and $t$ such that $||s-\curveA_1\curveA_2||\leq \thresh$ and $||t-\curveA_{m-1}\curveA_m||\leq \thresh$, define $\clip{s,t,\curveA}=\seq{\leave{s,\curveA_1\curveA_2},\curveA_2,\ldots,\curveA_{m-1},\enter{t,\curveA_{m-1}\curveA_m}}$
\end{definition}

We remark that in the above definition $\clip{s,t,\curveA}$ is well defined as we assumed $m>2$ and thus $\leave{s,\curveA_{1}\curveA_2}$ must come before $\enter{t,\curveA_{m-1}\curveA_m}$ on $\curveA$. Moreover, we will later make use of the following observation. 

\begin{observation}\obslab{size}
Given a curve $\curveA=\seq{\curveA_1,\ldots,\curveA_m}$, a value $\thresh$, and points $s$ and $t$ such that $||s-\curveA_1||\leq \thresh$ and $||t-\curveA_m||\leq \thresh$, then $|\mv{s,t,\curveA}|\leq m$, since $\distFr{\curveA}{s\circ\curveA\circ t}\leq \thresh$.
\end{observation}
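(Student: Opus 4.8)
The plan is to exhibit $\curveA$ itself as a feasible choice of inner curve, so that the defining minimality of $\mv{s,t,\curveA}$ forces $|\mv{s,t,\curveA}|\le|\curveA|=m$. By \defref{mv} (resp.\ \defref{mv2}), $\mv{s,t,\curveA}$ is a curve of minimum size among all curves $\curveB$ with $\distFr{\curveA}{s\circ\curveB\circ t}\le\thresh$; hence it suffices to verify the parenthetical claim $\distFr{\curveA}{s\circ\curveA\circ t}\le\thresh$, after which $\curveB=\curveA$ is admissible and the bound follows.

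To bound $\distFr{\curveA}{s\circ\curveA\circ t}$, I would simply write down an explicit pair of reparameterizations of width at most $\thresh$. Recall $s\circ\curveA\circ t=\seq{s,\curveA_1,\ldots,\curveA_m,t}$ has parameter domain $[1,m+2]$, with $s$ at parameter $1$, $\curveA_i$ at parameter $i+1$, and $t$ at parameter $m+2$. Partition $[0,1]$ into three consecutive subintervals, on each of which both reparameterizations are affine (so continuity and monotonicity across breakpoints are immediate). On the first subinterval, keep $\curveA$'s parameter fixed at $1$ (the point $\curveA_1$) while the other parameter sweeps from $1$ to $2$ along segment $s\curveA_1$: the leash connects a moving point of $s\curveA_1$ to the endpoint $\curveA_1$, so its length never exceeds $||s-\curveA_1||\le\thresh$. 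On the middle subinterval, advance both parameters in lockstep from $\curveA_1$ to $\curveA_m$, so the two matched points coincide and the leash has length $0$. On the last subinterval, keep $\curveA$'s parameter fixed at $m$ (the point $\curveA_m$) while the other sweeps from $m+1$ to $m+2$ along segment $\curveA_m t$: the leash length never exceeds $||\curveA_m-t||\le\thresh$. Taking the maximum over the three phases yields width $\le\thresh$, hence $\distFr{\curveA}{s\circ\curveA\circ t}\le\thresh$.

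There is essentially no obstacle here; the statement is an easy sanity check, and the only minor care needed is to glue the three affine reparameterization pieces into a single continuous non-decreasing bijection, which is routine. Combining the two steps, $\curveA$ is a feasible inner curve, so the minimum-size feasible curve $\mv{s,t,\curveA}$ satisfies $|\mv{s,t,\curveA}|\le m$, as claimed.
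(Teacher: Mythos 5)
Your proposal is correct and matches the paper's (implicit) argument: the paper justifies the observation solely by the clause ``since $\distFr{\curveA}{s\circ\curveA\circ t}\leq \thresh$,'' which is exactly the feasibility-of-$\curveA$-itself argument you spell out, with the explicit three-phase reparameterization being the routine verification the paper leaves unstated.
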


\cite{ghms-apsmlp-93} considered the problem of computing minimum link chains which stab an ordered sequence of disks. Under the right ordering and containment conditions, they argued an equivalence with \Frechet distance, thus yielding the following result. 
%(Technically, \cite{ghms-apsmlp-93} did not specify the starting and ending points, however, their approach can easily be modified to allow this as discussed in the full version.)
(Technically, Theorem 14 in \cite{ghms-apsmlp-93} does not specify starting and ending points, however, we show in \apndref{mvcreduction} there is a reduction from the problem of computing $\mv{s,t,\curveA}$  
to the same problem but where start and end vertices are not specified. Using \cite{ghms-apsmlp-93} for the latter dominates the reduction run time.)

\begin{theorem}[\cite{ghms-apsmlp-93}]\thmlab{stabtime}
Given $\curveA=\seq{\curveA_1,\ldots,\curveA_m}$, a value $\thresh$, 
%a point $s\in B(\curveA_1,\thresh)$, and a point $t\in B(\curveA_m,\thresh)$, 
and points $s$ and $t$ such that $||s-\curveA_1||\leq \thresh$ and $||t-\curveA_m||\leq \thresh$,
then $\mv{s,t,\curveA}$ can be computed in $O(m^2\log^2 m)$ time.
\end{theorem}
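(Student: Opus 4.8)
\medskip\noindent\textbf{Proof strategy.}
The plan is to reduce the computation of $\mv{s,t,\curveA}$ to the variant without prescribed endpoints. Write $\mv{\curveC}$ for the minimum-vertex curve $\curveB$ with $\distFr{\curveC}{\curveB}\le\thresh$ ($s$ and $t$ dropped); Theorem~14 of \cite{ghms-apsmlp-93} computes $\mv{\curveC}$ in $O(|\curveC|^2\log^2|\curveC|)$ time. So it suffices to construct, in $O(m)$ time, a curve $\curveA'$ with at most $m$ vertices such that $\mv{s,t,\curveA}=\mv{\curveA'}$, and then invoke their algorithm.

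\smallskip\noindent\emph{Preprocessing.}
First I would repeatedly delete the first vertex $\curveA_1$ as long as $\curveA$ has more than two vertices and $\distX{s}{\curveA_2}\le\thresh$ (so the removed edge lies inside $\ball(s,\thresh)$ by convexity), and symmetrically delete $\curveA_m$ while $\distX{t}{\curveA_{m-1}}\le\thresh$. An edge contained in $\ball(s,\thresh)$ can be absorbed by $s$ in any matching of $s\circ\curveB\circ t$ against $\curveA$ --- all of it maps to the first vertex $s$ --- and conversely a matching against the trimmed curve extends by keeping $\curveA$ parked at $s$ over that edge; so trimming changes neither $\mv{s,t,\curveA}$ nor its length (symmetrically for $t$). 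If trimming leaves at most two vertices, then $\mv{s,t,\curveA}$ is the empty curve, since the single segment $\overline{st}$ already has \Frechet distance at most $\thresh$ to the short curve $\curveA$ (using $\distX{s}{\curveA_1}\le\thresh$ and $\distX{t}{\curveA_m}\le\thresh$). Otherwise we may assume $m>2$, $\curveA_2\notin\ball(s,\thresh)$, and $\curveA_{m-1}\notin\ball(t,\thresh)$, so that $\curveA$ exits $\ball(s,\thresh)$ at a well-defined interior point $\leave{s,\curveA_1\curveA_2}$ of its first edge and first enters $\ball(t,\thresh)$ at an interior point $\enter{t,\curveA_{m-1}\curveA_m}$ of its last edge.

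\smallskip\noindent\emph{Clipping and the main equivalence.}
Now set $\curveA'=\clip{s,t,\curveA}=\seq{\leave{s,\curveA_1\curveA_2},\curveA_2,\ldots,\curveA_{m-1},\enter{t,\curveA_{m-1}\curveA_m}}$, i.e.\ $\curveA$ with its first edge shortened to begin where it leaves $\ball(s,\thresh)$ and its last edge shortened to end where it first enters $\ball(t,\thresh)$. I claim $\mv{s,t,\curveA}=\mv{\curveA'}$. For $|\mv{s,t,\curveA}|\le|\mv{\curveA'}|$: given $\curveB$ with $\distFr{\curveA'}{\curveB}\le\thresh$, prepend to its optimal matching a phase in which $s$ sweeps the clipped-off prefix of $\curveA$'s first edge (all inside $\ball(s,\thresh)$) while $\curveA'$ waits at its first vertex, append the symmetric phase for $t$, and observe that the joining segments $\overline{s\curveB_1}$ and $\overline{\curveB_n t}$ stay within $\thresh$ of $\leave{s,\curveA_1\curveA_2}$ and $\enter{t,\curveA_{m-1}\curveA_m}$ respectively, by convexity; this certifies $\distFr{\curveA}{s\circ\curveB\circ t}\le\thresh$. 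For $|\mv{\curveA'}|\le|\mv{s,t,\curveA}|$: take an optimal $\curveB=\mv{s,t,\curveA}$ with an optimal matching $M$, and argue that, since $\curveB$ is minimum, $M$ may be taken so that $\leave{s,\curveA_1\curveA_2}$ is matched exactly to the vertex $\curveB_1$ and $\enter{t,\curveA_{m-1}\curveA_m}$ exactly to $\curveB_n$; indeed, if $\leave{s,\curveA_1\curveA_2}$ were matched past $\curveB_1$ then $\curveB_1$ would be matched entirely inside $\ball(s,\thresh)$ and could be deleted (rerouting through $s$), contradicting minimality, and its being matched to the interior of $\overline{s\curveB_1}$ is likewise eliminated by a local adjustment of $M$. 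Restricting $M$ to the portion of $\curveA$ between those two points --- which is exactly $\curveA'$ --- then yields a matching of $\curveA'$ against all of $\curveB$, so $\distFr{\curveA'}{\curveB}\le\thresh$. Since the two directions also identify the optimal curves themselves, running the algorithm of \cite{ghms-apsmlp-93} on $\curveA'$ (which has at most $m$ vertices) returns $\mv{\curveA'}=\mv{s,t,\curveA}$ in $O(m^2\log^2 m)$ time; the $O(m)$ preprocessing then gives the claimed bound.

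\smallskip
I expect the ``$\ge$'' direction of the equivalence to be the main obstacle: showing that a minimum-vertex curve never wastes a vertex inside $\ball(s,\thresh)$ or $\ball(t,\thresh)$, and that the segments incident to $s$ and $t$ behave exactly as stated, requires patient bookkeeping with the monotonicity of the \Frechet matching and a small, possibly iterated, surgery on it; one must also double-check the correctness of the trimming step and of the short-curve base case. The ``$\le$'' direction, the bound $|\curveA'|\le m$, and the running time analysis are routine.
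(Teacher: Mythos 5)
Your reduction is genuinely different from the paper's, and it has a real gap in the direction you yourself flagged. The paper does not claim, and it is not true, that $\mv{s,t,\curveA}=\mv{\curveA'}$ for $\curveA'=\clip{s,t,\curveA}$: the two problems count vertices differently. In the constrained problem the endpoints $s$ and $t$ are supplied for free and $|\mv{s,t,\curveA}|$ counts only the inserted middle vertices, whereas the unconstrained problem must pay for whatever endpoints it chooses. Concretely, take $\thresh=1$, $\curveA=\seq{(0,0),(10,0),(20,0),(30,0)}$, $s=(0,0)$, $t=(30,0)$. Then $\mv{s,t,\curveA}=\seq{}$ (the segment $\overline{st}$ already has \Frechet distance $0$ to $\curveA$), so the answer is $0$; but your trimming removes nothing, $\curveA'=\seq{(1,0),(10,0),(20,0),(29,0)}$, and $|\mv{\curveA'}|=2$, since no nonempty curve with fewer than two vertices is within \Frechet distance $1$ of a curve of diameter $28$. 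So the reduction returns the wrong value, and in the surrounding application (building canonical inserted subcurves) this systematic overcount of up to two vertices per insertion breaks optimality. Your proposed surgery for the ``$\ge$'' direction cannot rescue this: restricting an optimal matching of $\curveA$ versus $s\circ\curveB\circ t$ to the portion between $\leave{s,\curveA_1\curveA_2}$ and $\enter{t,\curveA_{m-1}\curveA_m}$ matches $\curveA'$ against a subcurve of $s\circ\curveB\circ t$ that still carries two endpoint vertices of its own (in the example above, a subsegment of $\overline{st}$), so it certifies $|\mv{\curveA'}|\le|\curveB|+2$, not $|\mv{\curveA'}|\le|\curveB|$. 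Nor is the discrepancy a clean ``$+2$'' one could subtract off: converting an unconstrained optimum $\curveC$ for $\curveA'$ into a constrained solution by swapping its first and last vertices for $s$ and $t$ changes the first and last edges, and points of $\curveA'$ matched to the interiors of those edges can move distance up to $2\thresh$ away, so $\distFr{\curveA}{s\circ\curveC_{2..k-1}\circ t}\le\thresh$ does not follow.

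The paper's reduction (Appendix A) instead keeps the endpoint constraint information inside the \emph{input} to the unconstrained solver: it runs $\mv{\cdot}$ on $\curveA'=\seq{p_1,p_2,p_1,p_2,s}\circ\curveA\circ\seq{t,q_2,q_1,q_2,q_1}$, where $p_1,p_2$ are antipodal points of $\partial\ball(s,\thresh)$ (similarly $q_1,q_2$ for $t$). The zigzag forces any vertex-minimal solution to contain $s$ on its prefix (otherwise it needs at least three extra vertices to oscillate between the two halfplanes bounded by the perpendicular to $p_1p_2$ at $s$), after which the portion from $s$ to $t$ is extracted and lightly cleaned up. Your ``$\le$'' direction and the clipping idea do reappear in the paper, but in Lemma~\ref{lemma:shortcut}, where both sides of the comparison keep the endpoints $s,t$ fixed; that is the version of the statement that is actually true. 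To salvage your approach you would need a reduction that either anchors the endpoints (as the gadget does) or proves a correct quantitative relationship between the constrained and unconstrained optima, which the segment example shows cannot be plain equality.
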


%We will make use of the following standard fact about \Frechet distance.

%\newcommand{\continuousinsert}{
We have the following standard fact regarding the \Frechet distance.

\begin{fact}\factlab{ends}
 The \Frechet distance between two line segments is realized either at the starting or ending vertices. That is, for any $p,p',q,q'\in \Re^2$, $\distFr{\seq{p,p'}}{\seq{q,q'}}=\max\{||p-p'||,||q-q'||\}$.  
\end{fact}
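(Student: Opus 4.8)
The plan is to show that $\max\{\distX{p}{q},\distX{p'}{q'}\}$ is simultaneously a lower and an upper bound on $\distFr{\seq{p,p'}}{\seq{q,q'}}$, where $\distX{p}{q}$ is the distance at the starting vertices and $\distX{p'}{q'}$ the distance at the ending vertices. Throughout I write $\curveA=\seq{p,p'}$ and $\curveB=\seq{q,q'}$, viewed as curves over $[1,2]$ with $\curveA(1+\alpha)=(1-\alpha)p+\alpha p'$ and $\curveB(1+\alpha)=(1-\alpha)q+\alpha q'$ for $\alpha\in[0,1]$.

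For the lower bound I would appeal directly to the definition of the strong continuous \Frechet distance: any reparameterizations $f$ of $\curveA$ and $g$ of $\curveB$ are continuous non-decreasing bijections with $f(0)=g(0)=1$ and $f(1)=g(1)=2$. Hence every such pair matches $\curveA(1)=p$ with $\curveB(1)=q$ at parameter $0$ and $\curveA(2)=p'$ with $\curveB(2)=q'$ at parameter $1$, so $width_{f,g}(\curveA,\curveB)\ge\max\{\distX{p}{q},\distX{p'}{q'}\}$; taking the infimum over $f,g$ yields the lower bound.

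For the upper bound I would exhibit a single witness, the pair $f(\alpha)=g(\alpha)=1+\alpha$. Along this matching the point-to-point distance at parameter $\alpha$ is
\[
\distX{\curveA(1+\alpha)}{\curveB(1+\alpha)}=\distCmd{(1-\alpha)(p-q)+\alpha(p'-q')},
\]
which is the Euclidean norm of an affine function of $\alpha$ and hence a convex function of $\alpha$ on $[0,1]$; its maximum over $[0,1]$ is therefore attained at $\alpha\in\{0,1\}$ and equals $\max\{\distX{p}{q},\distX{p'}{q'}\}$. (Equivalently, the triangle inequality bounds it by $(1-\alpha)\distX{p}{q}+\alpha\distX{p'}{q'}\le\max\{\distX{p}{q},\distX{p'}{q'}\}$.) Hence $width_{f,g}(\curveA,\curveB)\le\max\{\distX{p}{q},\distX{p'}{q'}\}$, giving the upper bound and thus equality; the fact that the distance is \emph{realized at the starting or ending vertices} is exactly the statement that this convex function peaks at an endpoint of $[0,1]$.

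I expect no genuine obstacle. The only subtleties worth remarking are that the lower bound relies on the monotonicity of the reparameterizations, so the identity genuinely fails for the weak variant where backtracking is allowed, and that nothing in the argument uses $d=2$, so although the fact is stated in $\Re^2$ it holds verbatim in any $\Re^d$.
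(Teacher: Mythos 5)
Your proof is correct and is the standard argument; the paper offers no proof of this statement at all (it is invoked as a known fact), so there is nothing to compare against beyond noting that your lower-bound/upper-bound decomposition is exactly what one would expect. Two remarks. First, the formula as typeset in the paper, $\max\{\distX{p}{p'},\distX{q}{q'}\}$, is a typo: it gives the \emph{lengths} of the two segments rather than the distances between corresponding endpoints (take $p=q$ and $p'=q'$ to see it cannot be right). You silently prove the intended statement $\distFr{\seq{p,p'}}{\seq{q,q'}}=\max\{\distX{p}{q},\distX{p'}{q'}\}$, which is the version the paper actually uses later (e.g.\ in the observation that follows, where the hypothesis is $\distX{q}{p}\leq\thresh$ and $\distX{q'}{p'}\leq\thresh$). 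Second, your closing remark is wrong on one point: the lower bound does \emph{not} rely on monotonicity of $f$ and $g$, only on the boundary conditions $f(0)=g(0)=1$ and $f(1)=g(1)=2$, which force the pairs $(p,q)$ and $(p',q')$ to be matched at $\alpha=0$ and $\alpha=1$. Since the weak variant imposes the same boundary conditions and your upper-bound witness is monotone anyway, the identity holds verbatim for the weak \Frechet distance between two single segments; the weak and strong distances only diverge once a curve has more than one edge. Your observation that nothing uses $d=2$ is correct and worth keeping.
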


%While it won't be used until later in this section, we have the following observation.
The above fact leads to the following observation, which we will make use of later.
%\br{This observation is not used until the corollarly below (and then again in the the next subsubsection). However, it seems better here, so as not to dirupt the flow later. Thoughts?}

\begin{observation}\obslab{empty}
% \factref{ends} implies that inserting vertices between $q$ and $q'$ cannot decrease the \Frechet distance to the segment $pp'$.
 Given $p,p',q,q'\in \Re^2$, by  \factref{ends}, $\distFr{\seq{p,p'}}{\seq{q,q'}}=\max\{||p-p'||,||q-q'||\}$.
 This implies, for any $k$ and any $q_1,\ldots,q_k\in \Re^2$, 
 $\distFr{\seq{p,p'}}{\seq{q,q_1,\ldots,q_k,q'}} \geq \max\{||p-p'||,||q-q'||\}= \distFr{\seq{p,p'}}{\seq{q,q'}}$. Thus if $||q-p||\leq \thresh$ and $||q'-p'||\leq \thresh$, then $\mv{q,q',\seq{p,p'}} = \seq{}$, i.e.\ the empty curve.
 
 Given curves $\curveA=\seq{\curveA_1,\ldots,\curveA_m}$ and $\curveB=\seq{\curveB_1,\ldots,\curveB_n}$, let $\curveB'$ be the modification of $\curveB$ realizing $\iedistFr{\curveA}{\curveB}$. Then the above implies that if $\curveB'$ was in part constructed by inserting vertices between $\curveB_i$ and $\curveB_{i+1}$, then the portion of $\curveA$ that the subcurve of $\curveB'$ between $\curveB_i$ and $\curveB_{i+1}$ will map to contains at least one vertex of $\curveA$. Together with \obsref{size}, this in turn implies $\iedistFr{\curveA}{\curveB}=O(m)$.
\end{observation}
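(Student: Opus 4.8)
The plan is to prove the observation's three successive claims and then read off the $O(m)$ bound; only the first uses \factref{ends} directly, while the third --- and hence the $O(m)$ estimate via \obsref{size} --- rests on an exchange argument against an optimal insertion-only solution. For the lower bound: every reparameterization pair realizing a width between $\seq{p,p'}$ and $\seq{q,q_1,\ldots,q_k,q'}$ sends parameter $0$ to the two starting points and $1$ to the two ending points (by the boundary conditions on reparameterizations), so its width is at least $\|p-q\|$ and at least $\|p'-q'\|$; by \factref{ends} these are precisely the two distances whose maximum equals $\distFr{\seq{p,p'}}{\seq{q,q'}}$, hence $\distFr{\seq{p,p'}}{\seq{q,q_1,\ldots,q_k,q'}}\ge\distFr{\seq{p,p'}}{\seq{q,q'}}$. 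If moreover $\|p-q\|\le\thresh$ and $\|p'-q'\|\le\thresh$, then $\distFr{\seq{p,p'}}{\seq{q,q'}}\le\thresh$, so the empty curve is feasible in $\mv{q,q',\seq{p,p'}}$ and, having no vertices, optimal; thus $\mv{q,q',\seq{p,p'}}=\seq{}$. The same convexity-of-the-ball estimate shows $\mv{\cdot}$ returns the empty curve whenever its third argument merely \emph{traces} a single segment whose two ends are within $\thresh$ of the two prescribed endpoints.

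\emph{Inserted blocks span a vertex of $\curveA$.} Assume $\iedistFr{\curveA}{\curveB}<\infty$ (otherwise there is nothing to prove) and fix an optimal insertion-only modification $\curveBs$ of $\curveB$ together with a width-$\le\thresh$ reparameterization pair between $\curveA$ and $\curveBs$. For each $i$, let $\curveA[x_i,y_i]$ be the portion of $\curveA$ that this matching assigns to the slice of $\curveBs$ running from $\curveB_i$ to $\curveB_{i+1}$; the intervals $[x_i,y_i]$ have pairwise disjoint interiors, cover $[1,m]$, and satisfy $\|\curveB_i-\curveA(x_i)\|\le\thresh$ and $\|\curveB_{i+1}-\curveA(y_i)\|\le\thresh$. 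Restricting the matching shows $\distFr{\curveA[x_i,y_i]}{\curveB_i\circ w^{(i)}\circ\curveB_{i+1}}\le\thresh$, where $w^{(i)}$ is the (possibly empty) block inserted in slot $(i,i+1)$, so $w^{(i)}$ is feasible in the definition of $\mv{\curveB_i,\curveB_{i+1},\curveA[x_i,y_i]}$. Replacing each $w^{(i)}$ by that minimum-vertex curve yields another insertion-only modification of $\curveB$ whose per-slot optimal matchings agree at the shared pairs $(\curveA(y_i),\curveB_{i+1})$ and glue into a global matching of width $\le\thresh$, while using no more insertions than $\curveBs$; by optimality of $\curveBs$ we may therefore assume that each $w^{(i)}$ already equals $\mv{\curveB_i,\curveB_{i+1},\curveA[x_i,y_i]}$. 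Now if some slot is nonempty, the last sentence of the first paragraph forces $\curveA[x_i,y_i]$ not to trace a single segment, i.e.\ to contain a vertex of $\curveA$ in its relative interior --- which is the assertion of the observation.

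\emph{The $O(m)$ bound.} The portions $\curveA[x_i,y_i]$ belonging to nonempty slots have disjoint interiors, each containing a vertex of $\curveA$, so there are at most $m$ nonempty slots. For each such slot, $|w^{(i)}|=|\mv{\curveB_i,\curveB_{i+1},\curveA[x_i,y_i]}|\le|\curveA[x_i,y_i]|$ by the argument of \obsref{size}, since $\curveA[x_i,y_i]$ is itself feasible there: prepending $\curveB_i$ and appending $\curveB_{i+1}$ keeps the \Frechet distance within $\thresh$ because the endpoints are within $\thresh$. As $|\curveA[x_i,y_i]|$ is at most two plus the number of vertices of $\curveA$ lying in $[x_i,y_i]$, and each vertex of $\curveA$ lies in at most two of these closed intervals, summing over the at most $m$ nonempty slots gives $\iedistFr{\curveA}{\curveB}=\sum_i|w^{(i)}|=O(m)$. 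Insertions before $\curveB_1$ or after $\curveB_n$, set aside here, are handled the same way and also contribute only $O(m)$.

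The step I expect to be the main obstacle is the exchange argument: one must make precise the decomposition of the global \Frechet matching into per-slot pieces --- in particular handling parameter ranges where $\curveBs$ pauses at an original vertex, or where a slot collapses to a single point of $\curveA$ --- check that the replacement per-slot matchings re-glue into a single monotone reparameterization pair, and dispose of the degenerate configurations (a slot mapping inside one edge, or collinear interior vertices of $\curveA$). Everything else is a short consequence of \factref{ends} and \obsref{size}.
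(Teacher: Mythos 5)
Your proposal is correct and follows essentially the same route as the paper: Fact~\factref{ends} gives the endpoint lower bound, hence the empty curve is optimal whenever the matched portion of $\curveA$ is a single segment, hence every nonempty inserted block must be matched to a portion of $\curveA$ containing a vertex, and the disjointness of these portions together with \obsref{size} yields the $O(m)$ bound. The extra detail you supply on the exchange/gluing argument is exactly what the paper carries out explicitly in the proof of \corref{canon}, so there is no substantive difference.
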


\begin{lemma}\lemlab{shortcut}
Let $\curveA=\seq{\curveA_1,\ldots,\curveA_m}$ where $m>2$, let $s$ and $t$ be points such that $||s-\curveA_1||\leq \thresh$ and $||t-\curveA_m||\leq \thresh$, and let $\curveA'=\clip{s,t,\curveA}$. 
Finally, let $\mv{s,t,\curveA}=\curveB$ and let 
$\mv{s,t,\curveA'}=\curveB'$.
Then $\distFr{\curveA}{s\circ \curveB'\circ t}\leq \thresh$ and $|\curveB'|\leq |\curveB|$.
%
%Let $\curveA_1'$ be the point in $\curveA_1\curveA_2\cap B(\curveB_1,\thresh)$ which is closest to $\curveA_2$.
%Similarly, let $\curveA_m'$ be the point in $\curveA_{m-1}\curveA_m\cap B(\curveB_n,\thresh)$ which is closest to $\curveA_{m-1}$. 
%Let $\curveA_1'=\leave{\curveB_1,\curveA_1\curveA_2}$ and $\curveA_m'=\enter{\curveB_n,\curveA_{m-1}\curveA_m}$.
%Finally, define $\curveA'=\seq{\curveA_1',\curveA_2,\ldots,\curveA_{m-1},\curveA_m'}$ and $\mv{s,t,\curveA'}=\curveB'=\seq{\curveB_1',\ldots,\curveB_z'}$. 
\end{lemma}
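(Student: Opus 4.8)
The plan is to prove the two assertions separately, with the heavier one being $|\curveB'|\le|\curveB|$. First I would fix notation: let $\curveA'_1=\leave{s,\curveA_1\curveA_2}$ and $\curveA'_m=\enter{t,\curveA_{m-1}\curveA_m}$ be the first and last vertices of $\curveA'$, with parameters $\lambda_0\in[1,2]$ and $\lambda_1\in[m-1,m]$ along $\curveA$; since $m>2$ we have $\lambda_0\le\lambda_1$, and the restricted curve $\curveA[\lambda_0,\lambda_1]$ is exactly $\curveA'$ (up to reparameterization). The one geometric ingredient I would record up front is that $\ball(s,\thresh)\cap\curveA_1\curveA_2$ is a sub-segment of $\curveA_1\curveA_2$ (convexity) containing $\curveA_1$ (as $||s-\curveA_1||\le\thresh$) whose endpoint nearer $\curveA_2$ is $\curveA'_1$ (by definition of $\leave{\cdot}$); hence the portion of $\curveA_1\curveA_2$ from $\curveA_1$ to $\curveA'_1$ lies inside $\ball(s,\thresh)$, and symmetrically the portion of $\curveA_{m-1}\curveA_m$ from $\curveA'_m$ to $\curveA_m$ lies inside $\ball(t,\thresh)$. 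In particular $||s-\curveA'_1||\le\thresh$ and $||t-\curveA'_m||\le\thresh$, so $\curveB'=\mv{s,t,\curveA'}$ is well defined.

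For the first assertion, I would decompose $\curveA$ at these two points as $\curveA=\seq{\curveA_1,\curveA'_1}\circ\curveA'\circ\seq{\curveA'_m,\curveA_m}$ (an equality of curves) and build a reparameterization pair against $s\circ\curveB'\circ t$ piece by piece: match the prefix segment $\seq{\curveA_1,\curveA'_1}$ to the single point $s$ (width $\le\thresh$, since that segment lies in $\ball(s,\thresh)$ as above), then match $\curveA'$ to $s\circ\curveB'\circ t$ using the reparameterizations guaranteed by the definition of $\curveB'$ (these run from $(\curveA'_1,s)$ to $(\curveA'_m,t)$, so the pieces glue), then match the suffix segment $\seq{\curveA'_m,\curveA_m}$ to the single point $t$. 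Concatenating gives a pair of width $\le\thresh$ between $\curveA$ and $\seq{s}\circ(s\circ\curveB'\circ t)\circ\seq{t}$, and the latter is just $s\circ\curveB'\circ t$ (the extra repeated endpoints are immaterial), so $\distFr{\curveA}{s\circ\curveB'\circ t}\le\thresh$. I note in passing that this already forces $|\curveB|\le|\curveB'|$ by minimality of $\curveB=\mv{s,t,\curveA}$, so the second assertion will actually upgrade to $|\curveB'|=|\curveB|$.

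For the second assertion it suffices to produce a curve $D$ with $|D|\le|\curveB|$ and $\distFr{\curveA'}{s\circ D\circ t}\le\thresh$; minimality of $\curveB'$ then finishes it. Fixing reparameterizations realizing $\distFr{\curveA}{\widehat{\curveB}}\le\thresh$ with $\widehat{\curveB}=s\circ\curveB\circ t$, and letting $\beta_0,\beta_1$ be parameters of $\widehat{\curveB}$ matched to $\curveA'_1,\curveA'_m$, restriction of the matching to the parameter window $[\lambda_0,\lambda_1]$ on $\curveA$ yields $\distFr{\curveA'}{\widehat{\curveB}[\beta_0,\beta_1]}\le\thresh$, with $\curveA'_1$ matched to $p_0:=\widehat{\curveB}(\beta_0)$ and $\curveA'_m$ matched to $p_1:=\widehat{\curveB}(\beta_1)$. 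Writing $a$ (resp.\ $b$) for the number of vertices of $\curveB$ at parameter $\le\beta_0$ (resp.\ $\ge\beta_1$) along $\widehat{\curveB}$, the subcurve $\widehat{\curveB}[\beta_0,\beta_1]$ equals $\seq{p_0}\circ\seq{\curveB_{a+1},\dots,\curveB_{n-b}}\circ\seq{p_1}$, so it carries $c=n-a-b$ vertices of $\curveB$, and $p_0$ lies on the edge of $\widehat{\curveB}$ leaving $\curveB_a$ (or leaving $s$ if $a=0$), symmetrically for $p_1$. Since $p_0$ is matched to $\curveA'_1$ and $||s-\curveA'_1||\le\thresh$, the segment $[s,p_0]$ lies in $\ball(\curveA'_1,\thresh)$, so holding $\curveA'$ at $\curveA'_1$ while sweeping that segment lets me prepend $s$; symmetrically I append $t$, giving $\distFr{\curveA'}{\seq{s}\circ\widehat{\curveB}[\beta_0,\beta_1]\circ\seq{t}}\le\thresh$. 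Finally I would take $D$ to be the interior-vertex list of this last curve, deleting $p_0$ when $a=0$ (then $p_0$ sits on the edge $s\curveB_1$ and is a collinear, removable vertex) and keeping it otherwise, and likewise for $p_1$ versus $b$; this preserves $\distFr{\curveA'}{s\circ D\circ t}\le\thresh$, and $|D|$ is $c$ plus one for each of $p_0,p_1$ kept, so $|D|\le(n-a-b)+\min(a,1)+\min(b,1)\le n=|\curveB|$ since $\min(a,1)\le a$ for every nonnegative integer.

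I expect the second assertion to be the crux, and the subtlety is precisely this: restricting the optimal matching of $\curveA$ against $s\circ\curveB\circ t$ down to $\curveA'$ leaves us matched to a sub-polyline of $s\circ\curveB\circ t$ that may begin and end in the \emph{interiors} of edges, at points $p_0,p_1$ only guaranteed to lie within $2\thresh$ of $s$ and $t$, so one cannot simply reuse $\curveB$. The fix exploited above is that $p_0,p_1$ are within $\thresh$ of the clipped endpoints $\curveA'_1,\curveA'_m$ (which is exactly what $\leave{\cdot}$ and $\enter{\cdot}$ guarantee), making the prepend/append steps legal and costing at most one extra vertex at each end, and that expense is covered by the vertices of $\curveB$ the matching consumed outside the clipped window (the inequality $\min(a,1)\le a$). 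The handful of degenerate cases — $\beta_0=\beta_1$, $p_0$ or $p_1$ coinciding with a vertex of $\widehat{\curveB}$ or with $s$ or $t$, and $n\le1$ — go through by the same recipe and I would verify them directly.
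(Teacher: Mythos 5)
Your proof is correct and follows essentially the same route as the paper's: the first claim via standing still at $s$ (resp.\ $t$) while traversing the clipped prefix (resp.\ suffix) of $\curveA$, which lies inside $B(s,\thresh)$ (resp.\ $B(t,\thresh)$), and the second via restricting an optimal traversal of $\curveA$ against $s\circ\curveB\circ t$ to the clipped window, prepending/appending $s$ and $t$ using that the window's endpoints lie within $\thresh$ of the clip points, and charging each possible new endpoint vertex to a consumed vertex of $\curveB$. The only difference is organizational: the paper clips one end at a time with a two-case analysis on whether the matched point $x$ lies on the first edge of $s\circ\curveB\circ t$, which is precisely your $\min(a,1)\le a$ bookkeeping carried out at both ends simultaneously.
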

\begin{proof}
By definition of $\mv{s,t,\curveA'}=\curveB'$, 
we know that $\distFr{\curveA'}{s\circ \curveB'\circ t}\leq \thresh$, i.e.\ there is a bijective mapping between traversals of the two curves such that paired points are within distance $\thresh$. Refer to such traversals as $\thresh$-realizing traversals.  
Then this immediately implies that $\distFr{\curveA}{s\circ \curveB'\circ t}\leq \thresh$, since for our $\thresh$-realizing traversals one can stand still at $s$ (resp.\ $t$) while on $\curveA$ we linearly traverse from $\curveA_1$ to $\leave{s,\curveA_1\curveA_2}$ (resp.\ from $\curveA_m$ back to $\enter{t,\curveA_{m-1}\curveA_m}$), and then after (resp.\ before) that follow the $\thresh$-realizing traversals of $\curveA'$ and $s\circ \curveB'\circ t$. Note that all paired points are still within distance $\thresh$ as by definition the line segment $\curveA_1\leave{s,\curveA_1\curveA_2}\subset B(s,\thresh)$ (resp.\ $\curveA_m\enter{t,\curveA_{m-1}\curveA_m}\subset B(t,\thresh)$).  
%we know that $\distFr{\curveA'}{s\circ \curveB'\circ t}\leq \thresh$, and therefore there are reparameterizations $f$ and $g$ such that for all $\alpha\in [0,1]$, it holds that $||\curveA(f(\alpha))-\curveB(g(\alpha))||\leq \thresh$.  
%So let $\epsilon$ denote the value such that $\curveA(\eps)=\leave{s,\curveA_1\curveA_2}$. 
%Then define $f$ to be the function $f(x) = x$ for x

%To prove $|\curveB'|\leq |\curveB|$, we will show that the existence of $\mv{s,t,\curveA}=\curveB$ implies there is a curve $\hat{\curveB}$ with $\distFr{\curveA'}{s\circ \hat{\curveB}\circ t}\leq \thresh$ such that $|\hat{\curveB}|\leq |\curveB|$, thus implying $|\curveB'|\leq |\curveB|$ as $\curveB'$ was the minimum vertex curve with $\distFr{\curveA'}{s\circ \curveB'\circ t}\leq \thresh$.

\begin{figure}[!h]
    \centering
    \begin{subfigure}[b]{0.3\textwidth}
        \centering
        \includegraphics[width=1.0\textwidth]{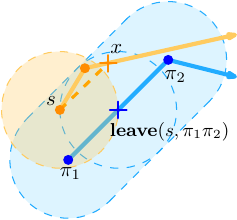}
        \subcaption{Shortcutting $\curveB$ to connect directly from $s$ to $x$.}
        \figlab{leavex}
    \end{subfigure}
    \hspace{1em}
    \begin{subfigure}[b]{0.3\textwidth}
        \centering
        \includegraphics[width=1.0\textwidth]{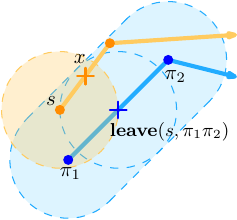}
        \subcaption{$x$ lies on the first segment.}
        \figlab{stayx}
    \end{subfigure}
    \hspace{1em}
    \begin{subfigure}[b]{0.3\textwidth}
        \centering
        \includegraphics[width=1.0\textwidth]{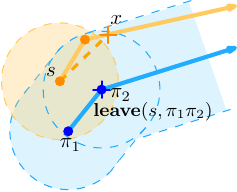}
        \subcaption{$\leave{s,\curveA_1\curveA_2}=\curveA_2$ and lies inside the $\delta$ ball around $s$}
        \figlab{insides}
    \end{subfigure}
    \caption{Different cases for $\leave{s,\curveA_1\curveA_2}$ and the point $x$ it gets paired with.}
    \figlab{fig:MinLink}
\end{figure}

We now prove that $|\curveB'|\leq |\curveB|$. To keep the cases to a minimum, let us focus on the $s$ side of $s\circ\curveB\circ t$. Specifically, let us temporarily redefine $\curveA'=\clip{s,t,\curveA}=\seq{\leave{s,\curveA_1\curveA_2},\curveA_2,\ldots,\curveA_{m}}$, and modify $\curveB'$ to be defined with respect to this new definition. Ultimately, the argument we make below will apply to the original $\curveA'=\clip{s,t,\curveA}$ and $\curveB'$ definitions by applying the argument at both ends of the curve.

%First, consider the case that $\curveA_1\curveA_2$ leaves the ball $B(s,\thresh)$. 
Fix some $\thresh$-realizing traversal of $\curveA$ and $\curveB$, and let $x$ denote the point on $s\circ\curveB\circ t$ which $\leave{s,\curveA_1\curveA_2}$ from $\curveA$ gets paired with. (Recall that $\leave{s,\curveA_1\curveA_2}$ is the point in $B(s,\delta)\cap \curveA_1\curveA_2$ closest to $\curveA_2$, which may be $\curveA_2$ itself.) 
Let $\gamma$ denote the subcurve of $s\circ \curveB\circ t$ from $x$ to $t$.
Observe that $\leave{s,\curveA_1\curveA_2}$ is within distance $\thresh$ of both $s$ and $x$, and thus $sx\subset B(\leave{s,\curveA_1\curveA_2}, \thresh)$.
%we have that $\distFr{\leave{s,\curveA_1\curveA_2}}{sx}\leq \thresh$ by \factref{ends}.
Therefore,
$\distFr{\curveA'}{s\circ \gamma}\leq \thresh$ 
since for the $\thresh$-realizing traversal of $\curveA'$ and $s\circ \gamma$ one can stand still at $\leave{s,\curveA_1\curveA_2}$ while on $s\circ \gamma$ traversing from $s$ to $x$, and then afterwards follow the portions of the $\thresh$-realizing traversal of $\curveA$ and $\curveB$ corresponding to their remaining subcurves.

So consider two cases. If $x$ lies on the first edge of $s\circ\curveB\circ t$, 
then the above fact that $\distFr{\curveA'}{s\circ \gamma}\leq \thresh$ implies that $\distFr{\curveA'}{s\circ \curveB\circ t}\leq \thresh$.\footnote{Roughly speaking, in this case $s\circ \gamma=s\circ\curveB \circ t$, though technically $s\circ \gamma$ has a vertex at $x$, where $x$ may not be a vertex on $s\circ\curveB\circ t$.}
%then  $s\circ \gamma=s\circ\curveB \circ t$ and thus $\distFr{\curveA'}{s\circ\curveB \circ t}=\distFr{\curveA'}{s\circ\gamma}\leq \thresh$.
Therefore, $|\curveB'|\leq |\curveB|$ since $\curveB'$ was the minimum vertex curve such that $\distFr{\curveA'}{s\circ \curveB'\circ t}\leq \thresh$.
So now suppose that $x$ does not lie on the first edge of $s\circ\curveB\circ t$ (see \figref{leavex}).
In this case observe that that $|s\circ \gamma|\leq |s\circ \sigma \circ t|$, since there must be at least one vertex between $s$ and $x$ on $\sigma$. 
However, $|s\circ \curveB'\circ t|\leq |s\circ \gamma|$, since again $\curveB'$ was the minimum vertex curve such that $\distFr{\curveA'}{s\circ \curveB'\circ t}\leq \thresh$, and thus again we conclude $|\curveB'|\leq |\curveB|$.
\end{proof}

\begin{definition}
    Given curves $\curveA=\seq{\curveA_1,\ldots,\curveA_m}$ and $\curveB=\seq{\curveB_1,\ldots,\curveB_n}$, we define the set of canonical inserted subcurves as
%    \[\canon(\curveA,\curveB)=\{
%    \mv{\curveB_i,\curveB_{i+1}, \clip{\curveB_i,\curveB_{i+1}, \seq{\curveA_\alpha,\ldots,\curveA_\beta}}}\mid 1\leq i< n, 1\leq \alpha<\beta\leq m\}\]
    \begin{equation*}
    \canon(\curveA,\curveB)=\left\{
      \mv{\curveB_i,\curveB_{i+1},\clip{\curveB_i,\curveB_{i+1}, \curveA[\alpha,\beta]}} ~~~\middle|~~~
       \begin{aligned}
         & i< n,~ \alpha<\beta-1\leq m-1, \\ & i,\alpha,\beta\in \mathbb{Z}^+, ~||\curveB_i-\curveA_{\alpha}\curveA_{\alpha+1}||\leq \thresh, \\ & ||\curveB_{i+1}-\curveA_{\beta-1}\curveA_\beta||\leq \thresh
       \end{aligned}
    \right\}
\end{equation*}
\end{definition}

Note that the curve $\curveB'$ realizing $\iedistFr{\curveA}{\curveB}$, contains $\curveB$ as a subsequence, where pairs of vertices that were consecutive in $\curveB$ either remain consecutive in $\curveB'$ or have a subcurve inserted between them. We thus have the following.

\begin{corollary}\corlab{canon}
    Given curves $\curveA=\seq{\curveA_1,\ldots,\curveA_m}$ and $\curveB=\seq{\curveB_1,\ldots,\curveB_n}$, where $\iedistFr{\curveA}{\curveB}\neq \infty$, then there exists a curve $\curveB'$ realizing $\iedistFr{\curveA}{\curveB}$, where the subcurve between any consecutive pair from $\curveB$ is in $\canon(\curveA,\curveB)$.
\end{corollary}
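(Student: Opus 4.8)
The plan is to start from any curve $\curveBs$ realizing $\iedistFr{\curveA}{\curveB}$ (one exists since $\iedistFr{\curveA}{\curveB}\neq\infty$) together with a pair of reparameterizations of $\curveA$ and $\curveBs$ of width at most $\thresh$, and to show that every inserted subcurve can be swapped out for a member of $\canon(\curveA,\curveB)$ without increasing the number of insertions or breaking the matching. As $\curveBs$ is obtained from $\curveB$ by insertions only, $\curveB$ is a subsequence of $\curveBs$, and for each $i<n$ there is a (possibly empty) subcurve $\gamma_i$ inserted between $\curveB_i$ and $\curveB_{i+1}$. The reparameterizations send $\curveB_i$ to a single point $p_i=\curveA(x_i)$ with $\|\curveB_i-p_i\|\leq\thresh$ (and, under the standing simplifying assumption that nothing is inserted before $\curveB_1$ or after $\curveB_n$, $p_1=\curveA_1$ and $p_n=\curveA_m$), and they restrict to a matching of width at most $\thresh$ between $I_i:=\curveA[x_i,x_{i+1}]$ and $\curveB_i\circ\gamma_i\circ\curveB_{i+1}$ that pairs $p_i\leftrightarrow\curveB_i$ and $p_{i+1}\leftrightarrow\curveB_{i+1}$ (so the $I_i$ concatenate in order to form $\curveA$); in particular $\distFr{I_i}{\curveB_i\circ\gamma_i\circ\curveB_{i+1}}\leq\thresh$. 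Gaps with $\gamma_i=\seq{}$ leave $\curveB_i,\curveB_{i+1}$ consecutive and need no attention, so fix an $i$ with $\gamma_i$ nonempty.

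First I would show that $I_i$ contains a vertex of $\curveA$ strictly in its interior, i.e.\ there is an integer $k$ with $x_i<k<x_{i+1}$. Otherwise $I_i$ lies on a single edge of $\curveA$, so $I_i=\seq{p_i,p_{i+1}}$; since $\|\curveB_i-p_i\|\leq\thresh$ and $\|\curveB_{i+1}-p_{i+1}\|\leq\thresh$, \obsref{empty} gives $\mv{\curveB_i,\curveB_{i+1},\seq{p_i,p_{i+1}}}=\seq{}$, so we could delete every vertex of $\gamma_i$ and patch the matching on $I_i$ with the trivial matching of $\seq{p_i,p_{i+1}}$ to $\seq{\curveB_i,\curveB_{i+1}}$, strictly lowering the number of insertions and contradicting optimality of $\curveBs$. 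Now set $\alpha_i:=\lfloor x_i\rfloor$ and $\beta_i:=\lceil x_{i+1}\rceil$. Then $\alpha_i\leq k-1$ and $\beta_i\geq k+1$, so $\alpha_i<\beta_i-1$; also $1\leq\alpha_i$ and $\beta_i\leq m$; and since $p_i\in\curveA_{\alpha_i}\curveA_{\alpha_i+1}$ and $p_{i+1}\in\curveA_{\beta_i-1}\curveA_{\beta_i}$, we get $\|\curveB_i-\curveA_{\alpha_i}\curveA_{\alpha_i+1}\|\leq\thresh$ and $\|\curveB_{i+1}-\curveA_{\beta_i-1}\curveA_{\beta_i}\|\leq\thresh$. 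Hence $(i,\alpha_i,\beta_i)$ is a legal index triple for $\canon(\curveA,\curveB)$.

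Next I would apply \lemref{shortcut} with $s=\curveB_i$, $t=\curveB_{i+1}$, and the curve $I_i$, written as the polygonal curve $\seq{p_i,\curveA_{\alpha_i+1},\dots,\curveA_{\beta_i-1},p_{i+1}}$, which has at least three vertices because $\alpha_i<\beta_i-1$ and whose endpoints satisfy $\|\curveB_i-p_i\|\leq\thresh$ and $\|\curveB_{i+1}-p_{i+1}\|\leq\thresh$. The lemma gives, for $\gamma_i':=\mv{\curveB_i,\curveB_{i+1},\clip{\curveB_i,\curveB_{i+1},I_i}}$, both $\distFr{I_i}{\curveB_i\circ\gamma_i'\circ\curveB_{i+1}}\leq\thresh$ and $|\gamma_i'|\leq|\mv{\curveB_i,\curveB_{i+1},I_i}|$. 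A short check of the definitions of $\enter{\cdot}$ and $\leave{\cdot}$ — using that $p_i\in\ball(\curveB_i,\thresh)$ lies on $\curveA_{\alpha_i}\curveA_{\alpha_i+1}$, so $\leave{\curveB_i,p_i\curveA_{\alpha_i+1}}=\leave{\curveB_i,\curveA_{\alpha_i}\curveA_{\alpha_i+1}}$, and symmetrically at the other end — shows $\clip{\curveB_i,\curveB_{i+1},I_i}=\clip{\curveB_i,\curveB_{i+1},\curveA[\alpha_i,\beta_i]}$, so $\gamma_i'$ is precisely the member of $\canon(\curveA,\curveB)$ indexed by $(i,\alpha_i,\beta_i)$. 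Finally, since $\distFr{I_i}{\curveB_i\circ\gamma_i\circ\curveB_{i+1}}\leq\thresh$, minimality of $\mv{\cdot}$ gives $|\mv{\curveB_i,\curveB_{i+1},I_i}|\leq|\gamma_i|$, hence $|\gamma_i'|\leq|\gamma_i|$.

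To finish, let $\curveB''$ be obtained from $\curveB$ by inserting, between each consecutive pair $\curveB_i,\curveB_{i+1}$, the canonical subcurve $\gamma_i'$ when $\gamma_i$ was nonempty and nothing otherwise. Its number of insertions is $\sum_i|\gamma_i'|\leq\sum_i|\gamma_i|=\iedistFr{\curveA}{\curveB}$, and $\distFr{\curveA}{\curveB''}\leq\thresh$ because the unchanged pieces of the original matching together with the new local matchings of $I_i$ with $\curveB_i\circ\gamma_i'\circ\curveB_{i+1}$ all pair first vertex with first vertex and last with last, hence agree at the breakpoints $p_i\leftrightarrow\curveB_i$ and concatenate into one matching of $\curveA$ and $\curveB''$ of width at most $\thresh$. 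Since $\iedistFr{\curveA}{\curveB}$ is the minimum possible number of insertions, $\curveB''$ in fact realizes $\iedistFr{\curveA}{\curveB}$, and every subcurve it inserts lies in $\canon(\curveA,\curveB)$; thus $\curveB''$ is a valid choice for the curve in the statement. I expect the main obstacle to be the bridge between the matched portion $I_i$, whose endpoints are generically interior points of edges of $\curveA$, and the integer-indexed curve $\curveA[\alpha_i,\beta_i]$ appearing in the definition of $\canon$: one has to argue that a nonempty inserted subcurve forces $I_i$ to span at least two edges of $\curveA$ (so that $\alpha_i<\beta_i-1$; this is where \obsref{empty} and optimality of $\curveBs$ enter) and that clipping $I_i$ against $\curveB_i,\curveB_{i+1}$ yields the same curve as clipping $\curveA[\alpha_i,\beta_i]$, which is exactly what lets us invoke \lemref{shortcut} on $I_i$ with the well-behaved endpoints $p_i,p_{i+1}$ while still landing in $\canon(\curveA,\curveB)$.
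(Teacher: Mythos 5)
Your proposal is correct and follows essentially the same route as the paper's proof: fix a $\thresh$-realizing traversal, use \obsref{empty} plus optimality to force an interior vertex of $\curveA$ in each matched portion, round the endpoints to $\lfloor x\rfloor$ and $\lceil y\rceil$ while observing the clipped curves coincide, and invoke \lemref{shortcut} to replace each inserted subcurve by the canonical one without increasing its vertex count. You are somewhat more explicit than the paper about the legality of the index triple and about gluing the local matchings back into a global one, but these are elaborations of the same argument rather than a different approach.
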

\begin{proof}
Let $\hat{\curveB}$ be any curve realizing $\iedistFr{\curveA}{\curveB}$.
Consider any consecutive pair $\curveB_i,\curveB_{i+1}$ from $\curveB$ where $\hat{\curveB}$ has vertices inserted between. That is, $\hat{\curveB}$ contains the subcurve $\curveB_i\circ\gamma \circ \curveB_{i+1}$ where $\gamma=\seq{\hat{\curveB}_{k_1},\ldots,\hat{\curveB}_{k_z}}$ and $z\geq 1$. Fix any $\delta$-realizing traversal of $\distFr{\curveA}{\hat{\curveB}}$, and under this traversal let $\curveA[x,y]$ be the subcurve of $\curveA$ that is matched to the subcurve $\curveB_i\circ\gamma \circ \curveB_{i+1}$ of $\hat{\curveB}$.
%
%\br{Do I need the following cases separated out?}
%Suppose that $x,y\notin \mathbb{Z}$, i.e.\ $\curveA(x)$ lies on the interior of the edge $\curveA_{\lfloor x\rfloor}\curveA_{\rceil x\rceil}$ and $\curveA(y)$ on  $\curveA_{\lfloor y\rfloor}\curveA_{\rceil y\rceil}$. Thus $\curveA[x,y]=\seq{\curveA(x),\curveA_{\lceil x \rceil}, \curveA_{\lceil x \rceil+1},\ldots, \curveA_{\lfloor y \rfloor}, \curveA(y)}$. 
%
Note that $x$ and $y$ are not necessarily integers, i.e. $\curveA(x)$ and $\curveB(y)$ may lie on the interior of an edge.
Moreover, by \obsref{empty}, since $\hat{\curveB}$ actually paid to insert vertices to match $\curveA[x,y]$ it must be that $\curveA[x,y]$ is not a straight segment, i.e.\ it contains in its interior a vertex of $\curveA$. 
Due to this fact, and since $||\curveB_i-\curveA(x)||\leq \thresh$ and $||\curveB_{i+1}-\curveA(y)||\leq \thresh$, we know that both $\clip{\curveB_i,\curveB_{i+1}, \curveA[x,y]}$ and $\clip{\curveB_i,\curveB_{i+1}, \curveA[\lfloor x \rfloor, \lceil y \rceil]}$ are well defined, and in fact $\clip{\curveB_i,\curveB_{i+1}, \curveA[x,y]} = \clip{\curveB_i,\curveB_{i+1}, \curveA[\lfloor x \rfloor, \lceil y \rceil]}$.

Let $\gamma' =\mv{\curveB_i,\curveB_{i+1},\clip{\curveB_i,\curveB_{i+1}, \curveA[x,y]}}$, and note that the existence of $\gamma$ implies that by \lemref{shortcut} we have $\distFr{\curveA[x,y]}{\curveB_i\circ\gamma'\circ \curveB_{i+1}}\leq \thresh$, where $|\gamma'|\leq |\gamma|$, i.e.\ inserting $\gamma'$ between $\curveB_i$ and $\curveB_{i+1}$ instead of $\gamma$ is still an optimal solution. %(indeed, if there is a unique solution then $\gamma=\gamma'$).
This completes the proof as $\gamma'=\mv{\curveB_i,\curveB_{i+1},\clip{\curveB_i,\curveB_{i+1}, \curveA[x,y]}}=\mv{\curveB_i,\curveB_{i+1},\clip{\curveB_i,\curveB_{i+1}, \curveA[\lfloor x \rfloor, \lceil y \rceil]}}$ and $\mv{\curveB_i,\curveB_{i+1},\clip{\curveB_i,\curveB_{i+1}, \curveA[\lfloor x \rfloor, \lceil y \rceil]}}\in \canon(\curveA,\curveB)$.
\end{proof}

\begin{remark}\remlab{extend}
%\br{Discuss insertions are the start or end}
The above discussion easily extends to the case where insertions are allowed before $\curveB_1$ and after $\curveB_n$.  First, we extend all the above definitions.

Let $\mv{s,\cdot,\curveA}$ (resp.\ $\mv{\cdot,t,\curveA}$) denote the curve $\curveB=\seq{\curveB_1,\ldots,\curveB_n}$ with the minimum number of vertices such that 
$\distFr{\curveA}{s\circ\curveB}\leq \delta$ (resp.\ 
$\distFr{\curveA}{\curveB\circ t}\leq \delta$). 
Given a curve $\curveA=\seq{\curveA_1,\ldots,\curveA_m}$ where $m>2$, and a point $s$ (resp.\ a point $t$) such that $||s-\curveA_1\curveA_2||\leq \thresh$ (resp.\  $||t-\curveA_{m-1}\curveA_m||\leq \thresh$), define $\clip{s,\cdot,\curveA}=\seq{\leave{s,\curveA_1\curveA_2},\curveA_2,\ldots,\curveA_{m-1},\curveA_m}$ (resp.\ $\clip{\cdot,t,\curveA}=\seq{\curveA_1,\curveA_2,\ldots,\curveA_{m-1},\enter{t,\curveA_{m-1}\curveA_m}}$). Finally, define the set of canonical subcurves which end at $\curveB_1$ or start at $\curveB_n$ as follows.
    \begin{equation*}
    \canon^e_1(\curveA,\curveB)=\left\{
      \mv{\cdot,\curveB_{1},\clip{\cdot,\curveB_{1}, \curveA[1,\beta]}} ~~~\middle|~~~
         \beta\leq m, \beta\in \mathbb{Z}^+, ||\curveB_{1}-\curveA_{\beta-1}\curveA_\beta||\leq \thresh
    \right\}
\end{equation*}
    \begin{equation*}
    \canon^s_n(\curveA,\curveB)=\left\{
      \mv{\curveB_{n},\cdot,\clip{\curveB_{n},\cdot, \curveA[\alpha,n]}} ~~~\middle|~~~
         1 \leq \alpha, \alpha\in \mathbb{Z}^+, ||\curveB_{n}-\curveA_{\alpha}\curveA_{\alpha+1}||\leq \thresh
    \right\}
\end{equation*}
Note that \thmref{stabtime} extends to also compute $\mv{s,\cdot,\curveA}$ and $\mv{\cdot,t,\curveA}$ in $O(m^2\log n)$ time.
(Indeed, as mentioned above, the result from \cite{ghms-apsmlp-93} did not specify the start and end points, and a reduction in \apndref{mvcreduction} is given to extend to this case.) 
Moreover, \corref{canon} extends to the sets $\canon^e_1(\curveA,\curveB)$ and $\canon^s_n(\curveA,\curveB)$ when considering the portions of the curve before $\curveB_1$ and after $\curveB_n$, respectively. Indeed the proof is arguably easier in these cases, as the portion of $\curveA$ that is being mapped to must contain $\curveA_1$ or $\curveA_m$, respectively.
\end{remark}

\subsubsection{The Algorithm}

Given curves $\curveA=\seq{\curveA_1,\ldots,\curveA_m}$ and $\curveB=\seq{\curveB_1,\ldots,\curveB_n}$, a  threshold $\thresh$, and a parameter $k$, we now describe how to determine if $\iedistFr{\curveA}{\curveB}\leq k$. %Ultimately we will see that as $k=O(m)$, our asymptotic running time to compute $\iedistFr{\curveA}{\curveB}$ will be the same.
Ultimately, as $k=O(m)$, this also yield an algorithm to compute $\iedistFr{\curveA}{\curveB}$, analgously to how our bound on $k$ yielded an algorithm to compute $\dedistFr{\curveA}{\curveB}$.

We will follow a similar approach to that in \secref{cdo}. As $\curveA$ will remain unchanged again we use $\curveA$ itself as a \DAG complex. For $\curveB$, we again create $k$ additional copies of $\curveB$ with $\curveB^{\ell}=\seq{\curveB_1^{\ell},\ldots,\curveB_n^{\ell}}$ denoting the $\ell$th copy. Now for the edges, for deletion only we used the complete weighted \DAG complex which for any $0\leq \ell\leq k$ and any $i<j$ such $\ell+(j-(i+1)) \leq k$, included the directed edge $\curveB_i^{\ell}\curveB_j^{\ell+(j-(i+1))}$ to encode deletion of the $j-(i+1)$ vertices between $\curveB_i$ and $\curveB_j$. 
Thus now to instead encode inserting vertices between consecutive pairs on $\curveB$, for each curve $\gamma\in \canon(\curveA,\curveB)$, where $\gamma$ connected from $\curveB_i$ to $\curveB_{i+1}$, we add the curve $\seq{\curveB_i^\ell}\circ \gamma\circ \seq{\curveB_{i+1}^{\ell+|\gamma|}}$ to the \DAG complex (where $|\gamma|$ denotes its number of vertices), for all $\ell$ such that $0\leq \ell<\ell+|\gamma|\leq k$.
We also need to account for the possibility that vertices get inserted before $\curveB_1$ or after $\curveB_n$, which by \remref{extend}, is handled by including the curves from $\canon^e_1(\curveA,\curveB)$ and $\canon^s_n(\curveA,\curveB)$, respectively. Specifically, for a curve $\gamma\in \canon^e_1(\curveA,\curveB)$, we identify the last vertex of $\gamma$ (which by definition is located at $\curveB_1$) with $\curveB_1^{|\gamma|-1}$, so long as $|\gamma|-1\leq k$. (Note it is $|\gamma|-1$ and not $|\gamma|$, as we are not inserting $\curveB_1$.) Now for each $\gamma\in \canon^s_n(\curveA,\curveB)$ we need to add multiple copies since we don't know the cost to reach $\curveB_n$. Specifically, we identify the first vertex of a copy of $\gamma$ with $\curveB_n^{\ell + |\gamma|-1}$, for all $\ell$ such that $0\leq \ell+|\gamma|-1\leq k$.
%\br{Alternatively add zero length edges to the end of the curve to go from copy $\ell$ to copy $\ell+|\gamma|-1$}

Call the above described complex, the insertion weighted complex of $\curveB$ with respect to $\curveA$, denoted $\complex_{\curveB,\curveA}$, and observe that by \corref{canon} (and \remref{extend}) we know that if $\iedistFr{\curveA}{\curveB}\leq k$ then there is a curve $\curveB'$ realizing $\iedistFr{\curveA}{\curveB}$ which is compliant with $\complex_{\curveB,\curveA}$.
So let $V(\canon^e_1(\curveA,\curveB))$ denote the set of vertices consisting of the first vertex for each curve $\gamma\in \canon^e_1(\curveA,\curveB)$ that we included in $\complex_{\curveB,\curveA}$.
Then $S_\curveB=\{\curveB_1^0\}\cup V(\canon^s_n(\curveA,\curveB))$.
Analogously let $V(\canon^s_n(\curveA,\curveB))$ denote the set of vertices consisting of the last vertex for every copy each curve $\gamma\in \canon^s_n(\curveA,\curveB)$ that we included in $\complex_{\curveB,\curveA}$.
Then $T_\curveB=\{\curveB_n^\ell \mid 0\leq \ell\leq k\}\cup V(\canon^s_n(\curveA,\curveB))$.

Again similar to \secref{cdo}, we call \thmref{dag}, which computes the set of all pairs in $\curveA_m\times T_\curveB$ such that there are compliant paths from allowable starting vertices whose \Frechet distance is $\leq \thresh$. If no such pair exists then $\iedistFr{\curveA}{\curveB}> k$. 
%Otherwise, reaching any such pair $\curveA_m\times\curveB_n^{\alpha}$ corresponds to inserting $\alpha$ vertices. 
So let $x\in T_\curveB$ be such that $(\curveA_m, x)$ is reachable. If $x=\curveB_n^{\alpha}$ for some $\alpha$, then it corresponds to inserting $\alpha$ vertices. Otherwise, $x\in V(\canon^s_n(\curveA,\curveB))$ and is the last vertex on copy of a curve $\gamma\in \canon^s_n(\curveA,\curveB)$ which was attached at some $\curveB_n^\ell$, in which case it corresponds to inserting $\alpha=\ell+|\gamma|-1$ vertices.
In either case, such a vertex was only included in $\complex_{\curveB,\curveA}$ if $\alpha\leq k$, and so we can conclude that $\iedistFr{\curveA}{\curveB}\leq k$.
%Thus for each such pair $\curveA_m\times\curveB_n^{\alpha}$ we check if $\alpha \leq k$, and if this holds for some pair then $\iedistFr{\curveA}{\curveB}\leq k$, and otherwise $\iedistFr{\curveA}{\curveB}> k$.

As for the running time, computing $\canon(\curveA,\curveB)$ takes $O(nm^4\log^2 m)$ time as $|\canon(\curveA,\curveB)|=O(nm^2)$ and each curve in $\canon(\curveA,\curveB)$ takes $O(m^2\log^2 m)$ time to compute using  \thmref{stabtime}.\footnote{
We conjecture that the time to compute $\canon(\curveA,\curveB)$ may be improved by reusing and updating the computations from the algorithm in \cite{ghms-apsmlp-93}, rather than making independent calls.}
(Note this also dominates the time to compute the smaller sets $\canon^e_1(\curveA,\curveB)$ and $\canon^s_n(\curveA,\curveB)$.)
The complex constructed for $\curveA$ has size $O(m)$, and the insertion weighted complex constructed for $\curveB$ has size $O(k^2nm^2)$.
% Extra factor of \(k\) inserted to account for total lenght of all copies of newly inserted paths.
Thus the total time is $O(nm^3(k^2 +m\log^2 m))$. 
Recall that by \obsref{empty}, 
$\iedistFr{\curveA}{\curveB}=O(m)$.
%there is an optimal set of insertions where the inserted vertices correspond to distinct subcurves of $\curveA$, implying $k=O(m)$.
Therefore, $O(nm^3(k^2 +m\log^2 m))=O(nm^5)$.
}
\RegVer{\continuousinsert}
\SOCGVer{
For insertions only, instead of adding  directed edges between pairs of vertices of copies of \(\curveB\), we compute and insert copies of the following \(O(nm^2)\) \emph{canonical subcurves}.
\begin{equation*}
    \canon(\curveA,\curveB)=\left\{
      \mv{\curveB_i,\curveB_{i+1},\clip{\curveB_i,\curveB_{i+1}, \curveA[\alpha,\beta]}} ~~~\middle|~~~
       \begin{aligned}
         & i< n,~ \alpha<\beta-1\leq m-1, \\ & i,\alpha,\beta\in \mathbb{Z}^+, ~||\curveB_i-\curveA_{\alpha}\curveA_{\alpha+1}||\leq \thresh, \\ & ||\curveB_{i+1}-\curveA_{\beta-1}\curveA_\beta||\leq \thresh
       \end{aligned}
    \right\}
\end{equation*}
Computing $\canon(\curveA,\curveB)$ takes $O(nm^4\log^2 m)$ time.
The complex for $\curveA$ has size $O(m)$, and the insertion weighted complex constructed for $\curveB$ has size $O(k^2nm^2)$. Thus the total time to construct the complexes and find nearby curves within them is $O(nm^3(k^2+m\log^2 m))$. 
We can argue that
$\iedistFr{\curveA}{\curveB}=O(m)$.
%there is an optimal set of insertions where the inserted vertices correspond to distinct subcurves of $\curveA$, implying $k=O(m)$.
Therefore, $O(nm^3(k^2+m\log^2 m))=O(nm^5)$.
Deletions and insertions together are handled similarly by extending $\canon(\curveA,\curveB)$ to be defined over all pairs on $\curveB$.
See the full version for details.
}

\begin{theorem}\thmlab{continuousinsert}
Given curves $\curveA=\seq{\curveA_1,\ldots,\curveA_m}$ and $\curveB=\seq{\curveB_1,\ldots,\curveB_n}$ in $\Re^2$, a threshold $\thresh$, and an integer $k>0$, in $O(nm^3(k^2+m\log^2 m))$ time one can determine if $\iedistFr{\curveA}{\curveB}\leq k$. 
Moreover, 
one can compute $\iedistFr{\curveA}{\curveB}$ in $O(nm^5)$ time.
\end{theorem}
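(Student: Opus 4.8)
The plan is to mirror the deletion-only construction of \secref{cdo}. Since only $\curveB$ is edited, I would again use $\curveA$ directly as a \DAG complex (start $\curveA_1$, target $\curveA_m$) and build a layered \DAG complex for $\curveB$ consisting of $k+1$ copies $\curveB^0,\dots,\curveB^k$, where the layer index records how many vertices have been inserted so far. The only difference from the deletion case is what we glue between layers: instead of every forward chord, we must splice, between a consecutive pair $\curveB_i,\curveB_{i+1}$, a subcurve $\gamma$ that advances the layer by exactly $|\gamma|$. Running \thmref{dag} on this complex against $\curveA$ then recovers the optimum, provided the family of spliced-in subcurves is rich enough to contain an optimal solution yet small enough to keep the complex of polynomial size.

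The heart of the proof is identifying such a small \emph{canonical} family. First I would argue, via \factref{ends}, that if an optimal edited curve $\curveB'$ actually pays to insert vertices between $\curveB_i$ and $\curveB_{i+1}$, then under any $\thresh$-realizing traversal the portion $\curveA[x,y]$ of $\curveA$ matched to the new subcurve must contain a vertex of $\curveA$ in its interior; otherwise $\curveA[x,y]$ is a single segment and the empty insertion already works. Combined with the bound that $\mv{s,t,\curveA}$ has at most $m$ vertices (\obsref{size}), this already gives $\iedistFr{\curveA}{\curveB}=O(m)$. Next, because $\curveA(x),\curveA(y)$ may lie in edge interiors, I would prove the key shortcut lemma (\lemref{shortcut}): replacing $\curveA[x,y]$ by the clipped curve $\clip{\curveB_i,\curveB_{i+1},\curveA[\lfloor x\rfloor,\lceil y\rceil]}$ and then taking $\mv{\curveB_i,\curveB_{i+1},\cdot}$ of it yields a subcurve of no larger vertex count that still admits a $\thresh$-traversal against $\curveA[x,y]$. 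The traversal surgery is to stand still at $\curveB_i$ (resp.\ $\curveB_{i+1}$) while $\curveA$ sweeps the clipped-off stub, which lies entirely inside $B(\curveB_i,\thresh)$ (resp.\ $B(\curveB_{i+1},\thresh)$); the vertex-count claim then follows from a short case analysis on where $\leave{\curveB_i,\curveA_1\curveA_2}$ is matched. Hence it suffices to restrict inserted subcurves to the $O(nm^2)$ curves of $\canon(\curveA,\curveB)$, obtained by ranging over consecutive pairs of $\curveB$ and index pairs $[\alpha,\beta]$ of $\curveA$ meeting the two endpoint-proximity constraints; insertions before $\curveB_1$ or after $\curveB_n$ are handled by two analogous, smaller families in which the matched portion is forced to contain $\curveA_1$ or $\curveA_m$ (\remref{extend}), a slightly easier argument.

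With the canonical sets in hand I would assemble the insertion-weighted complex $\complex_{\curveB,\curveA}$: for each $\gamma\in\canon(\curveA,\curveB)$ joining $\curveB_i$ to $\curveB_{i+1}$ and each $\ell$ with $0\le\ell<\ell+|\gamma|\le k$, splice a fresh copy of $\gamma$ from $\curveB_i^\ell$ to $\curveB_{i+1}^{\ell+|\gamma|}$; attach the $\canon^e_1(\curveA,\curveB)$ curves ending at $\curveB_1^{|\gamma|-1}$ and, for each starting layer, the $\canon^s_n(\curveA,\curveB)$ curves starting at $\curveB_n^{\ell+|\gamma|-1}$. Taking $S_\curveB=\{\curveB_1^0\}\cup V(\canon^e_1(\curveA,\curveB))$ and $T_\curveB=\{\curveB_n^\ell:0\le\ell\le k\}\cup V(\canon^s_n(\curveA,\curveB))$, I run \thmref{dag} on $\curveA$ versus $\complex_{\curveB,\curveA}$; by \corref{canon} (and \remref{extend}) a reachable pair $(\curveA_m,x)$ with $x$ at layer $\le k$ exists precisely when $\iedistFr{\curveA}{\curveB}\le k$, and since every such $x$ was only inserted when its layer is $\le k$, reading one off certifies the answer.

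For the running time, $|\canon(\curveA,\curveB)|=O(nm^2)$ and each of its curves costs $O(m^2\log^2 m)$ via the minimum-link-stabber algorithm of~\cite{ghms-apsmlp-93} (\thmref{stabtime}), so the canonical sets cost $O(nm^4\log^2 m)$; the complex for $\curveA$ has size $O(m)$ and $\complex_{\curveB,\curveA}$ has size $O(k^2nm^2)$, so \thmref{dag} runs in $O(k^2nm^3)$, giving the claimed $O(nm^3(k^2+m\log^2 m))$. Finally, since $\iedistFr{\curveA}{\curveB}=O(m)$, setting $k=O(m)$ and returning the smallest layer among reachable targets computes the exact value in $O(nm^5)$ time. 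I expect the main obstacle to be the canonical-subcurve lemma: arguing that clipping the matched portion of $\curveA$ to integer endpoints and then re-minimizing the vertex count never increases the number of inserted vertices. The danger is that ``straightening'' $\curveA$ near $\curveB_i$ could in principle force an extra bend in the inserted curve; ruling this out is exactly what the stand-still-inside-the-$\thresh$-ball surgery and the case split on how $\leave{\curveB_i,\curveA_1\curveA_2}$ is matched are designed to accomplish.
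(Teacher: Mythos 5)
Your proposal is correct and follows essentially the same route as the paper: the same layered \DAG complex for $\curveB$, the same reduction to the canonical subcurves $\canon(\curveA,\curveB)$ via the clip-and-reminimize shortcut lemma, the same use of the minimum-link-stabber result of~\cite{ghms-apsmlp-93}, and the same accounting yielding $O(nm^3(k^2+m\log^2 m))$ and $O(nm^5)$. (You even correctly write $S_\curveB=\{\curveB_1^0\}\cup V(\canon^e_1(\curveA,\curveB))$ where the paper's text has an apparent typo.)
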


\newcommand{\continuousboth}{
\subsection{Insertion and Deletion}\label{cbapnd}

We can now easily allow for both insertions and deletions by combining the above approaches. For deletion only we allowed connecting between any pair of vertices from $\curveB$ with a segment, whereas for the insertion only case we added paths from $\canon(\curveA,\curveB)$ between adjacent vertices from $\curveB$. Thus we now extend the definition of $\canon(\curveA,\curveB)$ to all pairs from $\curveB$.

\begin{definition}
Given curves $\curveA=\seq{\curveA_1,\ldots,\curveA_m}$ and $\curveB=\seq{\curveB_1,\ldots,\curveB_n}$, we define the extended set of canonical inserted subcurves as
    \begin{equation*}
    \ecanon(\curveA,\curveB)=\left\{
      \mv{\curveB_i,\curveB_{j},\clip{\curveB_i,\curveB_{j}, \curveA[\alpha,\beta]}} ~~\middle|~~
       \begin{aligned}
         & i<j\leq n, ~ \alpha<\beta-1 \leq m-1, \\ & i,j,\alpha,\beta\in \mathbb{Z}^+, ~||\curveB_i-\curveA_{\alpha}\curveA_{\alpha+1}||\leq \thresh, \\ & ||\curveB_{j}-\curveA_{\beta-1}\curveA_\beta||\leq \thresh
       \end{aligned}
    \right\}
\end{equation*}
\end{definition}
We analogously extend the definitions of $\canon^e_1(\curveA,\curveB)$ and and $\canon^s_n(\curveA,\curveB)$ from \remref{extend} to the sets $\ecanon^e(\curveA,\curveB) = \bigcup_{i} \canon^e_i(\curveA,\curveB)$ and $\ecanon^s(\curveA,\curveB) = \bigcup_{i} \canon^s_i(\curveA,\curveB)$, respectively.

Again we use $\curveA$ itself as a \DAG complex. For $\curveB$, we again create $k$ additional copies of $\curveB$ with $\curveB^{\ell}=\seq{\curveB_1^{\ell},\ldots,\curveB_n^{\ell}}$ denoting the $\ell$th copy. 
For each curve $\gamma\in \ecanon(\curveA,\curveB)$, where $\gamma$ connected from $\curveB_i$ to $\curveB_{j}$, we add the curve $\seq{\curveB_i^\ell}\circ \gamma\circ \seq{\curveB_{j}^{\ell+|\gamma|+(j-(i+1))}}$ to the \DAG complex, for all $\ell$ such that $0\leq \ell<\ell+|\gamma|+(j-(i+1))\leq k$. 
For each curve $\gamma\in \canon^e_i(\curveA,\curveB)$, we identify the last vertex of $\gamma$ with $\curveB_i^{(i-1)+(|\gamma|-1)}=\curveB_i^{i+|\gamma|-2}$, so long as $i+|\gamma|-2\leq k$. 
Now for each $\gamma\in \canon^s_i(\curveA,\curveB)$ we need to add multiple copies since we don't know the cost to reach $\curveB_i$. Specifically, we identify the first vertex of a copy of $\gamma$ with $\curveB_i^{\ell + (n-i) +|\gamma|-1}$, for all $\ell$ such that $0\leq \ell+(n-i)+ |\gamma|-1\leq k$.

Let $\complex_\curveB$ denote the resulting \DAG complex.
Note that one can view a set of edits to $\curveB$, as first making deletions and then making insertions.
Thus by the arguments in \secref{cdo} and \secref{cio}, if $\edistFr{\curveA}{\curveB}\leq k$ then there is a curve $\curveB'$ realizing $\edistFr{\curveA}{\curveB}$ which is compliant with $\complex_{\curveB}$.
Now for $\curveB$ the optimal solution may delete some prefix of vertices $\curveB_1,\ldots,\curveB_{i}$, thus 
$S_\curveB=\bigcup_i \{\{\curveB_{i+1}^i\}\cup V(\canon^e_i(\curveA,\curveB))\}$
Similarly, the optimal solution may delete some suffix of vertices from $\curveB$, and so $T_\curveB=\bigcup_i \{\{\curveB_i^\ell \mid 0\leq \ell\leq k\}\cup V(\canon^s_i(\curveA,\curveB))\}$. 

%Similarly, the optimal solution may delete some suffix of vertices from $\curveB$, and so we set $T_\curveB=V(\complex_\curveB)$. 
%
Calling \thmref{dag} computes the set of all pairs in $\curveA_m\times T_\curveB$ such that there are compliant paths from allowable starting vertices whose \Frechet distance is $\leq \thresh$.
We also check if $n + |\mv{\curveA}| \leq k$ to account for the extreme case that it suffices to delete all vertices of $\curveB$ and replace them with $\mv{\curveA}$.
If no such pair in $\curveA_m \times T_\curveB$ exists, and our additional check for complete replacement fails, then $\edistFr{\curveA}{\curveB}> k$. 
So let $x\in T_\curveB$ be such that $(\curveA_m, x)$ is reachable. If $x=\curveB_i^{\ell}$ for some $i$ and $\ell$, then it corresponds to $\ell$ edits to reach $\curveB_i$ followed by deleting $n-i$ vertices after $\curveB_i$, so $\alpha=\ell+(n-i)$ edits overall.
Otherwise, $x\in V(\canon^s_i(\curveA,\curveB))$, for some $i$, and is the last vertex on a copy of a curve $\gamma\in \canon^s_i(\curveA,\curveB)$ which was attached at some $\curveB_i^\ell$, in which case it corresponds to $\ell$ edits followed by $(n-i)$ deletions and $|\gamma|-1$ insertions, so $\alpha=\ell+(n-i)+|\gamma|-1$ edits overall. 
Thus for any $x\in T_\curveB$, if $\alpha\leq k$ then $\edistFr{\curveA}{\curveB}\leq k$, and if $\alpha> k$ for all $x\in T_\curveB$ (and our complete replacement check fails) then $\edistFr{\curveA}{\curveB}> k$.

%Otherwise, reaching any such pair $\curveA_m\times\curveB_i^{\alpha}$ corresponds to $\alpha$ edits to reach $\curveB_i$ followed by deleting $n-i$ vertices after $\curveB_i$. Thus for each such pair $\curveA_m\times\curveB_i^{\alpha}$ we check if $\alpha + (n-i)\leq k$, and if this holds for some pair then $\edistFr{\curveA}{\curveB}\leq k$, and otherwise $\edistFr{\curveA}{\curveB}> k$.

As for the running time, computing $\ecanon(\curveA,\curveB)$ takes $O(n^2m^4\log^2 m)$ time as $|\ecanon(\curveA,\curveB)|=O(n^2m^2)$ and each curve in $\ecanon(\curveA,\curveB)$ takes $O(m^2\log^2 m)$ time to compute using  \thmref{stabtime}.
However, observe that if we limit ourselves to $k$ edits, then we only need to compute the subset of $\ecanon(\curveA,\curveB)$ between pairs $\curveB_i, \curveB_j$ such that $j-(i+1)\leq k$, yielding a time of $O(knm^4\log^2 m)$.
(As before, this time is also sufficient to compute all $\canon^e_i(\curveA,\curveB)$.)
%Recall, however, that we are limiting to $k$ edits, and suppose that  $k=k_d+k_i$) where $k_d$ is the number of deletions and $k_i$ the number of insertions we make.  
%Thus we only need to compute the subset of $\ecanon(\curveA,\curveB)$ between pairs $\curveB_i, \curveB_j$ such that $j-(i+1)\leq k$, thus yielding a time of $O(knm^4\log^2 m)$.
%
The complex constructed for $\curveA$ has size $O(m)$, and the complex constructed for $\curveB$ has size $O(k^3nm^2)$. Thus the total time to construct the complexes and find nearby curves within them is $O(knm^3(k^2+m\log^2 m))$. 
As discussed above, $k=O(m+n)$.
%\br{Same issue here. Also, this then will affect the theorem 8 statement, and how the result is quoted in the intro.}
}
\RegVer{\continuousboth}
%However, observe that by our above discussed there is an optimal solution where the inserted vertices correspond to distinct subcurves of $\curveA$ and so one can assume $k=O(m)$, and thus $O(nm^2(k+m^2\log^2 m))=O(nm^4\log^2 m)$. We have the following summarizing theorem.

\begin{theorem}\thmlab{continuousboth}
Given curves $\curveA=\seq{\curveA_1,\ldots,\curveA_m}$ and $\curveB=\seq{\curveB_1,\ldots,\curveB_n}$ in $\Re^2$, a threshold $\thresh$, and an integer $k>0$, in $O(knm^3(k^2+m\log^2 m))$ time one can determine if $\edistFr{\curveA}{\curveB}\leq k$.
Moreover, one can compute $\edistFr{\curveA}{\curveB}$ in $O((m+n)^3nm^3)$ time.
\end{theorem}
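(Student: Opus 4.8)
The plan is to combine the deletion-only construction of \secref{cdo} with the insertion-only construction of \secref{cio} into a single weighted \DAG complex for $\curveB$, keep $\curveA$ itself as the other complex, and invoke \thmref{dag}. The key observation is that any sequence of edits applied to $\curveB$ may be reordered to perform all deletions first and all insertions afterwards. Hence, in the edited curve, a pair $\curveB_i,\curveB_j$ with $i<j$ that became consecutive after deletions may then have a fresh subcurve inserted between them, at a cost of $(j-(i+1))$ deletions plus the number of inserted vertices. Applying \corref{canon} (with \remref{extend}) to the curve that remains after the deletions shows it suffices to let the inserted subcurve range over an \emph{extended} canonical set $\ecanon(\curveA,\curveB)$, consisting of $\mv{\curveB_i,\curveB_{j},\clip{\curveB_i,\curveB_{j}, \curveA[\alpha,\beta]}}$ over all $i<j$ and all $\alpha<\beta$ with $\curveB_i$ (resp.\ $\curveB_j$) within distance $\thresh$ of $\curveA_\alpha\curveA_{\alpha+1}$ (resp.\ $\curveA_{\beta-1}\curveA_\beta$), together with the analogous prefix/suffix sets $\ecanon^e,\ecanon^s$ handling insertions before $\curveB_1$ or after $\curveB_n$.

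Concretely, I would take $k+1$ copies $\curveB^0,\dots,\curveB^k$, where the superscript of $\curveB_i^\ell$ counts edits charged so far. For each $\gamma\in\ecanon(\curveA,\curveB)$ joining $\curveB_i$ to $\curveB_j$ and each $\ell$ with $\ell+|\gamma|+(j-(i+1))\le k$, I splice a copy of $\gamma$ from $\curveB_i^\ell$ to $\curveB_j^{\ell+|\gamma|+(j-(i+1))}$, so that the jump in superscript equals the insertions plus deletions performed on that stretch; the prefix/suffix curves from $\ecanon^e,\ecanon^s$ are attached at the appropriate superscripts exactly as in \secref{cio}. The start set $S_\curveB$ is built from the vertices $\curveB_{i+1}^i$ (encoding a deleted prefix) together with the endpoints of the attached $\canon^e_i$ curves, and the target set $T_\curveB$ from all $\curveB_i^\ell$ (encoding a deleted suffix) together with the endpoints of the attached $\canon^s_i$ curves. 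One separate test for $n+|\mv{\curveA}|\le k$ covers the degenerate optimum that deletes all of $\curveB$ and inserts a brand-new minimum-vertex realization of $\curveA$. Running \thmref{dag} on $\curveA$ and this complex returns the reachable pairs $(\curveA_m,x)$ with $x\in T_\curveB$, and the total edit count is read off as $\ell+(n-i)$ if $x=\curveB_i^\ell$, or $\ell+(n-i)+|\gamma|-1$ if $x$ is the last vertex of an attached $\gamma\in\canon^s_i$; one checks whether any such count is $\le k$.

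For correctness I would argue, via the reorder-deletions-first observation combined with \corref{canon} and \remref{extend}, that if $\edistFr{\curveA}{\curveB}\le k$ then there is an optimal edited curve compliant with the complex whose superscript cost is at most $k$; conversely every compliant curve yields a genuine edit sequence of the counted size, so \thmref{dag} is sound. For the running time, $|\ecanon(\curveA,\curveB)|=O(n^2m^2)$ and each member is computed in $O(m^2\log^2 m)$ by \thmref{stabtime}, which is $O(n^2m^4\log^2 m)$; when only $k$ edits matter we discard all pairs with $j-(i+1)>k$, cutting this to $O(knm^4\log^2 m)$. The resulting $\curveB$-complex has size $O(k^3nm^2)$ and the $\curveA$-complex size $O(m)$, so \thmref{dag} runs in $O(k^3nm^3)$; summing gives the claimed $O(knm^3(k^2+m\log^2 m))$. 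Finally, by \obsref{empty} each inserted run forces a vertex of $\curveA$ into its image and $|\mv{\cdot}|=O(m)$, so an optimal solution uses $k=O(m+n)$ edits; substituting this bound (and noting the $\ecanon$-computation term is then dominated) yields the $O((m+n)^3nm^3)$ bound for computing $\edistFr{\curveA}{\curveB}$ exactly.

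The step I expect to be the main obstacle is justifying the restriction to $\ecanon(\curveA,\curveB)$ once both operations are in play: one must show that after the deletions the matched piece of $\curveA$ can always be taken to begin and end at the canonical $\enter{\cdot}$/$\leave{\cdot}$ points on edges of $\curveA$ --- which is precisely the clipping argument of \lemref{shortcut} --- and then check that this clipping is compatible with interleaving deletions and insertions, so that the superscript bookkeeping charges each edit exactly once and no cheaper compliant curve is overlooked. A secondary subtlety is getting $S_\curveB$ and $T_\curveB$ to account for deleted prefixes and suffixes without double counting, and ensuring the all-delete-then-rebuild solution, which does not pass through the $\curveB$-vertices of the complex, is caught by the separate $n+|\mv{\curveA}|\le k$ test.
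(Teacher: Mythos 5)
Your proposal matches the paper's own proof essentially step for step: the same reorder-deletions-first observation, the same extended canonical set $\ecanon(\curveA,\curveB)$ over all pairs $i<j$ with the prefix/suffix variants, the same spliced weighted \DAG complex with superscript jump $|\gamma|+(j-(i+1))$, the same start/target sets and the separate $n+|\mv{\curveA}|\le k$ check, and the identical running-time accounting (including the $j-(i+1)\le k$ pruning and the $k=O(m+n)$ bound). No meaningful differences from the paper's argument.
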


%%%%%%%%%%%%%%%%%%%%%%%%%%%%%%%%%%%%%%%%%
%%%%%%%%%%%%%%%%%%%%%%%%%%%%%%%%%%%%%%%%%

\newcommand{\subcont}{

\subsection{Substitutions}
\br{If we include this subsection, we need to modify all the text before it in the paper to say we are also considering substitution, and move this to appendix.}

Here we show how the approach above easily extends to the case where in addition to deletions and insertions, we also allow the edit operation of substitutions. First, observe that a substitution itself can be viewed as a deletion followed by an insertion, although the cost is now 1 edit not 2. As our approach in Section~\ref{cbapnd} was to model the different possible combinations of insertions and deletions of cost at most $k$, the same approach will work if we include substitutions but with different costs (or more precisely connecting to different layers in the \DAG complex).

More precisely, let $\gamma\in \ecanon$ be a curve such that we added $\seq{\curveB_i^\ell}\circ \gamma\circ \seq{\curveB_{j}^{\ell+|\gamma|+(j-(i+1))}}$ to the \DAG complex, for some $\ell$ such that $0\leq \ell<\ell+|\gamma|+(j-(i+1))\leq k$.  This represented deleting the $j-(i+1)$ vertices $\curveB_{i+1},\ldots,\curveB_{j-1}$ and inserting the $|\gamma|$ vertices of $\gamma$, for a total cost of $j-(i+1)+|\gamma|$. 
So let $\alpha=\min\{|\gamma|, j-(i+1)\}$. Then instead of first deleting $\curveB_{i+1},\ldots,\curveB_{i+\alpha}$ and then inserting $\gamma_1,\ldots, \gamma_\alpha$, instead we can substitute $\curveB_{i+x}$ with $\gamma_x$ for all $1\leq x\leq \alpha$, for a total saving of $\alpha$. This implies that instead of having $\gamma$ connect from  $\curveB_i^\ell$ to $\curveB_{j}^{\ell+|\gamma|+(j-(i+1))}$ in the \DAG complex, instead it should connect from $\curveB_i^\ell$ to $\curveB_{j}^{\ell+|\gamma|+(j-(i+1))-\alpha}$, where $\alpha=\min\{|\gamma|, j-(i+1)\}$. (Note this means will also insert some additional copies of $\gamma$ that were previously to expensive, as now we consider all $\ell$ such that $0\leq \ell<\ell+|\gamma|+(j-(i+1))-\alpha \leq k$.)

Thus we have the following analogue of \thmref{continuousboth}.

\br{Does $\edistFr{}{}$ now refer to insert+delete, or also substitution. This may mean updating notation in prior sections. Also for the theorem, when plugging in for $k$ we might get a slightly better bound as I believe k is at most $\max{n,m}$ rather than $n+m$.}

\begin{theorem}\thmlab{continuousall}
Given curves $\curveA=\seq{\curveA_1,\ldots,\curveA_m}$ and $\curveB=\seq{\curveB_1,\ldots,\curveB_n}$ in $\Re^2$, a threshold $\thresh$, and an integer $k>0$, in $O(knm^3(k+m\log^2 m))$ time one can determine if $\edistFr{\curveA}{\curveB}\leq k$.
Moreover, one can compute $\edistFr{\curveA}{\curveB}$ in $O((m+n)nm^3(n+m\log^2 m))$ time.
\end{theorem}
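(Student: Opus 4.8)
The plan is to merge the two \DAG-complex constructions already in hand: the complete weighted complex for deletions from \secref{cdo} and the insertion-weighted complex built from $\canon(\curveA,\curveB)$ in \secref{cio}. As before, $\curveA$ itself serves as one complex, and the work is to build a second complex $\complex_\curveB$ encoding every edited version of $\curveB$ together with a superscript copy index recording how many edits have been used so far. The one structural fact that makes this possible is that any edit sequence may be reordered to perform all deletions first and all insertions second; after the deletions $\curveB$ is some subsequence $\seq{\curveB_{i_1},\ldots,\curveB_{i_r}}$, and between consecutive retained vertices $\curveB_{i_s},\curveB_{i_{s+1}}$ a (possibly empty) subcurve is inserted. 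Hence I would first extend the canonicalization result \corref{canon}: replacing the adjacent pair $\curveB_i,\curveB_{i+1}$ throughout the proofs of \lemref{shortcut} and \corref{canon} by an arbitrary pair $\curveB_i,\curveB_j$ with $i<j$ — nothing in those arguments used adjacency — shows that an optimal solution need only insert, between retained vertices $\curveB_i$ and $\curveB_j$, a curve of the form $\mv{\curveB_i,\curveB_j,\clip{\curveB_i,\curveB_j,\curveA[\lfloor x\rfloor,\lceil y\rceil]}}$. This yields exactly the extended canonical set $\ecanon(\curveA,\curveB)$, together with the prefix/suffix analogues $\ecanon^e(\curveA,\curveB)$ and $\ecanon^s(\curveA,\curveB)$ from \remref{extend}, of size $O(n^2m^2)$, or $O(knm^2)$ if we keep only pairs with $j-(i+1)\le k$.

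Second, I would assemble $\complex_\curveB$ on $k+1$ copies $\curveB^0,\dots,\curveB^k$ of $\curveB$: for each $\gamma\in\ecanon(\curveA,\curveB)$ running from $\curveB_i$ to $\curveB_j$ and each feasible $\ell$, splice a fresh copy of $\gamma$ from $\curveB_i^\ell$ to $\curveB_j^{\ell+|\gamma|+(j-(i+1))}$, charging the $j-(i+1)$ deleted interior vertices and the $|\gamma|$ inserted ones; attach the curves of $\ecanon^e(\curveA,\curveB)$ so their last vertex coincides with the copy of some $\curveB_i$ reached after deleting the prefix $\curveB_1,\dots,\curveB_{i-1}$, and attach copies of the curves of $\ecanon^s(\curveA,\curveB)$ starting from each feasible copy of $\curveB_i$ to account for deleting the suffix $\curveB_{i+1},\dots,\curveB_n$. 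With $S_\curveB=\bigcup_i(\{\curveB_{i+1}^i\}\cup V(\canon^e_i(\curveA,\curveB)))$ and $T_\curveB=\bigcup_i(\{\curveB_i^\ell : 0\le\ell\le k\}\cup V(\canon^s_i(\curveA,\curveB)))$, I would invoke \thmref{dag} on $\curveA$ and $\complex_\curveB$; from a reachable pair $(\curveA_m,x)$ one reads off the total edit count ($\ell+(n-i)$ if $x=\curveB_i^\ell$, or $\ell+(n-i)+|\gamma|-1$ if $x$ ends a spliced suffix copy of $\gamma$ attached at $\curveB_i^\ell$), and $\edistFr{\curveA}{\curveB}\le k$ iff this is $\le k$ for some reachable $x$ — after separately checking the degenerate option of deleting all of $\curveB$ and inserting the global minimum-vertex curve, i.e.\ whether $n+|\mv{\curveA}|\le k$.

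Third, for the running time: in the bounded-$k$ version $\ecanon(\curveA,\curveB)$ has $O(knm^2)$ curves, each computed in $O(m^2\log^2 m)$ time by \thmref{stabtime}, for $O(knm^4\log^2 m)$ total; the complex for $\curveA$ has size $O(m)$ and $\complex_\curveB$ has size $O(k^3nm^2)$ (one extra factor of $k$ over the insertion-only complex because $\ecanon$ ranges over $O(kn)$ pairs rather than $O(n)$ adjacent pairs), so \thmref{dag} runs in $O(k^3nm^3)$ and the total is $O(knm^3(k^2+m\log^2 m))$. For the optimization version, recall $\edistFr{\curveA}{\curveB}\le m+n$ always (delete all of $\curveB$, insert all of $\curveA$); set $k=m+n$, use the full $\ecanon(\curveA,\curveB)$ (computed in $O(n^2m^4\log^2 m)$ time), and return the minimum edit count over reachable pairs and the full-replacement check. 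Substituting $k=O(m+n)$ into the decision bound and absorbing the lower-order $\log$-factor term gives $O((m+n)^3nm^3)$.

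The main obstacle is the first step, the canonicalization: one must verify that interleaving deletions with insertions does not create inserted subcurves escaping $\ecanon(\curveA,\curveB)$. The reordering argument reduces this to deletion-free canonicalization between now-possibly-nonadjacent retained vertices, and there the shortcut argument of \lemref{shortcut} together with the integer-snapping of the matched portion of $\curveA$ from the proof of \corref{canon} carries over verbatim. The remaining care is bookkeeping — getting the copy indices right so that reaching a given copy's vertex really certifies a solution using that many edits, and not overlooking the boundary cases handled by $\ecanon^e$, $\ecanon^s$, and the $\mv{\curveA}$ replacement check.
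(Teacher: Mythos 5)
Your construction is a sound proof of \thmref{continuousboth} --- the reordering of edits into deletions-then-insertions, the extension of the canonical set to non-adjacent pairs $\curveB_i,\curveB_j$, the layered complex with copy indices advancing by $|\gamma|+(j-(i+1))$, and the full-replacement check $n+|\mv{\curveA}|\leq k$ all match the paper's argument for that theorem. But the statement you were asked to prove is not \thmref{continuousboth}: it claims the strictly better bounds $O(knm^3(k+m\log^2 m))$ for the decision version and $O((m+n)nm^3(n+m\log^2 m))$ for the optimization version, whereas your own analysis (correctly, for your construction) arrives at $O(knm^3(k^2+m\log^2 m))$ and $O((m+n)^3nm^3)$. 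So as written the proposal establishes a weaker result than the one stated; the extra factor of $k$ (respectively, roughly $(m+n)/\log^2 m$) is not accounted for.

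The missing ingredient is that this theorem is the paper's extension to a third edit operation, \emph{substitution}, which costs one edit rather than the two (a deletion plus an insertion) that your bookkeeping charges. Concretely, when a canonical curve $\gamma$ is spliced between $\curveB_i^\ell$ and a copy of $\curveB_j$, one can pair up $\alpha=\min\{|\gamma|,\,j-(i+1)\}$ of the deleted interior vertices with $\alpha$ of the inserted vertices of $\gamma$ and realize those pairs as substitutions, so the target copy index should be $\ell+|\gamma|+(j-(i+1))-\alpha=\ell+\max\{|\gamma|,\,j-(i+1)\}$ rather than $\ell+|\gamma|+(j-(i+1))$. This both changes which edited curves are feasible within budget $k$ (so correctness of your reduction would need to be re-argued for the substitution-augmented edit distance) and is what drives the improved running-time bounds in the statement. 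Your proposal never mentions substitutions, so even granting every step, it proves a different theorem with different complexity; you would need to add the substitution accounting and then redo the size/time analysis of the resulting complex to recover the claimed bounds.
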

}

%%%%%%%%%%%%%%%%%%%%%%%%%%%%%%%%%%%%%%%%%%%%%%
%%%%%%%%%%%%%%%%%%%%%%%%%%%%%%%%%%%%%%%%%%%%%%

\section{Discrete \Frechet Distance}

We now discuss the discrete analogs \(\dedistDFr{\curveA}{\curveB}\), \(\iedistDFr{\curveA}{\curveB}\), and \(\edistDFr{\curveA}{\curveB}\) of the problems in the previous section.
The extra structure afforded by considering discrete point sequences allows us to more directly apply standard dynamic programming techniques and achieve faster running times for all three problems and in any constant dimension.

\subsection{Deletion Only}
The deletion only variant \(\dedistDFr{\curveA}{\curveB}\) serves as an easy warm up.
Let \(\dedistdp{i}{j} := \dedistDFr{\curveA[1,i]}{\curveB[1, j]}\) (with \(i = 0\) and \(j = 0\) denoting empty prefixes, and  \(\dedistdp{0}{0}=0\)).
Suppose there is a set of deletions changing \(\curveB[1,j]\) into a curve \(\curveB'\) such that \(\distDFr{\curveA[1,i]}{\curveB'} \leq \thresh\).

If \(i \geq 1\), then we must have \(j \geq 1\) as well.
Suppose further that \(||\curveB_j - \curveA_i|| \leq \thresh\).
Now, any monotone correspondence between \(\curveA[1,i]\) and \(\curveB'\) already includes or can be extended to include the pair \((\curveA_i,\curveB_j)\) without increasing the maximum distance of a pair beyond \(\thresh\).
Therefore, we may assume \(\curveB'\) ends with \(\curveB_j\).
As in the normal dynamic programming solution for the discrete \Frechet distance, we may further assume the rest of the correspondence matches all of curves \(\curveA[1,i]\) and \(\curveB'\) except for the last point of one or both of them.

%Now, suppose either \(i = 0\) and \(j\geq 1\), or \(||\curveB_j - \curveA_i|| > \thresh\).
%All monotone correspondences between \(\curveA[1,i]\) and \(\curveB'\) end with a pair containing the last point of both, so \(\curveB_j\) must be deleted to get a suitable \Frechet distance.
If \(i = 0\) and \(j\geq 1\) then clearly $\curveB_j$ must be deleted as there is no vertex of $\curveA$ to match it to. Similarly, if $i,j\geq 1$ and \(||\curveB_j - \curveA_i|| > \thresh\), then again $\curveB_j$ must be deleted as all monotone correspondences between \(\curveA[1,i]\) and \(\curveB'\) end with a pair containing the last point of both.

From the above discussion, we conclude
\[ \dedistdp{i}{j} =
\begin{cases}
    0 &\text{if \(i=0\) and \(j = 0\)}\\
    \infty &\text{if \(i \geq 1\) and \(j = 0\)}\\
    1 + \dedistdp{i}{j-1} &\text{if (\(i = 0\) and \(j \geq 1\))}\\
    &\quad\text{or (\(i,j \geq 1\) and \(||\curveB_j - \curveA_i|| > \thresh\))}\\

    \min \left\{
        \begin{aligned}
        \dedistdp{i}{j-1},\\
        \dedistdp{i-1}{j},\\
        \dedistdp{i-1}{j-1}
        \end{aligned}\right\} &\text{otherwise}
\end{cases}.\]
\(\dedistDFr{\curveA}{\curveB} = \dedistdp{m}{n}\) can be computed easily in \(O(mn)\) time using this recurrence.
\begin{theorem}
    Given curves $\curveA=\seq{\curveA_1,\ldots,\curveA_m}$ and $\curveB=\seq{\curveB_1,\ldots,\curveB_n}$ in \(\mathbb{R}^d\) and a threshold $\thresh$, one can compute $\dedistDFr{\curveA}{\curveB}$ in $O(mn)$ time.
\end{theorem}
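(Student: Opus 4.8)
The plan is to solve the problem by dynamic programming over prefixes of the two curves, exactly with the table $\dedistdp{i}{j} := \dedistDFr{\curveA[1,i]}{\curveB[1, j]}$ introduced above (using $i=0$ or $j=0$ for empty prefixes and $\dedistdp{0}{0}=0$). The final answer is $\dedistDFr{\curveA}{\curveB} = \dedistdp{m}{n}$. First I would establish the base cases: $\dedistdp{0}{0}=0$ trivially, and $\dedistdp{i}{0}=\infty$ for $i\ge 1$, since deletions on $\curveB$ can never introduce a vertex to be matched against a nonempty $\curveA[1,i]$ (the discrete \Frechet distance between a nonempty curve and the empty curve is undefined/infinite). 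These fix the boundary of the table.

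Next I would justify the recurrence for $i,j\ge 1$ by a two-directional argument. For the ``$\le$'' (achievability) direction, one checks that each branch of the recurrence corresponds to a valid deletion sequence: the ``$1+\dedistdp{i}{j-1}$'' branch deletes $\curveB_j$ and recursively solves on $\curveB[1,j-1]$; the three branches inside the $\min$ correspond to extending an optimal correspondence on a smaller prefix pair by matching $\curveA_i$ with $\curveB_j$ (possible precisely because $||\curveB_j-\curveA_i||\le\thresh$ in that case), advancing on $\curveA$ only, on $\curveB$ only, or on both. For the ``$\ge$'' (optimality) direction, I would take an optimal set of deletions turning $\curveB[1,j]$ into $\curveB'$ with $\distDFr{\curveA[1,i]}{\curveB'}\le\thresh$ and look at the last vertex of $\curveB[1,j]$. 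If $\curveB_j$ is deleted, the cost is $1+$ (an optimal solution for the pair $\curveA[1,i],\curveB[1,j-1]$), giving the first $\min$-argument of the forced-deletion branch. If $\curveB_j$ is kept, then it is the last vertex of $\curveB'$, so any monotone correspondence ends with a pair $(\curveA_i,\curveB_j)$; this forces $||\curveB_j-\curveA_i||\le\thresh$ (hence the forced-deletion branch whenever this fails), and stripping this last pair from the correspondence — depending on whether it also consumed the last vertex of $\curveA[1,i]$, of $\curveB'$, or of both — exhibits a solution for one of $(\,i,j-1)$, $(i-1,j)$, $(i-1,j-1)$ with the same number of deletions. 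This gives the ``otherwise'' branch. Combining both directions yields equality.

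Finally, the running time: the table has $(m+1)(n+1)$ entries, each computed in $O(1)$ time (one distance comparison in $\mathbb{R}^d$ for constant $d$, plus a constant number of table lookups and a min), and the dependency pattern (each $\dedistdp{i}{j}$ depends only on entries with smaller $i+j$) admits a straightforward row-major evaluation order, so the total time is $O(mn)$. The main obstacle — and the only genuinely nontrivial point — is the optimality direction of the recurrence, specifically arguing the ``exchange'' step that when $\curveB_j$ is retained we may assume without loss of generality that the realizing correspondence pairs it with $\curveA_i$ and that deleting this pair leaves a valid (and no more costly) sub-solution; this mirrors the standard discrete \Frechet dynamic program but must be checked carefully in the presence of deletions.
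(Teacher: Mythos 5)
Your proposal is correct and follows essentially the same route as the paper: the identical table $\dedistdp{i}{j}=\dedistDFr{\curveA[1,i]}{\curveB[1,j]}$, the same case split on whether $\curveB_j$ is deleted or kept (with the same exchange argument that a retained $\curveB_j$ must end the correspondence paired with $\curveA_i$, forcing deletion when $||\curveB_j-\curveA_i||>\thresh$), and the same $O(mn)$ evaluation. The only difference is presentational — you spell out both inequality directions explicitly where the paper argues more informally — and your observation that the explicit "delete $\curveB_j$" option is dominated by the $\dedistdp{i}{j-1}$ term in the otherwise-branch is consistent with the paper's recurrence.
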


\RegVer{\subsection{Insertions Only}}

\SOCGVer{\subsection{Insertions}\seclab{dinsert}}

We now consider the insertion only variant \(\iedistDFr{\curveA}{\curveB}\).
Let \(\iedistdp{i}{j} := \iedistDFr{\curveA[1,i]}{\curveB[1, j]}\).
As before, assume there is a set of insertions changing \(\curveB[1,j]\) to \(\curveB'\) where \(\distDFr{\curveA[1,i]}{\curveB'}\leq \thresh\).

Suppose \(\curveB'\) ends with \(\curveB_j\), implying \(||\curveB_j - \curveA_i|| \leq \thresh\).
(It is important to note for later that if \(||\curveB_j - \curveA_i|| \leq \thresh\) it does not imply \(\curveB'\) ends with \(\curveB_j\).)
%In this case, we then
We get the three standard cases for computing the discrete \Frechet distance as before.

Now suppose \(\curveB'\) does not end with \(\curveB_j\) and instead ends with a newly inserted point.
Let \(x\) denote this final point of \(\curveB'\).
There exists some \(k \in \{1, \dots, i\}\) such that the monotone correspondence with maximum distance at most \(\thresh\) between \(\curveB'\) and \(\curveA[1,i]\) ends with pairs between points of \(\seq{\curveA_k, \dots, \curveA_i}\) and \(x\).
These points of \(\seq{\curveA_k, \dots, \curveA_i}\) all live in \(\ball(x,\thresh)\), the ball of radius \(\thresh\) centered at \(x\).
%It will prove useful for our recurrence to know all such \(k\) for which the latter condition holds.
%\br{Above, rather than saying some such $k$ exists, don't we claim it is the min such k? Thus motivating $\mu(i)$. Moreover, in the DP below, rather than  $\min_{\suffixmeb(i) \leq k \leq i} \iedistdp{k-1}{j}$, can't we just use $\iedistdp{\suffixmeb(i)-1}{j}$. Perhaps this doesn't improve anything though, and maybe you are worried something might go wrong?}
%\ef{I don't see why we can make a greedy choice here.And this is how I usually present the choice of parameter when no particular choice is obviously best.}
%
Accordingly, let \(\suffixmeb(i)\) denote the smallest \(t \in \{1, \dots, i\}\) such that the radius of the minimum enclosing ball of \(\seq{\curveA_t, \dots, \curveA_i}\) is at most \(\thresh\).
We may assume \(x\) is the center of the ball defining \(\suffixmeb(i)\) and that \(\suffixmeb(i) \leq k \leq i\).
We have the following recurrence. 
\[ \iedistdp{i}{j} =
\begin{cases}
    0 &\text{if \(i = 0\) and \(j =0\)}\\
    \infty &\text{if \(i = 0\) and \(j \geq 1\)}\\
    1 + \min_{\suffixmeb(i) \leq k \leq i} \iedistdp{k-1}{j} &\text{if (\(i \geq 1\) and \(j = 0\))}\\
    &\quad\text{or (\(i,j \geq 1\) and \(||\curveB_j - \curveA_i|| > \thresh\))}\\
    \min\left\{
        \begin{aligned}
            \iedistdp{i}{j-1},\\
            \iedistdp{i-1}{j},\\
            \iedistdp{i-1}{j-1},\\
            1 + \min_{\suffixmeb(i) \leq k \leq i} \iedistdp{k-1}{j}
        \end{aligned}
    \right\} &\text{otherwise}
\end{cases}. \]
%\ef{Old writing claimed the insert a point case should ask for \(\iedistdp{i - k}{j}\). Please double check this recurrence.}
%\br{I would have to see the old writeup, but I assume "i-k" was under the assumption that k was the number in the MEB, rather than the index in $\pi$}
%\begin{definition}
%Given a curve \(\curveA = \seq{\curveA_1, \dots, \curveA_m}\), an integer \(i \in \{1, \dots, m\}\), and \(\thresh \geq 0\), the \(\thresh\)-close suffix start of \(\curveA\) with respect to \(i\), denoted \(\suffixmeb(i)\), is the smallest \(t \in \{1, \dots, i\}\) such that the radius of the minimum enclosing ball of \(\seq{\curveA_t, \dots, \curveA_i}\) is at most \(\thresh\).
%\end{definition}
\SOCGVer{After \(O(m^2)\) preprocessing time and through careful use of simple data structures, we are able to solve all the relevant subproblems in \(O(mn)\) time.
A similar recurrence and dynamic programming strategy works for \(\edistDFr{\curveA}{\curveB}\).
See the full version for details.}

\newcommand{\discreteinsertonly}{
To efficiently implement the dynamic programming algorithm for insertions, we will first require some lemmas. 

\begin{lemma}
\label{lem:compute_suffixmeb}
We can compute \(\suffixmeb(i)\) for all \(i \in \{1, \dots, m\}\) in \(O(m^2)\) time assuming the dimension \(d\) is a constant.
\end{lemma}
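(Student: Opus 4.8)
The plan is to sweep $i$ from $1$ to $m$ while maintaining a window index $t$, leaning on two monotonicity facts. First, for a fixed $i$ the radius of the minimum enclosing ball (MEB) of $\seq{\curveA_t,\ldots,\curveA_i}$ is non-increasing in $t$, since increasing $t$ only drops points, and the MEB of a subset has radius no larger than that of the superset. Hence the set of indices $t$ for which this radius is at most $\thresh$ is exactly $\{\suffixmeb(i),\ldots,i\}$, and $\suffixmeb(i)$ is well defined as its minimum (it is nonempty because the single point $\curveA_i$ has MEB radius $0\leq \thresh$). Second, $\suffixmeb$ is non-decreasing in $i$: if $t\leq i$ and the MEB of $\seq{\curveA_t,\ldots,\curveA_{i+1}}$ has radius at most $\thresh$, then so does that of the subset $\seq{\curveA_t,\ldots,\curveA_i}$, so $t\geq \suffixmeb(i)$; applying this with $t=\suffixmeb(i+1)$ gives $\suffixmeb(i+1)\geq \suffixmeb(i)$.

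Given these, the algorithm I would run is: initialize $t\leftarrow 1$; then for $i=1,\ldots,m$, while the MEB of $\seq{\curveA_t,\ldots,\curveA_i}$ has radius exceeding $\thresh$, increment $t$, and finally set $\suffixmeb(i)\leftarrow t$. By the first monotonicity fact the inner loop stops exactly when $t$ first reaches $\suffixmeb(i)$, and by the second fact the value of $t$ carried over from iteration $i-1$ is at most $\suffixmeb(i)$, so the loop never overshoots and the output is correct; the loop always terminates since $t=i$ makes the window a single point. The one mild subtlety I would flag is that each MEB is recomputed from scratch rather than maintained under insertions/deletions; this is fine precisely because the two-pointer sweep triggers only $O(m)$ recomputations in total, whereas a per-$i$ binary search would need $\Theta(\log m)$ MEB calls per $i$ and yield only $O(m^2\log m)$. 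So the cross-$i$ monotonicity of $\suffixmeb$ is what removes the log factor, and this is the key point rather than any real obstacle.

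For the running time I would invoke the standard fact that the MEB of $N$ points in $\Re^d$ with $d=O(1)$ is computable in $O(N)$ time (e.g.\ Welzl's algorithm in expectation, or a deterministic bounded-dimension LP-type algorithm). Each MEB call is on a window of at most $m$ points, hence costs $O(m)$. The number of calls is $O(m)$: one per outer iteration for the first test of the while condition, plus one per increment of $t$, and $t$ is monotone non-decreasing and bounded by $m$, so at most $m$ increments occur. Multiplying gives a total of $O(m^2)$ time, as claimed.
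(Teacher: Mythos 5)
Your proposal is correct and follows essentially the same approach as the paper: a two-pointer sweep exploiting the monotonicity of \(\suffixmeb\) in \(i\), recomputing each minimum enclosing ball from scratch in \(O(m)\) time, with the amortized count of \(O(m)\) MEB calls giving \(O(m^2)\) total. Your additional observation that the feasible set of \(t\) for fixed \(i\) is an interval is a nice explicit justification that the inner loop stops at exactly \(\suffixmeb(i)\), but the algorithm and analysis match the paper's.
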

\begin{proof}
We first observe that $\suffixmeb(1)\leq \ldots \leq \suffixmeb(m)$, because any ball enclosing \(\seq{\curveA_t, \dots, \curveA_i}\) also encloses \(\seq{\curveA_t, \dots, \curveA_{i-1}}\).

We compute the individual \(\suffixmeb(i)\) in increasing order of \(i\).
Suppose that we have computed $\suffixmeb(1), \ldots, \suffixmeb(i-1)$ and we are about to compute $\suffixmeb(i)$.
Set $t := \suffixmeb(i-1)$, and consider the minimum enclosing ball of $p_t, \ldots, p_i$.
If the radius is at most \(\thresh\) then \(\suffixmeb(i)\) being non-decreasing in \(i\) implies \(\suffixmeb(i) = t\).
If the radius is greater than \(\thresh\), then we conclude that $\suffixmeb(i) > t$.
Accordingly, we compute the radii of minimum enclosing balls for \(\seq{p_{t'}, \dots, p_i}\) for each \(t' \geq t\) until we find the smallest \(t'\) such that the radius is at most \(\thresh\).

Overall, our algorithm computes $O(m)$ minimum enclosing balls, because each time we compute a ball, we either increase $t'$ or $i$.
A single minimum enclosing ball over \(m\) points in \(\mathbb{R}^d\) can be computed in \(\Theta(m)\) time~\cite{m-lpltd-84,c-idalp-18}.
Therefore, we spend $O(m^2)$ time in total. 
\end{proof}

Naively, most of the \(O(mn)\) subproblems require \(\Omega(m)\) time to solve, even after precomputing all \(\suffixmeb(i)\).
However, we can take advantage of the following result that we believe is best attributed to forklore.

\begin{lemma}
    \label{lem:queue_with_min}
    Given a universe of elements with priorities, one can augment a standard first-in-first-out queue so that it can return its minimum priority element.
    Finding the minimum priority element and dequeing from the front of the queue take \(O(1)\) time in the worst case.
    Enqueuing a new element in the back of the queue takes \(O(1)\) amortized time.
    \end{lemma}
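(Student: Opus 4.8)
The plan is to maintain, alongside the FIFO queue itself (stored in the obvious way as a singly linked list with head and tail pointers, so that ordinary enqueue and dequeue already take \(O(1)\) worst-case time), a single auxiliary \emph{monotone deque} \(D\). The invariant is that \(D\), read from front to back, is a subsequence of the current live queue contents \emph{in queue order} whose priorities are nondecreasing, and moreover that the front element of \(D\) is always a minimum-priority element currently in the queue. Given this invariant, the minimum-priority query simply returns the priority of \(D\)'s front (or reports the queue empty), in \(O(1)\) worst-case time.

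To enqueue a new element \(x\) with priority \(p\): first append \(x\) at the tail of the linked list; then repeatedly pop from the \emph{back} of \(D\) every element whose priority is at least \(p\); finally push \(x\) onto the back of \(D\). Each element \(y\) popped in this step occurs strictly before \(x\) in the queue and has priority at least \(p\), so for as long as \(x\) remains in the queue, \(y\) can never be the front (minimum) element of \(D\); hence discarding \(y\) preserves the ``front is a global minimum of the live queue'' property, which one checks by a short induction on the operation sequence. To dequeue: remove and return the head \(y\) of the linked list, and if \(y\) is the same occurrence as the front element of \(D\), also pop \(D\)'s front; this takes \(O(1)\) worst-case time and clearly keeps \(D\) a queue-ordered subsequence of the live contents with its front still a minimum.

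For the running-time bounds it remains only to analyze enqueue, the lone operation that does non-constant work in the worst case. Using the potential \(\Phi = |D| \ge 0\): an enqueue that pops \(c\) elements off the back of \(D\) has actual cost \(O(c+1)\) and changes the potential by \(1 - c\), for amortized cost \(O(1)\); dequeue and the minimum query have \(O(1)\) actual cost and do not increase \(\Phi\). Equivalently, over its lifetime each element is pushed onto \(D\) exactly once and popped from \(D\) at most once, so the total work on \(D\) across any sequence of operations is linear in their number. This gives the claimed \(O(1)\) amortized enqueue and \(O(1)\) worst-case dequeue and minimum.

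The only real subtlety I anticipate is stating and verifying the invariant precisely: that the front of \(D\) is a minimum over the \emph{entire} live queue and not merely over the elements currently stored in \(D\), that this survives arbitrary interleavings of enqueues and dequeues, and that ties in priority are handled consistently — we break ties toward the earlier-inserted element by popping the back of \(D\) on ``priority at least \(p\)'' rather than ``priority greater than \(p\).'' Everything past that is routine bookkeeping. (One could alternatively implement the queue from two ``stacks with running minimum,'' but that yields \(O(1)\) only amortized for dequeue, so the monotone-deque implementation is the one matching the stated worst-case bound.)
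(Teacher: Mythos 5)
Your proposal is correct and is essentially the same construction as the paper's: the auxiliary monotone deque is exactly the paper's linked list of "potential future minima," maintained by popping dominated elements off the back on enqueue, with the same each-element-is-popped-at-most-once amortization. The only cosmetic difference is the tie-breaking convention (you pop on "at least" where the paper pops on "strictly greater"), and both are valid.
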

    \begin{proof}
    In addition to the normal queue data structure, we keep an additional doubly-linked list of elements that may at some point in the future become the element of minimum priority.
    The head of the list is the minimum priority element.
    Each element \(e\) in the list is succeeded by the minimum priority element inserted after \(e\).
    To find the minimum priority element of the whole queue at any time, we simply return the head of the list.
    To dequeue, we remove the element from the queue, and if this element was also the element at the head of the list then we additionally delete the element at the head of the list.
    Finally, to enqueue an element \(e\), we search the list in backwards order starting from its tail, deleting each element of priority greater than \(e\) until either the list becomes empty or we find an element \(e'\) of priority less than or equal to that of \(e\).
    In the former case, \(e\) becomes the head and sole member of the list.
    In the latter case, \(e'\) is succeeded by \(e\).
    
    The first two operations take \(O(1)\) time in the worst case.
    Each element can be removed from the list at most once, so enqueing takes \(O(1)\) amortized time.
    \end{proof}

We compute all \(\iedistdp{i}{j}\) in \(j\)-major order.
Fix any \(j\).
To compute the \(\iedistdp{i}{j}\), we create a new instance of Lemma~\ref{lem:queue_with_min}'s data structure.
Suppose we have just computed \(\iedistdp{i-1}{j}\).
We assume inductively that the queue contains as its elements all \(k \in \{\suffixmeb(i-1), \dots, i - 1\}\) with \(\suffixmeb(i-1)\) at the front where each \(k\) has priority \(\iedistdp{k-1}{j}\).
We dequeue all \(k \in \{\suffixmeb(i-1), \dots, \suffixmeb(i) - 1\}\) and enqueue \(i\) with priority \(\iedistdp{i-1}{j}\).
We can now evaluate all the cases for our fixed \(j\) in \(O(m)\) time total using the data structure.
}

\RegVer{
\discreteinsertonly
Considering all of the above, we conclude the following.}
\begin{theorem}\thmlab{discreteinsertion}
    Given curves $\curveA=\seq{\curveA_1,\ldots,\curveA_m}$ and $\curveB=\seq{\curveB_1,\ldots,\curveB_n}$ in \(\mathbb{R}^d\) for constant \(d\) and a threshold $\thresh$, one can compute $\iedistDFr{\curveA}{\curveB}$ in $O(m^2 + mn)$ time.
\end{theorem}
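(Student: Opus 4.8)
The plan is to first confirm the correctness of the displayed recurrence for $\iedistdp{i}{j}$ and then to implement the induced dynamic program within the claimed bound. Correctness follows from the case analysis sketched above: fix an optimal edited curve $\curveB'$ for the prefixes $\curveA[1,i]$, $\curveB[1,j]$. Either $\curveB'$ ends with $\curveB_j$ (forcing $\|\curveB_j-\curveA_i\|\le\thresh$), which yields the three standard discrete \Frechet transitions, or $\curveB'$ ends with a freshly inserted point $x$. In the latter case the maximal suffix $\seq{\curveA_k,\dots,\curveA_i}$ matched to $x$ lies in $\ball(x,\thresh)$, so we may take $x$ to be the center of the minimum enclosing ball witnessing $\suffixmeb(i)$ and let $k$ range over $\{\suffixmeb(i),\dots,i\}$, contributing the term $1+\min_{\suffixmeb(i)\le k\le i}\iedistdp{k-1}{j}$; this term is also the only option when $i\ge 1,j=0$ or when $\|\curveB_j-\curveA_i\|>\thresh$. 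The boundary values $\iedistdp{0}{0}=0$ and $\iedistdp{0}{j}=\infty$ for $j\ge 1$ are immediate, and $\iedistDFr{\curveA}{\curveB}=\iedistdp{m}{n}$.

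The first running-time ingredient is computing $\suffixmeb(i)$ for all $i$. Here I would exploit monotonicity, $\suffixmeb(1)\le\cdots\le\suffixmeb(m)$, which holds since any ball enclosing $\seq{\curveA_t,\dots,\curveA_i}$ also encloses the shorter suffix $\seq{\curveA_t,\dots,\curveA_{i-1}}$. Sweeping $i$ upward while advancing a left pointer $t$ only forward, and computing the minimum enclosing ball of the current suffix each time we advance either pointer, we perform $O(m)$ enclosing-ball computations; since each costs $O(m)$ time in constant dimension (linear-time LP-type algorithms), this step runs in $O(m^2)$ time, giving Lemma~\ref{lem:compute_suffixmeb}.

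The main obstacle is evaluating all $O(mn)$ cells quickly, because the term $\min_{\suffixmeb(i)\le k\le i}\iedistdp{k-1}{j}$ is a sliding-window minimum and naive evaluation costs $\Omega(m)$ per cell. My plan is to process cells in $j$-major order and, for each fixed $j$, maintain a FIFO queue augmented to report its minimum-priority element (Lemma~\ref{lem:queue_with_min}, the folklore "monotone list" trick: $O(1)$ worst-case for minimum and dequeue, $O(1)$ amortized for enqueue). When moving from $i-1$ to $i$ we dequeue the now-excluded front indices $\{\suffixmeb(i-1),\dots,\suffixmeb(i)-1\}$ and enqueue $i$ with priority $\iedistdp{i-1}{j}$; monotonicity of $\suffixmeb$ guarantees the window's left endpoint never moves backward, so this is a valid queue usage, and the queue holds exactly the indices $k\in\{\suffixmeb(i),\dots,i\}$ with priorities $\iedistdp{k-1}{j}$ when cell $(i,j)$ is evaluated (these priorities are already available since $k-1<i$ and we are in $j$-major order). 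Each cell is then resolved in $O(1)$ amortized time, so a fixed $j$ costs $O(m)$ and all of them cost $O(mn)$; adding the $O(m^2)$ preprocessing yields the claimed $O(m^2+mn)$ bound. The only point I would double-check carefully is precisely this invariant on the queue contents across the increment of $i$.
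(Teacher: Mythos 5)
Your proposal is correct and follows essentially the same route as the paper: the same case analysis justifying the recurrence (with $x$ taken as the center of the ball witnessing $\suffixmeb(i)$), the same two-pointer sweep exploiting monotonicity of $\suffixmeb$ with $O(m)$ linear-time minimum-enclosing-ball computations, and the same min-augmented FIFO queue for the sliding-window minimum processed in $j$-major order. The queue invariant you flag for double-checking is exactly the one the paper maintains, and it holds because $\suffixmeb$ is non-decreasing.
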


\newcommand{\discreteboth}{
\subsection{Insertion and Deletion}

Handling both insertions and deletions can be done by simply combining the two sets of cases described previously.
Let \(\edistdp{i}{j} := \edistDFr{\curveA[1,i]}{\curveB[1,j]}\).
The \(\infty\) base cases are avoided, because we can always delete the last point of \(\curveB[1,i]\) or insert a new point into the empty curve \(\curveB[1,0]\).
\[ \edistdp{i}{j} =
\begin{cases}
    0 &\text{if \(i = j = 0\)}\\
    1 + \edistdp{i}{j-1} &\text{if \(i = 0\) and \(j \geq 1\)}\\
    1 + \min_{\suffixmeb(i) \leq k \leq i} \edistdp{k-1}{j} &\text{if \(i \geq 1\) and \(j = 0\)}\\
    \min\left\{
        \begin{aligned}
            1 + \edistdp{i}{j-1},\\
            1 + \min_{\suffixmeb(i) \leq k \leq i} \edistdp{k-1}{j}
        \end{aligned}
    \right\} &\text{if \(i,j \geq 1\) and \(||\curveB_j - \curveA_i|| > \thresh\)}\\
    \min\left\{
        \begin{aligned}
            \edistdp{i}{j-1},\\
            \edistdp{i-1}{j},\\
            \edistdp{i-1}{j-1},\\
            1 + \min_{\suffixmeb(i) \leq k \leq i} \edistdp{k-1}{j}
        \end{aligned}
    \right\} &\text{otherwise}
\end{cases}. \]

We again use Lemma~\ref{lem:compute_suffixmeb} to evaluate each $\suffixmeb(i)$ quickly, and the data structure of Lemma~\ref{lem:queue_with_min} to evaluate the subproblems quickly.
}
\RegVer{\discreteboth}

\begin{theorem}\thmlab{discreteboth}
Given curves $\curveA=\seq{\curveA_1,\ldots,\curveA_m}$ and $\curveB=\seq{\curveB_1,\ldots,\curveB_n}$ in \(\mathbb{R}^d\) for constant \(d\) and a threshold $\thresh$, one can compute $\edistDFr{\curveA}{\curveB}$ in $O(m^2 + mn)$ time.
\end{theorem}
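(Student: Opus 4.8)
The plan is to fuse the deletion-only recurrence and the insertion-only recurrence from the two preceding subsections into a single recurrence for $\edistdp{i}{j} := \edistDFr{\curveA[1,i]}{\curveB[1,j]}$, and then reuse the already-developed data structures for the running time. As in both restricted cases, I would fix an optimal edit sequence turning $\curveB[1,j]$ into a curve $\curveB'$ with $\distDFr{\curveA[1,i]}{\curveB'}\le\thresh$, fix a witnessing monotone correspondence, and case on the last vertex of $\curveB'$. Either $\curveB'$ ends with $\curveB_j$ (which forces $\|\curveB_j-\curveA_i\|\le\thresh$), and then the correspondence matches off all but the last vertex of one or both of $\curveA[1,i]$ and $\curveB'$, yielding the three standard Fr\'echet transitions $\edistdp{i}{j-1}$, $\edistdp{i-1}{j}$, $\edistdp{i-1}{j-1}$; or $\curveB_j$ was deleted, yielding $1+\edistdp{i}{j-1}$; or $\curveB'$ ends with a newly inserted vertex $x$ matched to a suffix $\seq{\curveA_k,\dots,\curveA_i}$, in which case $x$ may be taken to be the center of the minimum enclosing ball of $\seq{\curveA_k,\dots,\curveA_i}$, so it suffices to consider $\suffixmeb(i)\le k\le i$ and charge $1+\min_{\suffixmeb(i)\le k\le i}\edistdp{k-1}{j}$. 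I would verify these cases are exhaustive and non-overlapping by exactly the arguments already given for the two restricted variants, adding only the observation that an arbitrary interleaving of insertions and deletions can be normalized, so that the edits attributable to each maximal block are captured by exactly one recurrence option.

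Next I would pin down the base cases, noting that — unlike the restricted variants — there are no $\infty$ entries: $\edistdp{0}{0}=0$; for $j\ge 1$, $\edistdp{0}{j}=1+\edistdp{0}{j-1}$ since the empty prefix of $\curveA$ forces every vertex of $\curveB[1,j]$ to be deleted; and for $i\ge 1$, $\edistdp{i}{0}=1+\min_{\suffixmeb(i)\le k\le i}\edistdp{k-1}{0}$ since matching $\curveA[1,i]$ to the empty curve requires insertions, the last of which covers a suffix of $\curveA$. In the general entry the three standard transitions are dropped precisely when $\|\curveB_j-\curveA_i\|>\thresh$ (so $\curveB_j$ cannot serve as the matched last vertex of $\curveB'$), leaving only the deletion and insertion options; otherwise the recurrence is the minimum of all five options. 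This is exactly the displayed recurrence above, and I would then prove it computes $\edistDFr{\curveA}{\curveB}=\edistdp{m}{n}$ by a straightforward induction on $i+j$.

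For the running time I would reuse the two lemmas from the insertion-only subsection without change. Lemma~\ref{lem:compute_suffixmeb} precomputes $\suffixmeb(i)$ for all $i$ in $O(m^2)$ time (using that $d$ is constant), and then the $O(mn)$ table entries are filled in $j$-major order. For each fixed $j$, the quantities $\min_{\suffixmeb(i)\le k\le i}\edistdp{k-1}{j}$ over $i=1,\dots,m$ are obtained by sliding a window in $k$ whose left endpoint $\suffixmeb(i)$ is non-decreasing in $i$, and maintaining the window in the minimum-augmented FIFO queue of Lemma~\ref{lem:queue_with_min}; each $k$ is enqueued and dequeued at most once, so the whole column costs $O(m)$ and all columns cost $O(mn)$. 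Adding the preprocessing gives the claimed $O(m^2+mn)$ bound.

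I expect the main obstacle to be the correctness argument for case-exhaustiveness in the combined setting: specifically, justifying that an arbitrary sequence of interleaved insertions and deletions can be reordered so that each recurrence option accounts for exactly the edits of one block, and confirming that the minimum-enclosing-ball reduction for the final inserted vertex remains valid when that vertex is the last of a block containing both deletions and insertions, not just insertions. Once this normalization is established, the remaining verification and the data-structure bookkeeping are essentially identical to the deletion-only and insertion-only cases.
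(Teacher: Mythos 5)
Your proposal matches the paper's proof essentially exactly: the paper also combines the deletion-only and insertion-only recurrences into a single recurrence for $\edistdp{i}{j}$ (with the same base cases, the same observation that the $\infty$ cases disappear, and the same use of Lemma~\ref{lem:compute_suffixmeb} and the min-augmented queue of Lemma~\ref{lem:queue_with_min} to get $O(m^2+mn)$ time). The normalization worry you raise resolves itself because the case analysis is on the identity of the last vertex of the edited curve $\curveB'$ rather than on blocks of edit operations, so the minimum-enclosing-ball argument for a final inserted vertex is unaffected by the presence of deletions.
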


\section{Hardness}
\label{sec:hardness}
% Hardness Macros %
\newcommand{\SATI}{I}    % Instance of 3SAT problem
\newcommand{\SATv}{v} % Number of variables in 3SAT problem
\newcommand{\SATc}{c} % Number of clauses in 3SAT problem
\newcommand{\BoolVar}{\mathcal{X}}  %Boolean variables for 3SAT.
\newcommand{\YES}{CHANGE}  %Set of decision problems whose answer is "Yes"
\newcommand{\FALSE}{\textsf{False}\xspace}
\newcommand{\TRUE}{\textsf{True}\xspace}
% End Hardness Macros %

In this section we prove that a number of variants of the weak edit \Frechet distance are NP-hard. For these variants we will first focus on the discrete \Frechet distance case, showing NP-hardness even when the curves are restricted to points in $\Re^1$. Afterwards we show how the NP-hardness proofs easily extend to the continuous case for curves in $\Re^2$. All the NP-hardness proofs will be by a reduction from 3SAT, inspired by the reduction in \cite{blopuv-cfducod-23}.

\RegVer{
First, we prove NP-hardness of weak discrete edit \Frechet distance, where edits are restricted to deletions ($\dedistDFrW{\curveA}{\curveB}$). For this case we prove the problem is NP-hard with unlimited deletions on one curve, as well as limited deletions on one or both curves. We then show weak edit \Frechet distance restricted to limited insertions on one curve ($\iedistDFrW{\curveA}{\curveB}$) is NP-hard, which easily combines with the prior findings to show that limited insertions and deletions on one curve ($\edistDFrW{\curveA}{\curveB}$) is also hard.
}

For this section, let $\curveA$ and $\curveB$ be polygonal curves in $\Re^1$ unless otherwise stated, and let $\delta=1$ be the given threshold with no loss to generality. Since $\curveA$ and $\curveB$ are curves in $\Re^1$, we directly label column $i$ (resp.\ row $j$) of the free space with $\curveA_i$ (resp.\ $\curveB_j$). When modifications are restricted to one curve, they will be on $\curveB$, which then becomes $\curveB'$. 
%and to stay consistent with \cite{blopuv-cfducod-23}, we will have it be the vertical axis in figures. 
We also define an arbitrary 3SAT instance as $\SATI$, with $\SATc$ clauses and $\SATv$ variables.

\subsection{Abstract Framework}
\seclab{abstract}

\RegVer{In this section we first describe the free space for the weak discrete \Frechet distance and how deletion or insertion can be used to close or create gaps in free paths. Then we give an abstract framework for our NP-hardness reductions, which in subsequent sections we will tailor to each specific problem.}

\subsubsection*{Paths and Gaps}

Recall from \secref{prelim} that for $\distDFrW{\curveA}{\curveB}$ the free space is an $m\times n$ grid graph, where vertex $(i,j)$ and vertex $(i',j')$ are adjacent if and only if $|i-i'|\leq 1$ and $|j-j'|\leq 1$. Then determining if $\distDFrW{\curveA}{\curveB} \leq 1$ is equivalent to determining if a path exists from $(1,1)$ to $(m,n)$ in the free space graph which only uses free vertices, namely vertices $(i,j)$ such that  $|\curveA_i-\curveB_j| \leq 1$. See \figref{fig:SimplePath}, for an example when such a path exists.
\RegVer{On the other hand, \figref{fig:SimpleDeletion} and \figref{fig:SimpleInsertion} show examples when no such path exists and thus $\distDFrW{\curveA}{\curveB} > 1$.}

Consider the highlighted pair of free vertices in \figref{fig:SimpleDeletion}. While their horizontal distance is 1, their vertical distance is 2, which we will refer to as a vertical \emph{gap} as it prevents a path through these vertices. Observe, however, that a deletion of the third vertex from $\curveB$ (i.e.\ the third row) removes this gap, creating a path from the lower left corner to the upper right corner, and thus $\dedistDFrW{\curveA}{\curveB}=1$. Conversely, observe that if we were only allowed insertions on $\curveB$, then there is no way to bridge this vertical gap. 
Now consider the highlighted pair of free vertices in \figref{fig:SimpleInsertion}, where now instead there is a horizontal gap. If we are only allowed deletions on $\curveB$ then there is no way to bridge this gap (though deletions on $\curveA$ would bridge the gap).  However, if we allow insertions on $\curveB$, then inserting a value of 20 at the third row would create a path between these two vertices, showing that $\iedistDFrW{\curveA}{\curveB}=1$. 
Thus in summary, deletion could be used to bridge a vertical gap but not a horizontal one, and insertion could be used to bridge a horizontal gap but not a vertical one. 
%Thus  informally we can refer to such a pair of horizontal and vertical gaps as opposing, as they cannot be bridged with the same edit operation. 

\begin{figure}[!h]
    \centering
    \begin{subfigure}[b]{0.3\textwidth}
        \centering
        \includegraphics[width=\textwidth]{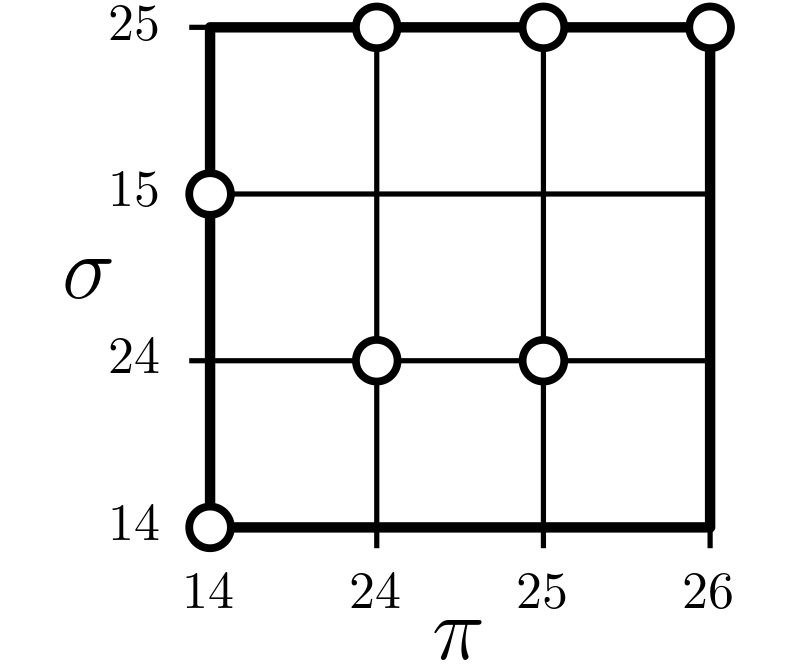}
        \subcaption{\centering The Weak \Frechet distance is already 1}
        \figlab{fig:SimplePath}
    \end{subfigure}
    \label{fig:SimpleFreeSpace}
    \hspace{1em}
    \begin{subfigure}[b]{0.3\textwidth}
        \centering
        \includegraphics[width=\textwidth]{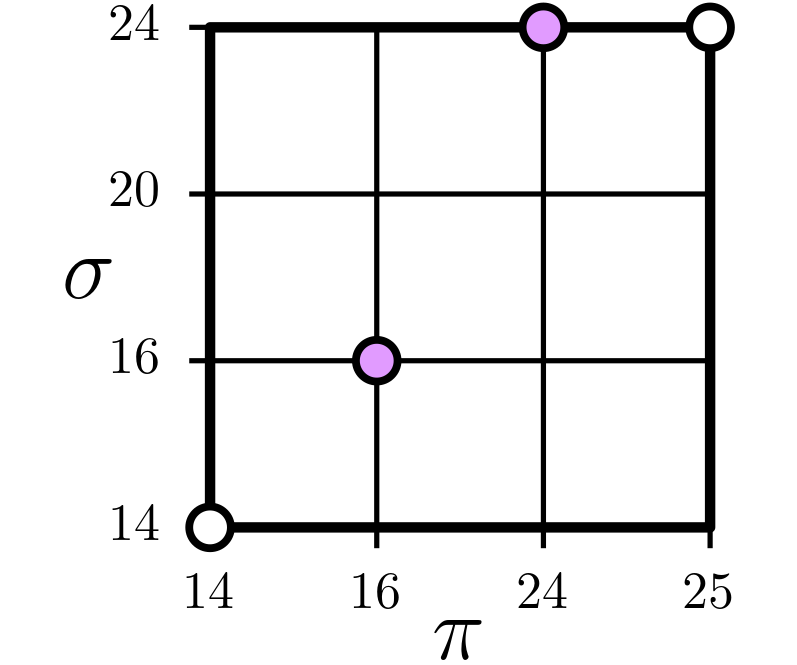}
        \subcaption{\centering Deletion can close the vertical gap.}
        \figlab{fig:SimpleDeletion}
    \end{subfigure}
    \hspace{1em}
    \begin{subfigure}[b]{0.3\textwidth}
        \centering
        \includegraphics[width=\textwidth]{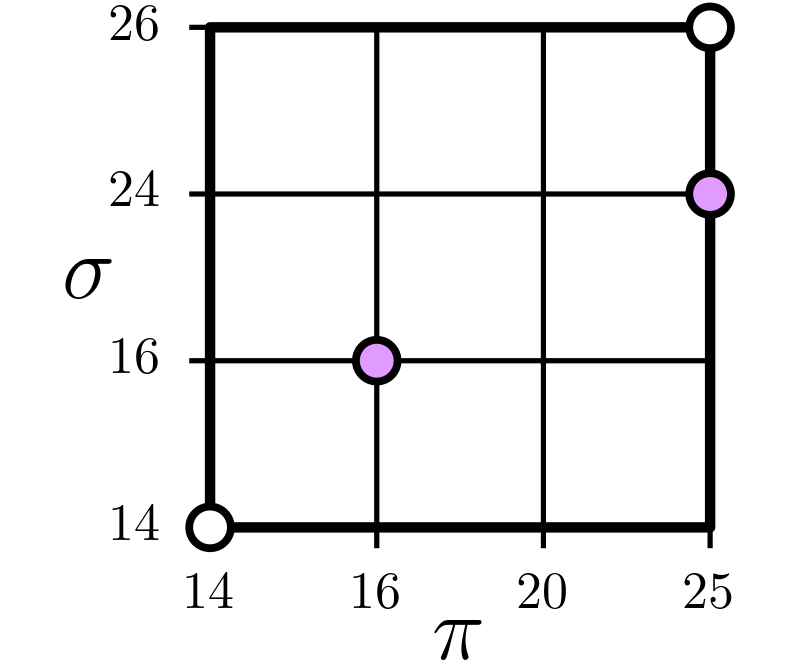}
        \subcaption{\centering Insertion can close the horizontal gap.}
        \figlab{fig:SimpleInsertion}
    \end{subfigure}
    \caption{Three free spaces diagrams with free spaces represented by circles, and edits permitted on $\curveB$ only. The values along the axes are the curve coordinates in $\Re^1$.}
    \figlab{fig:SimpleFreeSpace}
\end{figure}

Consider the example in  \figref{fig:OpposingDeletion},  where 
there are two vertical gaps, and suppose we are considering the deletion only problem. 
Now the first vertical gap can be removed by deleting $\curveB_3=16$. However, doing this creates a horizontal gap at $\curveA_8=16$, where the other vertical gap was, and this horizontal gap cannot be bridged by deletion(s). 
Similarly, if we start by trying to close the second vertical gap with deletion of row $\curveB_2=14$ then we get an insurmountable horizontal gap at $\curveA_2=14$. We thus refer to such a pair of vertical gaps as \emph{opposing}. 
Ultimately, our goal is to use the decision to create a path by bridging a gap as deciding to set a literal in the given 3SAT instance to \TRUE. Intuitively, by creating such opposing gaps we can make setting a literal to \TRUE correspond to setting instances of the negated literal to \FALSE. 

\begin{figure}[!h]
    \centering
    \includegraphics[width=0.4\textwidth]{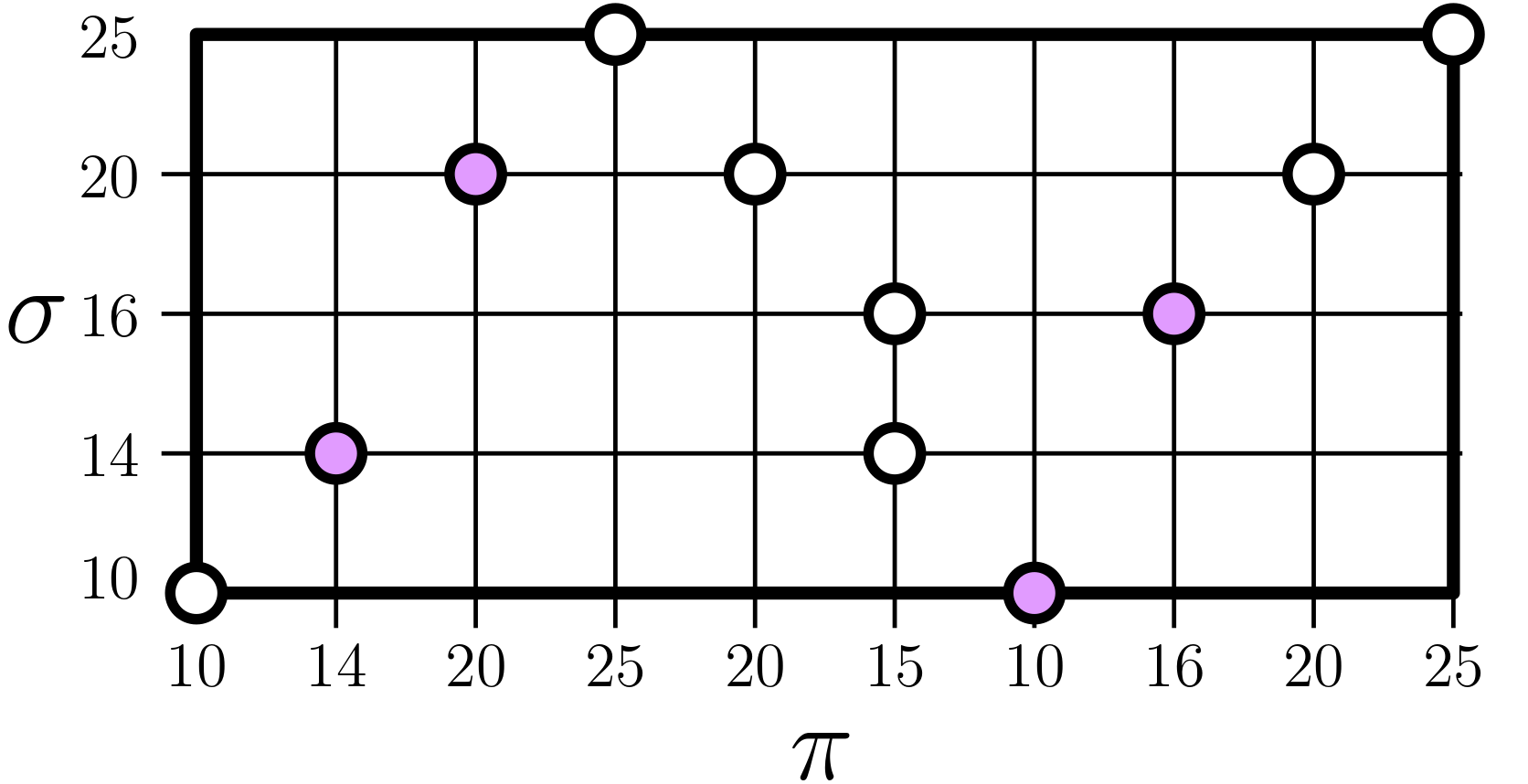}
    \caption{Opposing vertical gaps, where bridging one with a row deletion creates a horizontal gap at the other.}
    \figlab{fig:OpposingDeletion}
\end{figure}

\subsubsection*{Reduction Framework}

We now describe the abstract structure, shown in \figref{fig:SAT}, which we use to represent any 3SAT instance as an instance of weak discrete edit \Frechet  distance. 
%(This abstract structure will then be modified in later sections, according to which edit operations are allowed.)
First, we make a rectangular free space gadget for each clause, which are then placed in series. %(i.e.\ the overall number of rows will be the same as in a single clause gadget). 
Within a given clause gadget, the rows can intuitively be partitioned into three layers, and each layer can be partitioned into three sections of columns. Thus overall the clause gadget consists of $9$ logical (roughly square) regions, %which we call \emph{squares}, 
where, as shown in \figref{fig:SAT}, each region consists of an orange diagonal path of free vertices, which we simply refer to as a \emph{diagonal}.  Now for the top and bottom layers, their three diagonals will be unobstructed and connect to each other to allow traversal through these regions. The middle layer will also consist of three diagonals, however, we create gaps on these diagonals to encode the given clause. Namely, the three diagonals will correspond to the three literals of the clause, and choosing to bridge a gap on one of these diagonals will correspond to setting that literal to $\TRUE$.
\RegVer{How one can enforce a correspondence between closing gaps and setting literals depends on which edit operations we are allowing, and the precise details are left to the relevant subsequent sections. For now, we simply claim that because we placed the clause gadgets in series, we will be able to enforce that closing a gap for a literal in one clause will close the gap for that literal across all clauses, while simultaneously creating an insurmountable gap for all instances of the negated literal (by using opposing gaps as described above), corresponding to setting the negated literal to \FALSE.}

\begin{figure}[!h]
    \centering
    \includegraphics[width=1\textwidth]{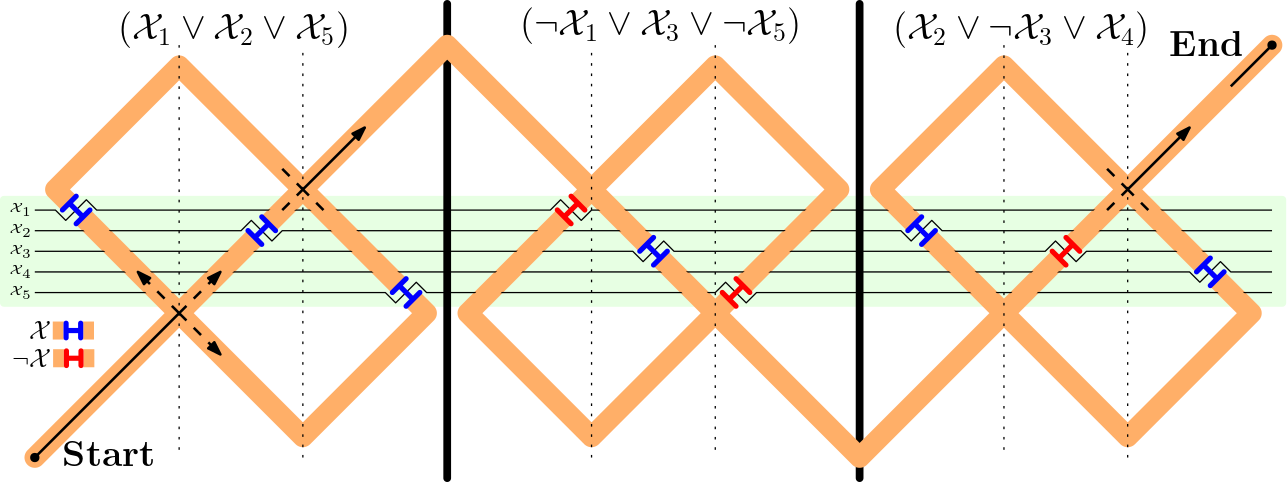}
    \caption{Abstract free space structure. This example is satisfied by setting $\BoolVar_2, \lnot \BoolVar_5 =\TRUE$.}
    \figlab{fig:SAT}
\end{figure}

As mentioned above, the clause gadgets are placed in series. Observe that we enter the first clause gadget at its bottom left corner, and exit at its top right corner. Thus in order to have the second clause gadget start where the first clause gadget ends, we invert the second clause gadget so that it must be traversed from its upper left corner to its lower right corner. In general, the odd clause gadgets must be traversed up and to the right, and the even ones down and to the right. (If there are an even number of clauses, we can insert one more gadget at the end that allows traversing from the lower left to the upper right.) Furthermore, when going from an odd to an even gadget, there will be a column inbetween with only a single free space at the top row (resp.\ bottom row when going from even to odd), to ensure this is the only point of connection between the gadgets.

\RegVer{With this abstract description, it is now easy to see the correspondence between the 3SAT instance and the weak discrete edit \Frechet distance. Namely, a solution to the 3SAT instance requires that we set a literal of every clause to \TRUE, and similarly there is only a path in the free space from $(1,1)$ to $(m,n)$ if we can edit $\curveB$ in such a way as to bridge at least one of the three diagonal gaps in the variable layer for every clause.
\subsubsection*{Building and Connecting Diagonals}
Here we describe how to select values to create and connect the diagonals shown in \figref{fig:SAT} and discussed in our framework above.
}

Let $L:=\seq{15, 25, \ldots 10v+5}$
% = \{10i+5\mid 1\leq i\leq v\}$
be an ordered sequence of values,
%(i.e.\ a curve in $\Re^1$),
and let $L^R$ denote $L$ in reverse order.
%Observe that
An ascending diagonal path is realized by setting portions of $\curveA$ and $\curveB$ to both $L$ or both $L^R$. Similarly, a descending diagonal path is created by setting a portion of $\curveA$ to $L$ (resp.\ $L^R$) and a portion of $\curveB$ to $L^R$ (resp.\ $L$). 

Consider a clause gadget, which consists of 9 diagonals, 3 in each layer, alternating between ascending and descending, creating a zigzag pattern. 
Within a layer, when two diagonals meet we insert a value in between them such that they are ``glued" together by a column which locally has no free vertices except at the one location where the diagonals come together. If the end of one diagonal (correspondingly the beginning of the next one) is the value $10v+5$, then this can be achieved by  placing the value $10(v+1)$ between the diagonals, as it is larger than any value in $L$. Similarly if the diagonal ends at the value $15$ then we insert the value $10$ before the next diagonal. These inserted values will also similarly act to glue the layers of the clause gadget together.
Let $\curveA^i$ denote the portion of $\curveA$ corresponding to the $i$th clause.
Then the \emph{basic clause gadget}, shown in \figref{fig:L's}, is defined by setting 
\[
\curveA^i = \curveB = \langle 0,10\rangle \circ L \circ \langle 10(v+1) \rangle \circ L^R \circ \langle 10 \rangle \circ L \circ \langle 10(v+1),10(v+2) \rangle
.\]

\begin{figure}[!h]
    \centering
    \includegraphics[width=0.5\textwidth]{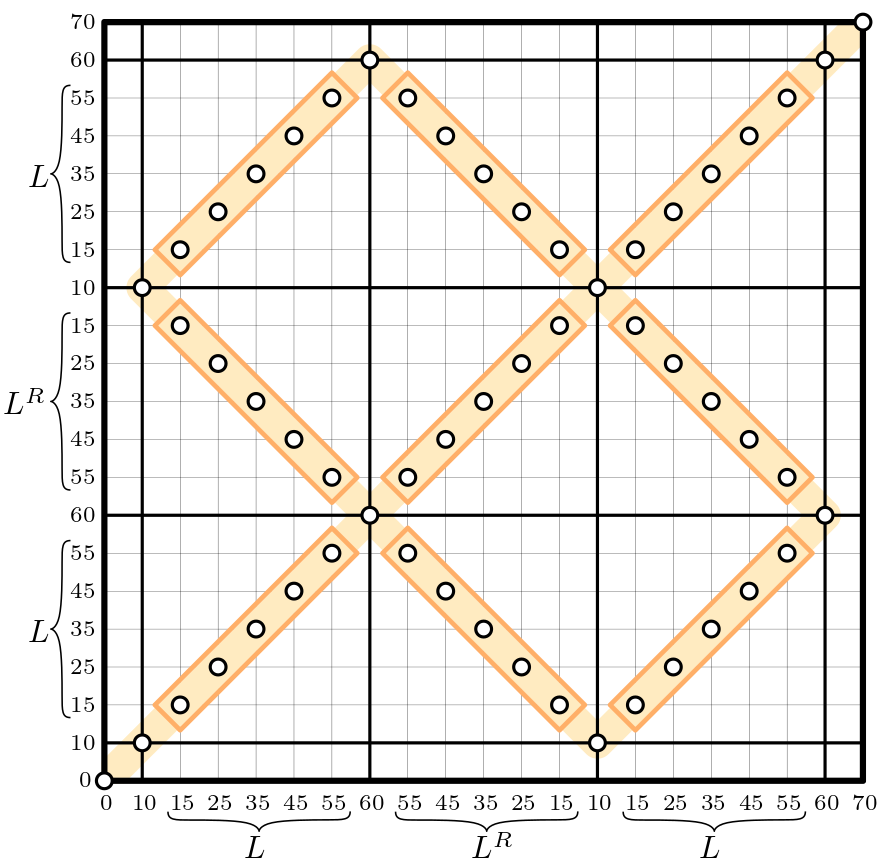}
    \caption {Basic clause gadget, consisting of 9 (highlighted) diagonals made by pairs of $L$'s and $L^R$'s which have been glued together such that the free-space has 3 paths.}
    \figlab{fig:L's}
\end{figure}
Observe that we appended the value $0$ at the beginning and $10(v+2)$ at the end of each curve. This serves to glue the successive clause gadgets together at single free vertices, in the same way we glued diagonals within a clause gadget together. Again, the values $0$ and $10(v+2)$ achieve this by respectively being smaller or larger than any value used internally in the clause gadget. 
Note that the above values used to create the basic clause gadget do not create any gaps in the variable layer.
%Thus this construction is just a starting point.
Depending on the edit operation allowed, we modify the construction to create the appropriate gaps.
\SOCGVer{The details for these modifications are given in the full version.
We summarize the full suite of results in the following two theorems.

\begin{theorem}\thmlab{hardnessdiscretesocg}
    Given a value $\delta$ and curves $\curveA$ and $\curveB$ in $\Re^1$, determining if the weak discrete \Frechet distance between the curves can be made less than or equal to $\delta$ by any of the following processes is NP-hard:
    \RegVer{\begin{enumerate}[(a), leftmargin=1.5cm]}
    \SOCGVer{\begin{inparaenum}[(a)]}
        \item deleting any number of points from $\curveB$;
        \item deleting up to $k$ points from $\curveA$, $\curveB$, or both;
        \item inserting up to $k$ points into $\curveB$; and
        \item performing up to $k$ deletions or insertions from/into $\curveB$.
    \RegVer{\end{enumerate}}
    \SOCGVer{\end{inparaenum}}
    Further, determining if the weak discrete vertex-restricted shortcut \Frechet distance is less than or equal to $\delta$ is NP-hard.
\end{theorem}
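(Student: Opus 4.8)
The plan is to complete the 3SAT reduction begun in \secref{abstract} by specifying, for each allowed edit regime, exactly which gaps are carved into the variable-layer diagonals of the basic clause gadget, and then showing that the edited free space admits a free path from $(1,1)$ to $(m,n)$ if and only if $\SATI$ is satisfiable. The crucial structural point is that in the basic clause gadget the word $\curveB$ is repeated verbatim as the block $\curveA^i$ of $\curveA$ belonging to clause $i$, so the $\SATc$ clause gadgets occupy disjoint blocks of columns but all share the rows of $\curveB$; hence a single deletion or insertion of a row of $\curveB$ acts simultaneously on every clause gadget. This shared-row effect is what forces an edit sequence to behave as a single globally consistent truth assignment, and conversely.

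\textbf{Deletions on $\curveB$ only (part (a)) and on $\curveA$, $\curveB$, or both (part (b)).} For each variable $\BoolVar_j$ reserve two rows of $\curveB$, a ``true'' row $r_j$ and a ``false'' row $\bar r_j$. On every variable-layer diagonal corresponding to the literal $\BoolVar_j$ (resp.\ $\lnot\BoolVar_j$) we perturb the values flanking the diagonal so that (i) there is a \emph{vertical} gap on it, closed exactly by deleting $r_j$ (resp.\ $\bar r_j$), while (ii) deleting $\bar r_j$ (resp.\ $r_j$) instead turns that diagonal into an \emph{opposing} configuration as in \figref{fig:OpposingDeletion}, introducing on it a \emph{horizontal} gap that no row deletion can close. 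Since the top and bottom layers stay unobstructed, successive clause gadgets are joined through single free vertices, and the three diagonals within a layer are glued through single free vertices, after any set of row deletions a free path from $(1,1)$ to $(m,n)$ exists iff in every clause at least one of its three literal diagonals is traversable; interpreting the deletion of $r_j$ (resp.\ $\bar r_j$) as setting $\BoolVar_j$ true (resp.\ false), conditions (i)--(ii) make this equivalent to the deletions inducing a consistent assignment that satisfies $\SATI$. Consequently $\dedistDFrW{\curveA}{\curveB}\leq\SATv$ when $\SATI$ is satisfiable and $\dedistDFrW{\curveA}{\curveB}=\infty$ otherwise, which proves both part (a) (no bound imposed) and the bounded version with $k=\SATv$. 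Allowing deletions on $\curveA$ as well does not change the answer: a deletion of a column of $\curveA$ can only bridge \emph{horizontal} gaps, of which the deletion-only construction has none along the forced corridor, while the obstructing blocks separating diagonals and gadgets can be made wider than $k$ so that $k$ column deletions cannot breach them --- hence $\curveA$-deletions are never useful and part (b) follows, again with $k=\SATv$.

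\textbf{Insertions on $\curveB$ (part (c)) and deletions/insertions on $\curveB$ (part (d)).} We dualize the construction: inserting a row into $\curveB$ supplies a horizontal stepping stone and so bridges \emph{horizontal} gaps (cf.\ \figref{fig:SimpleInsertion}), so now we carve horizontal gaps, rather than vertical ones, onto the variable-layer diagonals. Each variable again gets two ``insertion slots'', with the slot that bridges every $\BoolVar_j$-diagonal placed so that using it forces an unbridgeable vertical gap on every $\lnot\BoolVar_j$-diagonal, and symmetrically; an optimal solution uses at most $\SATv$ insertions, so $\iedistDFrW{\curveA}{\curveB}\leq\SATv$ when $\SATI$ is satisfiable and $=\infty$ otherwise, giving part (c) with $k=\SATv$. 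Part (d) is obtained by running both modifications at once --- for instance a vertical gap on every positive-literal diagonal and a horizontal gap on every negative-literal diagonal --- so that deletions realize positive literals and insertions realize negative ones; then $\edistDFrW{\curveA}{\curveB}\leq\SATv$ iff $\SATI$ is satisfiable, giving part (d) with $k=\SATv$. Finally, the deletion-only construction of part (a) never benefits from deleting $\curveB_1$ or $\curveB_n$, and in the discrete setting a vertex-to-vertex shortcut on $\curveB$ is precisely a contiguous block of deletions performed before the traversal; hence that same instance also witnesses NP-hardness of deciding whether the weak discrete vertex-restricted shortcut \Frechet distance is at most $\thresh$.

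The step I expect to be the main obstacle is making the ``opposing gap'' coupling precise: choosing the $\Re^1$ values along $\curveA^i$ and $\curveB$ so that a single edit simultaneously (1) closes its literal's gap in \emph{every} clause gadget containing that literal, (2) permanently destroys every diagonal of the negated literal, (3) leaves the top and bottom layers and all inter-gadget connectors intact even after the remaining $\SATv-1$ edits have shifted the rows, and (4) creates \emph{no} unintended path --- in particular none produced by editing a row or slot that is not reserved, and none produced by a detour through a top or bottom layer. Verifying (4), namely that the only way to obtain a full free path is through a consistent satisfying assignment, is the delicate direction, and is where the argument follows the case analysis of \cite{blopuv-cfducod-23}.
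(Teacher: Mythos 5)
Your high-level framework matches the paper's (shared rows across serially-placed clause gadgets, opposing gaps coupling a literal to its negation), but the proposal leaves unresolved exactly the points where the reduction actually has to work hard, and in two places the sketch as written would fail.

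First, for part (a) with \emph{unlimited} deletions, carving vertical gaps into the variable layer is not enough: nothing stops the adversary from deleting rows \emph{outside} the variable layer. In the basic gadget one can delete the entire variable layer and all but one row of the top layer, leaving a trivially traversable free space for every 3SAT instance, so your claim that the edit distance is $\infty$ for unsatisfiable instances is false for the construction as described. The paper fixes this with a dedicated \emph{containment gadget} appended to $\curveA$ (values $9$ and $11$ creating rows whose deletion provably produces an unbridgeable horizontal gap), which confines deletions to the variable layer without any budget; for the budgeted variants it instead duplicates all non-variable rows (and, for $\curveA$-deletions, columns) $\SATv$ times. You flag ``no unintended path'' as the delicate step, but that step is the theorem for part (a), and your sketch contains no mechanism for it.

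Second, for insertions the dualization hides the central obstruction: with two target column values per variable, an adversary can insert the \emph{middle} value and satisfy both $\BoolVar_k$ and $\lnot\BoolVar_k$ at once. The paper prevents this by widening the spread to $10k+3$ versus $10k+7$, which in turn breaks the top and bottom layers and forces replacing each of the nine diagonals by a nested sub-gadget with dedicated sublayers for $\BoolVar_k$ and $\lnot\BoolVar_k$; the consistency of the assignment is then enforced not by opposing gaps but by a budget of \emph{exactly} $\SATv$ insertions, all consumed bridging a single width-$\SATv$ horizontal gap. Relatedly, your part (d) proposal (deletions realize positive literals, insertions realize negative ones) has no mechanism coupling a deletion to an insertion, so nothing prevents setting both $\BoolVar_j$ and $\lnot\BoolVar_j$ true via one edit of each type; the paper instead reuses the insertion construction verbatim and neutralizes deletions by row duplication. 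The shortcut corollary and the overall satisfiability-iff-path logic in your write-up are fine, but parts (a), (c), and (d) each need the missing gadgetry above before the reduction is sound.
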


\begin{theorem}\thmlab{hardnesscontinuoussocg}
    Given a value $\delta$ and curves $\curveA$ and $\curveB$ in $\Re^2$, determining if the weak continuous \Frechet distance between the curves can be made less than or equal to $\delta$ by any of the following processes is NP-hard:
    \RegVer{\begin{enumerate}[(a), leftmargin=1.5cm]}
    \SOCGVer{\begin{inparaenum}[(a)]}
        \item deleting any number of points from $\curveB$;
        \item deleting up to $k$ points from $\curveA$, $\curveB$, or both;
        \item inserting up to $k$ points into $\curveB$; and
        \item performing up to $k$ deletions or insertions from/into $\curveB$.
    \RegVer{\end{enumerate}}
    \SOCGVer{\end{inparaenum}}
    Further, determining if the weak continuous vertex-restricted shortcut \Frechet distance is less than or equal to $\delta$ is NP-hard.
\end{theorem}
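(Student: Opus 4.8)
The plan is to prove \thmref{hardnesscontinuoussocg} by reducing from the already‑established $\Re^1$ discrete variants of \thmref{hardnessdiscretesocg}. Given one of those discrete instances---curves $\curveA=\seq{\curveA_1,\dots,\curveA_m}$ and $\curveB=\seq{\curveB_1,\dots,\curveB_n}$ in $\Re^1$, threshold $\thresh=1$, and (for the bounded variants) a budget $k$---fix a value $H$ exceeding every coordinate used in the construction by more than $\thresh$, and form the \emph{lifted} curves in $\Re^2$
\[
  \curveC \;=\; \seq{(\curveA_1,0),\,z,\,(\curveA_2,0),\,z,\,\dots,\,z,\,(\curveA_m,0)},\qquad
  \curveD \;=\; \seq{(\curveB_1,0),\,z,\,(\curveB_2,0),\,z,\,\dots,\,z,\,(\curveB_n,0)},
\]
where $z=(0,H)$ is a common ``detour'' vertex inserted between each consecutive pair of lifted original vertices. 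The reduction outputs $\curveC$, $\curveD$, the threshold $\thresh=1$, and the budget $k$ (unchanged).

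The crux is a structural lemma: the continuous weak \Frechet free space of $\curveC$ and $\curveD$ at threshold $\thresh=1$ has the same connectivity between its two extreme corners as the weak discrete \Frechet free space of $\curveA$ and $\curveB$, and this correspondence is preserved under the edit operations below. I would prove it by a short cell‑by‑cell analysis. Every edge of $\curveC$ and of $\curveD$ joins a point on the line $y=0$ to the detour point $z$, so every free‑space cell is a product of two such edges; an elementary computation shows the $\thresh$‑free region inside any cell is a thin strip along the cell's (anti)diagonal, emanating from the corner at which both parameters sit at $z$ (distance $0$ there) and reaching the opposite corner $\bigl((\curveA_a,0),(\curveB_b,0)\bigr)$ if and only if $|\curveA_a-\curveB_b|\le\thresh$. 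Hence the ``original'' corners act like the discrete grid vertices (a corner being internally reachable iff it is a free grid vertex), the ``detour'' corners act as hubs, and a detour corner lying between original indices $i,i+1$ on $\curveC$ and $j,j+1$ on $\curveD$ links precisely the free original corners among $\{i,i+1\}\times\{j,j+1\}$. Composing two such links shows two free original corners are connected through the free space exactly when their indices differ by at most one in each coordinate---which is exactly the adjacency of the weak discrete grid---and one checks no cell connects two detour corners without passing through a free original corner, so no spurious connectivity appears. Since the start corner $\bigl((\curveA_1,0),(\curveB_1,0)\bigr)$ and target corner $\bigl((\curveA_m,0),(\curveB_n,0)\bigr)$ are original corners that are free in the construction, we get $\distFrW{\curveC}{\curveD}\le\thresh$ iff $\distDFrW{\curveA}{\curveB}\le\thresh$.

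It remains to transfer the edits. A deletion of $\curveB_j$ is simulated by deleting $(\curveB_j,0)$ from $\curveD$; this leaves two consecutive copies of $z$, which (being geometrically identical, joined by a degenerate edge) behave exactly like a single detour vertex, so the edited $\curveD$ is, up to such harmless duplications, the lift of the edited $\curveB$, and the structural lemma applies verbatim (deletions on $\curveA$ are symmetric). An insertion of a value $v$ into $\curveB$ is simulated by inserting the single point $(v,0)$ into $\curveD$ next to the appropriate existing detour vertex; one checks that the resulting ``short'' edge $(v,0)$--$(\curveB_{j},0)$ does not change the connectivity structure, because every edge of $\curveC$ still rises to $y=H$, so any cell containing that short edge is free only near $\curveC$'s $y=0$ endpoint and realizes exactly a weak‑grid move. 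Thus each original edit maps to one lifted edit of the same type, so the budget is preserved. For the hard direction I must argue an optimal editor of the $\Re^2$ instance gains nothing from ``non‑canonical'' edits: inserting or deleting a detour vertex only shortcuts an already‑available weak move; an inserted point far from $\curveC$ can never lie on a feasible traversal; and an inserted point near $\curveC$ can be replaced by a point of the form $(v,0)$ (or by a detour vertex) without destroying feasibility. Together with the construction's bound on the number of edits in an optimum, this reduces to the canonical case and lets us invoke \thmref{hardnessdiscretesocg}. Finally, the weak continuous vertex‑restricted shortcut statement follows from part (a): under the ``shortcut‑before‑traversal'' interpretation, unrestricted vertex‑to‑vertex shortcutting of $\curveD$ coincides with unrestricted deletion on $\curveD$ with the two endpoints protected, and the part‑(a) reduction never deletes endpoints.

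The main obstacle I anticipate is the last canonicalization step. Because the weak \Frechet distance already allows backtracking, the editor of the continuous instance has considerable freedom, and rigorously showing that strategically placed real‑coordinate inserted vertices---including ones landing partway up an edge of $\curveC$, at some $y\in(0,H)$---can always be replaced by canonical ones without changing the optimal edit count, while keeping the budget bookkeeping exact, is the delicate part. The cell‑by‑cell free‑space computation, the handling of duplicated detour vertices, and the reductions between the five variants are, by contrast, routine.
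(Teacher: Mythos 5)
Your reduction is the same one the paper uses: lift the $\Re^1$ discrete instances of \thmref{hardnessdiscretesocg} to $\Re^2$ by placing each value $x$ at $(x,0)$ and interleaving a very high detour vertex between consecutive vertices, then argue that the continuous weak free space reproduces the discrete grid connectivity. Your cell-by-cell analysis of the lifted free space is correct and matches the paper's terser observation that, because all detour edges are vertical, the relevant distances are realized vertex-to-vertex.

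The genuine gap is in your canonicalization step, and it is not merely the ``delicate part'' you flag at the end --- one of the claims you rely on is false as stated. You assert that ``inserting or deleting a detour vertex only shortcuts an already-available weak move.'' Deleting the detour vertex between $(\curveB_j,0)$ and $(\curveB_{j+1},0)$ replaces two vertical edges by a single horizontal edge on the $x$-axis; under the continuous \Frechet distance the other traverser can stand at a vertex of $\curveA$ while this edge sweeps through \emph{every} intermediate value in $[\curveB_j,\curveB_{j+1}]$. This creates connectivity with no counterpart in the discrete grid: for instance, in the deletion gadget, collapsing the detour between the rows $10k+4$ and $10k+6$ yields an edge within distance $1$ of both the $\BoolVar_k$ and the $\lnot\BoolVar_k$ columns simultaneously, destroying the opposing-gaps mechanism. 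The paper does not prove such edits are harmless; it \emph{prevents} them by changing the construction: each interleaved detour point is duplicated $\SATv$ times so that a budget of $\SATv$ deletions cannot remove all copies (for unlimited deletions the containment gadget does the work), and for insertions the empty variable sublayer of $\curveB$ is pre-populated with $\SATv-1$ detour points so that the budgeted insertions are forced to land between them, keeping every edge of the edited curve vertical. Your proposal needs an analogous construction-level fix (or a correct exchange argument replacing the false claim); note that your own simulation of an insertion --- placing $(x,0)$ ``next to'' a single existing detour vertex --- already introduces a horizontal edge between $(x,0)$ and its original neighbor on the $x$-axis, with exactly the same sweeping problem in the backward direction of the reduction.
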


\remove{
\begin{theorem}\thmlab{deletionmainsocg}
    Given a value $\delta$ and curves $\curveA$ and $\curveB$ in $\Re^1$, determining if the weak discrete \Frechet distance between the curves can be made less than or equal to $\delta$ by deleting any number of points from $\curveB$, is NP-hard.
\end{theorem}

\begin{corollary}\corlab{uptoksocg}
    Given values $\delta$ and $k$ and curves $\curveA$ and $\curveB$ in $\Re^1$, determining if the weak discrete \Frechet distance between the curves can be made less than or equal to $\delta$ by deleting up to $k$ points from $\curveA$, $\curveB$, or both, is NP-hard.
\end{corollary}

\begin{corollary}\corlab{deletioncontsocg}
    Given a value $\delta$ and curves $\curveA$ and $\curveB$ in $\Re^2$, determining if the weak continuous \Frechet distance between the curves can be made less than or equal to $\delta$ by deleting any number of points from $\curveB$, is NP-hard.
    
    It is also NP-hard if, given an additional value $k$, deletions are limited to $k$ deletions from $\curveA$, $\curveB$, or both.
\end{corollary}

\begin{corollary}
    Given a value $\delta$ and curves $\curveA$ and $\curveB$ in $\Re^1$, determining if the weak discrete vertex-restricted shortcut \Frechet distance is less than or equal to $\delta$ is NP-hard.
    Moreover, for curves $\curveA$ and $\curveB$ in $\Re^2$, determining if the weak continuous vertex-restricted shortcut \Frechet distance is less than or equal to $\delta$ is NP-hard.
\end{corollary}

\begin{theorem}\thmlab{hardinsertsocg}
    Given values $\delta$ and $k$ and curves $\curveA$ and $\curveB$ in $\Re^1$, determining if the weak discrete \Frechet distance between the curves can be made less than or equal to $\delta$ by inserting up to $k$ points into $\curveB$ is NP-hard.
\end{theorem}

\begin{corollary}\corlab{continousinserthardsocg}
    Given values $\delta$ and $k$ and curves $\curveA$ and $\curveB$ in $\Re^2$, determining if the weak continuous \Frechet distance between the curves can be made less than or equal to $\delta$ by inserting up to $k$ points into $\curveB$, is NP-hard.
\end{corollary}

\begin{theorem}\thmlab{hardeditsocg}
Given values $\delta$ and $k$ and curves $\curveA$ and $\curveB$ in $\Re^1$, determining if the $\delta$-threshold discrete \Frechet edit distance is less than or equal to $k$ is NP-hard.

Moreover, for curves $\curveA$ and $\curveB$ in $\Re^2$, determining if the $\delta$-threshold continuous \Frechet edit distance is less than or equal to $k$ is NP-hard.
\end{theorem}
}
}

\newcommand{\hardnessSubsection}{
\subsection{Weak Deletion Only}
%The definition of a \textit{basic clause gadget} is copied here for reference:
%\[
%\curveA^i = \curveB = \langle 0,10\rangle \circ L \circ \langle 10(v+1) \rangle \circ L^R \circ \langle 10 \rangle \circ L \circ \langle 10(v+1),10(v+2) \rangle
%\]
Consider an instance $\SATI$ of 3SAT with $c$ clauses and $v$ variables. To prove that $\dedistDFrW{\curveA}{\curveB}$ is NP-complete we will select values for the points in $\curveA$ and $\curveB$, such that determining if a given number of row deletions in the free space (i.e. deletions from $\curveB$) will result in a path from  $(1,1)$ to $(m,n)$ equates to determining if there is a satisfying assignment for $\SATI$.
To do so we follow the abstract framework described above and shown in \figref{fig:SAT}, which we already know equates paths with satisfying assignments, but where now gaps will be implemented appropriately to work for the deletion only case. 

Consider the $i$th clause of $I$. Recall that in our definition of the basic clause gadget the rows of the middle variable layer were given by $L^R$, where 
$L$ is the ordered set of values $10i+5$ for all $1\leq i\leq v$.
%$L=\{10j+5\mid 1\leq i\leq v\}$.
For the deletion only case, we will replace this occurrence of $L^R$ with $\widehat{L}^R$, where $\widehat{L}$ is obtained from $L$ by replacing the value $10j+5$ with the two values $10j+4$ and $10j+6$, for all $1\leq j\leq v$. This elongates all diagonals in the variable layer, as it effectively creates a row not just for every variable, but for every possible assignment of every variable.

We also make substitutions for occurrences of $L$ and $L^R$ in the columns. Specifically, $L_k^+$ (resp.\ $L_k^-$) is obtained from $L$ by replacing the value $10k+5$ with $10k+6$ (resp.\ $10k+4$). Now let $\BoolVar_{k_1}$, $\BoolVar_{k_2}$, and $\BoolVar_{k_3}$ be the three variables that occur in the $i$th clause.
Then in the definition of $\curveA^i$ copied above, replace the first occurrence of $L$ with $L_{k_1}^{\pm}$, replace $L^R$ with $L_{k_2}^{\pm}$, and replace the second occurrence of $L$ with $L_{k_3}^{\pm}$, where $L_{k_i}^{\pm}=L_{k_i}^+$ if $\BoolVar_{k_i}$ appears as a positive literal and $L_{k_i}^{\pm}=L_{k_i}^-$ if $\BoolVar_{k_i}$ appears as a negated literal.

Observe that since we are using $\widehat{L}^R$ for the rows of the variable layer, any clause gadget containing $L_{k}^+$ in the columns, i.e.\ a clause with the positive literal $\BoolVar_{k}$, will have a diagonal in the variable layer, with a single gap at the row with value $10k+4$, as shown in \figref{fig:DeletionReduction}. Thus deleting this row will close this gap simultaneously for all clause gadgets containing $L_{k}^+$, allowing the diagonal to be traversed and intuitively setting $\BoolVar_{k}=\TRUE$. 
Similarly, clause gadgets containing $L_{k}^-$ in the columns, i.e.\ a clause with the literal $\lnot \BoolVar_{k}$, will have a gap at the row with value $10k+6$, and  deleting this row intuitively sets $\lnot \BoolVar_{k}=\TRUE$. However, by design any pair of such $L_{k}^+$ and $L_{k}^-$ induced gaps are opposing, as described above. 
Specifically, deleting the row with value $10k+4$ creates an horizontal gap at the row with value $10k+6$ for any clause gadget containing $L_{k}^-$, and this horizontal gap cannot be bridged by deleting rows from the variable layer. In other words, bridging a gap that corresponded to setting $\BoolVar_{k}=\TRUE$ prevents us from bridging any later gap corresponding to setting $\lnot \BoolVar_{k}=\TRUE$.

\begin{figure}[!h]
    \centering
    \includegraphics[width=1\textwidth]{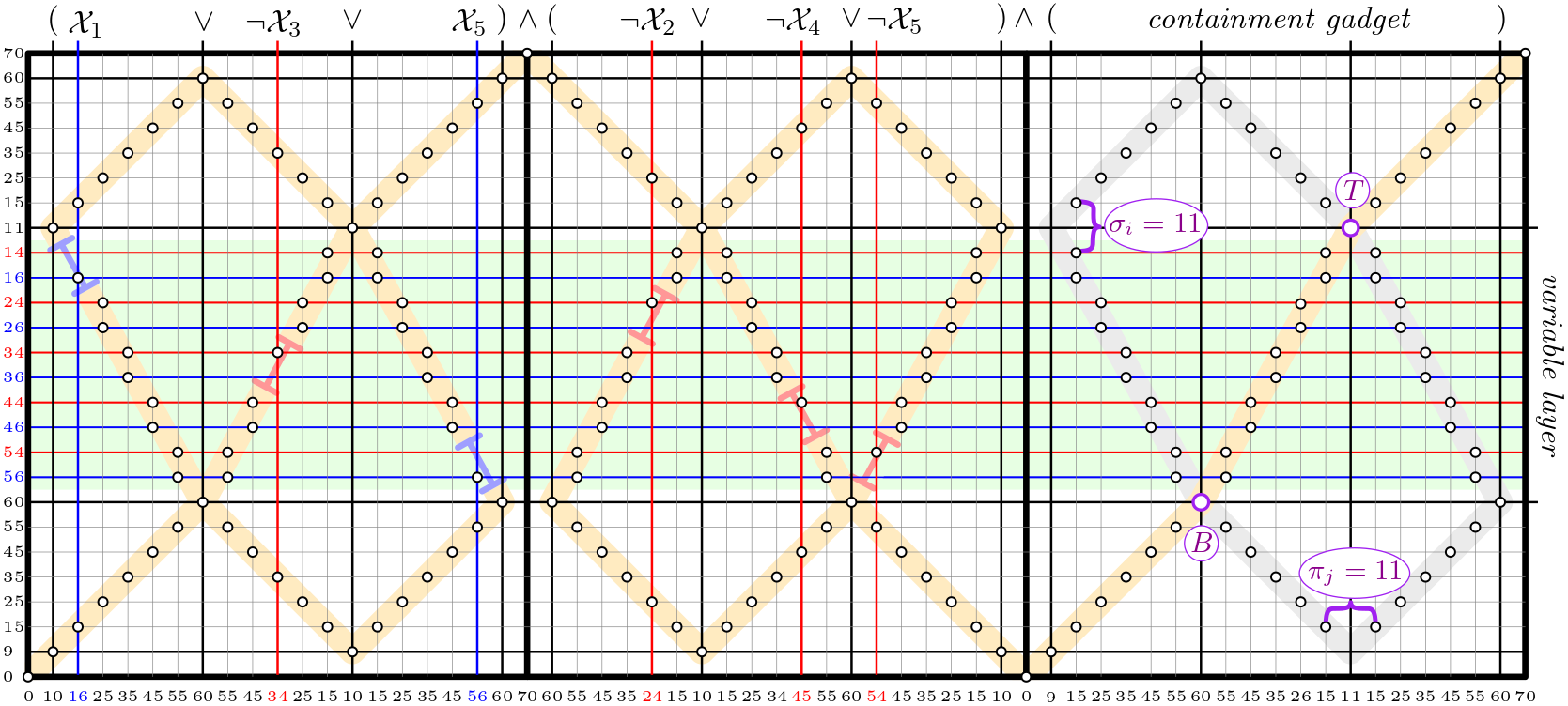}
    \caption{Free space example for $\dedistDFrW{\curveA}{\curveB}$ reduction.
    Observe that deletion of row 54 closes the vertical gap for $\BoolVar_5$, but creates a horizontal gap for $\lnot\BoolVar_5$, i.e.\ setting $\BoolVar_5$ to \TRUE sets $\lnot \BoolVar_5$ to \FALSE (and vice versa for deleting 56). 
    The containment gadget restricts deletion to the variable layer only.}
    \figlab{fig:DeletionReduction}
\end{figure}

Given the above discussion, both in this and the prior section, it is easy to see that there is a solution to the given 3SAT instance $\SATI$ if and only if $\dedistDFrW{\curveA}{\curveB}\leq 1$, when deletions are restricted to the variable layer. As allowing deletions outside the variable layer may break this correspondence,\footnote{For example, in \figref{fig:DeletionReduction} one could delete the entire variable layer and the entire top layer except for the top row.  
Doing so will yield $\distDFrW{\curveA}{\curveB'}\leq 1$ for any instance $\SATI$.}
we add one final containment gadget at the end of $\curveA$, as shown in \figref{fig:DeletionReduction}, which effectively restricts deletion to only the variable layer. 
Let $\overline{\curveA}$ denote the portion of $\curveA$ corresponding to this final containment gadget.
We start by setting $\overline{\curveA}=\curveA^i$, where $\curveA^i$ is from the basic clause gadget as copied above.
Now, we simply replace the first \(10\) with \(9\) and the second \(10\) with \(11\). To complete the containment gadget, we also need to modify $\curveB$, which is a basic clause gadget, except where $L^R$ is replaced with $\widehat{L}^R$ as discussed above. We now further modify $\curveB$ by also replacing the first $10$ with $9$ and the second $10$ with $11$ (i.e.\ the same modifications used for $\overline{\curveA}$).

Let $S$ denote the lower left starting free vertex of the containment gadget, and let $E$ be the upper right ending free vertex.
Let $T$ denote the free vertex at $\curveA_j=\curveB_i=11$, and let $B$ denote the lower left free vertex such that $\curveA_k=\curveB_\ell=10(v+1)$, see \figref{fig:DeletionReduction}. 
We now argue that the row containing $T$ (resp. $B$) cannot be deleted, which in turn implies that no row between $T$ and $E$ (resp. $S$ and $B$) can be deleted, as this would create an insurmountable horizontal gap on the diagonal between $T$ and $E$ (resp. $S$ and $B$), which is the only viable path from $T$ to $E$ (resp. $S$ to $B$).
As shown in \figref{fig:DeletionReduction}, $T$ is the only free vertex in its column, thus clearly its row cannot be deleted as it would create an insurmountable horizontal gap. 
Moreover, $T$ is the only free vertex in its row, thus the path from $S$ to $T$ must be confined to the rows below $T$. 
However, $B$ is the only free vertex in its column when restricting to the subset of row below $T$, and thus again deleting $B$ would create an insurmountable horizontal gap.

Finally, as this containment gadget was added in series, it must be traversed to satisfy $\dedistDFrW{\curveA}{\curveB} \leq 1$. 
Thus the containment clause effectively restricted deletions to the variable layer, and observe it did so without having to put a bound on the number of allowed deletions.

Above we described and argued correctness for the reduction, but here we give a final compact description for quick reference.
We are given an instance $\SATI$ of 3SAT, with $\SATv$ variables and $\SATc$ clauses. 
Let $L$ be the ordered set of the elements $10i + 5$ for all $1 \leq i \leq \SATv$. 
Let $\widehat{L}$ be obtained from $L$ by replacing the value $10j+5$ with the two values $10j+4$ and $10j+6$, for all $1\leq j\leq v$.
Let $L_k^+$ (resp.\ $L_k^-$) be obtained from $L$ by replacing the value $10k+5$ with $10k+6$ (resp.\ $10k+4$). Finally let $S^R$ denote any ordered set $S$ in reverse order. Then we have the following construction of $\curveA$ and $\curveB$.

\begin{itemize}
    \item Let $\curveA^i$ represent clause $i$ of $\SATI$ which contains variables $\BoolVar_{k_1}$, $\BoolVar_{k_2}$, and $\BoolVar_{k_3}$ and therefore $\curveA^i = \langle 10 \rangle \circ L_{k_1}^{\pm} \circ \langle 10(\SATv+1) \rangle \circ (L_{k_2}^{\pm})^R \circ \langle 10 \rangle \circ L_{k_3}^{\pm} \circ \langle 10(\SATv+1) \rangle$, where $L_{k_i}^{\pm}=L_{k_i}^+$ if $\BoolVar_{k_i}$ appears as a positive literal and $L_{k_i}^{\pm}=L_{k_i}^-$ if $\BoolVar_{k_i}$ appears as a negated literal.
    \item Let $\overline{\curveA}$ represent the containment gadget and be $\langle 9 \rangle \circ L \circ \langle 10(\SATv+1) \rangle \circ L^R \circ \langle 11 \rangle \circ L \circ \langle 10(\SATv+1) \rangle$
    \item Let $\curveA=\langle 0 \rangle \circ \curveA^1 \circ \langle 10(\SATv+2) \rangle \circ (\curveA^2)^R \circ \langle 0 \rangle \circ \dots \circ  (\curveA^\SATc)^R \circ \langle 0 \rangle \circ \overline{\curveA} \circ \langle 10(\SATv+2) \rangle$ (if $\SATc$ is odd, duplicate one clause so the total number of clauses is odd).
    \item Let $\curveB=\langle 0,9 \rangle \circ L \circ \langle 10(\SATv+1) \rangle \circ \widehat{L}^R \circ \langle 11 \rangle \circ L \circ \langle 10(\SATv+1),10(\SATv+2) \rangle$.
\end{itemize}

\begin{theorem}\thmlab{deletionmain}
    Given a value $\delta$ and curves $\curveA$ and $\curveB$ in $\Re^1$, determining if the weak discrete \Frechet distance between the curves can be made less than or equal to $\delta$ by deleting any number of points from $\curveB$, is NP-hard.
\end{theorem}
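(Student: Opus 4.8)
The plan is to give a polynomial-time reduction from 3SAT, instantiating the abstract framework and gadgets described above. Given an instance $\SATI$ with $\SATv$ variables and $\SATc$ clauses, I would build $\curveA,\curveB\in\Re^1$ exactly as in the compact description (with $\delta=1$), and first check polynomiality: $|L|=\SATv$, each clause contributes $O(\SATv)$ points and the containment gadget contributes $O(\SATv)$ points, so $|\curveA|=O(\SATv\SATc)$ and $|\curveB|=O(\SATv)$. Drawing the free space, the cells of $\curveA^i\times\curveB$ form one clause gadget, and the series composition via the inverting glue points $0$ and $10(\SATv+2)$ (together with the single-free-vertex bridging columns) realizes the structure of \figref{fig:SAT}: odd gadgets are traversed lower-left to upper-right, even gadgets upper-left to lower-right, and (duplicating a clause if needed so $\SATc$ is odd) $\curveA$ ends in an up-right gadget followed by $\overline{\curveA}$.

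Next I would nail down the local behavior of a single clause gadget. The top and bottom layers are glued $L/L^R$ diagonals with no gaps, hence always traversable. In the variable layer the rows come from $\widehat{L}^R$ and the three column blocks from $L_{k_1}^{\pm},L_{k_2}^{\pm},L_{k_3}^{\pm}$; the key geometric fact, verified by a direct $|\curveA_i-\curveB_j|\le 1$ computation, is that the block $L_k^+$ yields a diagonal whose \emph{only} obstruction is a vertical gap at the row valued $10k+4$, and $L_k^-$ yields a diagonal with its only gap vertical at the row valued $10k+6$. Deleting the row valued $10k+4$ closes every $L_k^+$ gap at once (so every clause containing the literal $\BoolVar_k$ becomes traversable through that diagonal), symmetrically for $10k+6$ and $\lnot\BoolVar_k$. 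The crucial consistency property — proved with the same bookkeeping — is that these deletions are \emph{opposing} (as in \figref{fig:DeletionReduction}): deleting $10k+4$ converts every $L_k^-$ diagonal's vertical gap into a horizontal gap at row $10k+6$, which no further row deletion can repair, and symmetrically deleting $10k+6$ ruins every $L_k^+$ diagonal. Thus a set of variable-layer row deletions encodes a partial truth assignment, and it opens a traversable diagonal in clause layer $i$ iff that assignment satisfies clause $i$.

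The step I expect to be the main obstacle is ruling out ``cheating'' deletions outside the variable layer, since the number of deletions is unbounded; this is the role of the containment gadget $\overline{\curveA}$, which is appended in series and therefore must be traversed. I would argue, as sketched above: the single free vertex $T$ (value $11$) is alone in both its row and its column, so its row can never be deleted, which confines any path from $S$ to $T$ to the rows strictly below $T$; within that sub-band the vertex $B$ (value $10(\SATv+1)$) is alone in its column, so its row cannot be deleted either; on the diagonals $S$--$B$ and $T$--$E$ a horizontal gap is unrepairable, so no row strictly between $S$ and $B$ or between $T$ and $E$ may be deleted. Combined with the glue structure, this forces every deletion used anywhere to lie strictly inside the band of variable-layer rows — and it does so with no bound on the number of deletions.

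Finally I would assemble the equivalence. If $\SATI$ is satisfiable, fix a satisfying assignment and delete, for each $\BoolVar_k$, the row valued $10k+4$ if $\BoolVar_k$ is true and $10k+6$ otherwise; by the gadget analysis this opens a traversable diagonal in every clause layer while leaving the top, bottom, containment, and glue structure intact, yielding a weak (indeed monotone) free path from $(1,1)$ to $(m,n)$, so $\dedistDFrW{\curveA}{\curveB}\le 1$. Conversely, any deletion set achieving weak distance $\le 1$ must, by the containment argument, consist only of variable-layer rows, and by the opposing-gap property can delete at most one of $\{10k+4,10k+6\}$ per $k$; setting $\BoolVar_k=\TRUE$ exactly when $10k+4$ is deleted (choosing arbitrarily when neither is) gives an assignment, and the existence of a path through each clause layer forces some literal of each clause to be satisfied. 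Hence $\dedistDFrW{\curveA}{\curveB}\le 1$ iff $\SATI$ is satisfiable, and since the construction is polynomial, the problem is NP-hard.
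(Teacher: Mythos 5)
Your proposal is correct and follows essentially the same route as the paper: the same curve construction with $\widehat{L}$ rows and $L_k^{\pm}$ columns, the same opposing-gap argument linking row deletions to a consistent truth assignment, and the same containment-gadget analysis (via the vertices $T$ and $B$ being alone in their columns) to confine deletions to the variable layer without bounding their number. No gaps to report.
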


If we restrict the number of deletions to $\SATv$ (the most possibly required to properly `assign' variables), we can remove the containment gadget, and instead prevent deletion of non-variable rows by simply duplicating them $\SATv$ times.\footnote{It is not enough to restrict the deletions alone. Observe in \figref{fig:DeletionReduction} that deleting rows with values 11, 14, and 16 would allow traversal of the clause gadgets even in an unsatisfiable case.
In particular, the gadgets for unsatisfiable formula $(\BoolVar_1 \lor \BoolVar_1 \lor \BoolVar_1)\land(\lnot\BoolVar_1 \lor \lnot\BoolVar_1 \lor \lnot\BoolVar_1)\land(\BoolVar_2 \lor \BoolVar_3 \lor \BoolVar_4)$ can be traversed by deleting said rows and row 44 for $\BoolVar_4 =$ \TRUE.} 
This works as duplicating a row does not affect connectivity, and our total deletion budget is not enough to remove all copies of a duplicated row. 
%In this way, there are now more rows than can be deleted for each non-gate row, so deletion of any duplicate value is only wasting deletions.
The same trick can be done for all columns in $\curveA$, effectively preventing their deletion even if we allow $\SATv$ deletions from both curves.

\begin{corollary}\corlab{uptok}
    Given values $\delta$ and $k$ and curves $\curveA$ and $\curveB$ in $\Re^1$, determining if the weak discrete \Frechet distance between the curves can be made less than or equal to $\delta$ by deleting up to $k$ points from $\curveA$, $\curveB$, or both, is NP-hard.
\end{corollary}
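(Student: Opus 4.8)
The plan is to reuse the reduction built for \thmref{deletionmain} almost verbatim, replacing the unbounded-deletion \emph{containment gadget} (whose role was to confine all useful deletions to the variable layer) with a point-duplication trick that achieves the same confinement once the number of deletions is capped. I would output the bound $k=\SATv$, the number of variables, which suffices because a satisfying assignment only requires deleting one variable row per variable; and I would keep $\delta=1$.

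The first step is an invariance observation for the weak discrete free space: replacing a point of a curve by two consecutive identical copies does not change whether $\distDFrW{\curveA}{\curveB}\le 1$, since the duplicated row (or column) has the same free pattern as, and is adjacent to, its twin, so a free path from $(1,1)$ to the opposite corner exists in one diagram iff it exists in the other; more generally, if a point has been duplicated into several copies, deleting a proper subset of those copies leaves reachability of the opposite corner from $(1,1)$ unchanged. The second step is to define the modified instance: take the curves $\curveA,\curveB$ from \thmref{deletionmain}, drop the final containment gadget $\overline{\curveA}$ together with its accompanying modifications of $\curveB$ (so $\curveB$ is again the basic clause pattern with its middle layer expanded to $\widehat{L}^R$, and $\curveA$ is the series of clause gadgets), call the $2\SATv$ rows of $\curveB$ coming from $\widehat{L}^R$ the \emph{variable rows}, and then replace every non-variable point of $\curveB$, as well as every point of $\curveA$, by $\SATv+1$ consecutive copies of itself. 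This is clearly a polynomial-time transformation.

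For correctness I would argue two directions. If $\SATI$ is satisfiable, fix a satisfying assignment and delete row $10j+4$ when $\BoolVar_j$ is \TRUE and row $10j+6$ when $\BoolVar_j$ is \FALSE; this uses at most $\SATv$ deletions, all in the variable layer of $\curveB$, and by the analysis already carried out for \thmref{deletionmain} (which permits an arbitrary subset of variable-row deletions, since the containment gadget never constrained \emph{which} variable rows could be removed) these deletions produce a free path, so the weak discrete \Frechet distance drops to at most $1$. Conversely, suppose some set $D$ of at most $\SATv$ deletions from $\curveA$ and/or $\curveB$ achieves weak discrete \Frechet distance at most $1$. Because each duplicated point occurs $\SATv+1>|D|$ times, $D$ cannot remove all copies of any of them, so by the invariance above removing $D$ has the same effect on reachability as removing only the subset $D'\subseteq D$ of variable-row deletions of $\curveB$, with $|D'|\le\SATv$. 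Thus we may assume $D$ consists solely of variable-row deletions, at which point the \thmref{deletionmain} argument applies unchanged: a free path forces every clause to have one of its literal diagonals bridged, the opposing-gap structure forces the bridged literals to be consistent across clauses, and hence $\SATI$ is satisfiable. The same construction simultaneously handles the ``$\curveA$ only'', ``$\curveB$ only'', and ``both'' variants, since in every case only variable-row deletions of $\curveB$ can help.

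The step I expect to be the main obstacle is verifying that thickening the gadget rows and columns into strips of $\SATv+1$ identical copies does not secretly create a new traversal --- for instance, that a path cannot cross a diagonal gap ``sideways'' through a thickened strip and that two distinct diagonals are not merged. This should follow from the fact that consecutive identical rows (resp.\ columns) have identical free patterns, so within any thickened strip the set of rows (resp.\ columns) reachable from a given side is exactly what it was before thickening; but it is the point that must be checked carefully rather than merely quoted from \thmref{deletionmain}, and it is also the reason a mere budget on the number of deletions (without duplication) is insufficient.
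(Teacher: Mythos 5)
Your proposal is correct and follows essentially the same route as the paper: cap the budget at $k=\SATv$, drop the containment gadget, and duplicate every non-variable row of $\curveB$ and every column of $\curveA$ enough times ($\SATv+1$ copies) that the budget cannot eliminate any duplicated point, so only variable-row deletions can affect reachability. You even flag the same subtlety the paper records in a footnote, namely that a budget alone (without duplication) would not suffice.
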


We can extend the results above to the continuous case $\dedistFrW{\curveA}{\curveB}$ by lifting the curves to $\Re^2$.\footnote{Without going to $\Re^2$, gaps like in \figref{fig:SimpleDeletion} could be passed without deletion, yielding $\distFrW{\curveA}{\curveB} \leq 1$.} To do this, we change every existing value $i$ to $(i,0)$ and insert $(0,\infty)$ between every point in $\curveA$ and every point in $\curveB$ %except in $\widehat{L}$, where it should not be inserted between $10j+4$ and $10j+6$
(if not using the containment gadget, these inserted points should also be duplicated $\SATv$ times to prevent their deletion). This forces movement along the curves to mimic discrete movement as is illustrated in \figref{fig:ContinuousJustification}.
 %in \apndref{figures}. 
\begin{figure}[t]
    \centering
    \begin{subfigure}[b]{1\textwidth}
        \centering
        \includegraphics[width=\textwidth]{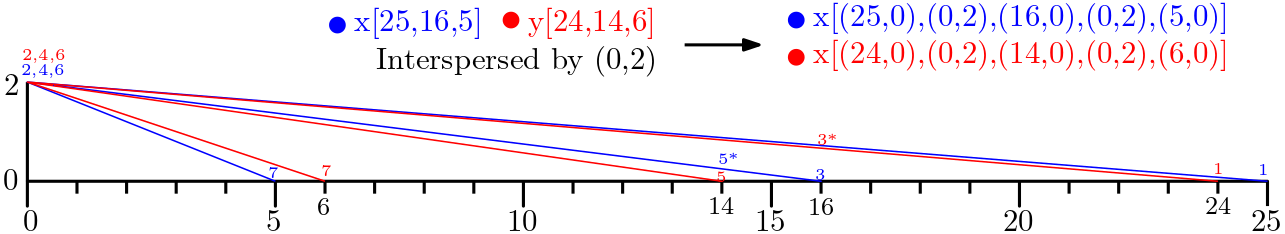}
        \subcaption{\centering Continuous distance is $<=1$ when discrete is $>1$ after interspersing by $(0,2)$. Numbers represent simultaneous traversal order and $i^*$ indicates an invalid stop were it discrete \Frechet.}
        \figlab{fig:ContinuousTwo}
    \end{subfigure}
    \label{fig:ContinuousTwo}
    \vspace{2em}
    \begin{subfigure}[b]{1\textwidth}
        \centering
        \includegraphics[width=\textwidth]{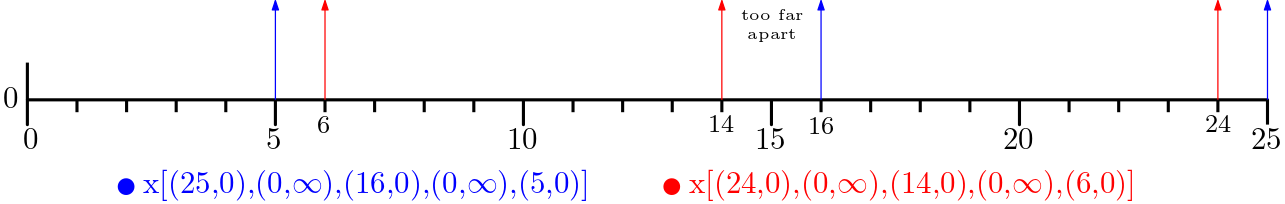}
        \subcaption{\centering Continuous distance is the same as discrete after interspersing by $(0,\infty)$.}
        \figlab{fig:ContinuousInfty}
    \end{subfigure}
    \caption{Red and blue represent the different curves which are shown in their actual $\Re^2$ image space, i.e.\ these figures are no showing the free space.}
    \figlab{fig:ContinuousJustification}
\end{figure}
Note that the insertion of these vertices does not affect the discrete distance $\dedistDFrW{\curveA}{\curveB}$. Specifically, suppose on the original curves we performed the move $(\curveA_i, \curveB_j)\rightarrow (\curveA_{i\pm 1},\curveB_{j\pm 1})$, then on the new curves at the same cost we can perform the move $(\curveA_i, \curveB_j)\rightarrow ((0,\infty),(0,\infty))\rightarrow (\curveA_{i\pm 1},\curveB_{j\pm 1})$. Similarly the move $(\curveA_i, \curveB_j)\rightarrow (\curveA_{i\pm 1},\curveB_{j})$, would become $(\curveA_i, \curveB_j)\rightarrow ((0,\infty),(0,\infty))\rightarrow (\curveA_{i\pm 1},\curveB_{j})$. Moreover, as we always must simultaneously move to $(0,\infty)$, note that $\dedistDFrW{\curveA}{\curveB}$ could not have decreased.
Now switching from discrete to continuous \Frechet, 
recall that the weak continuous \Frechet distance is realized either at a vertex to vertex distance or a vertex to edge distance. If in our construction it was realized at a vertex to vertex distance then it is equivalent to the discrete case. So suppose it was realized at a vertex to edge distance, say vertex $\curveA_i$ and edge $\curveB_j(0,\infty)$. However, observe that $\curveB_j(0,\infty)$ is a vertical edges as shown in \figref{fig:ContinuousJustification}. Thus as $\curveA_i$ and $\curveB_j$ are both on the $x$-axis, the closest to point $\curveA_i$ on the edge $\curveB_j(0,\infty)$ is $\curveB_j$, i.e.\ a vertex to vertex distance.\footnote{Rather than using $\infty$, it would suffice to use a sufficiently large finite value $x$. The projection onto a now near-vertical segment would lie slightly off the $x$-axis, potentially slightly lowering the cost. However, this won't matter so long as $x$ is sufficiently large, since we are using integer coordinates and $\delta=1$.}

%\jp{New content explaining why:} If arbitrarily restricted to discrete movement, the same movement works that did prior to inserting $(0,\infty)$. That is, if one curve moves, \jp{curves don't move} the other can move in the same direction, opposite, or stay stationary. To mimic prior movement from $\curveA_i$ to $\curveA_{i+1}$, both curves now must move forward to $(0,\infty)$, after which $\curveA$ moves to $\curveA_{i+1}$ while $\curveB$ moves either back to $\curveB_j$ or to $\curveB_{j+1}$ (to move opposite directions, $\curveB$ needs to first have moved backwards to its $(0,\infty)$ point). It should be obvious that no previously impossible moves are now possible.\jp{Yeah that was a cop-out.} 
%With continuous movement, the requirement is still to visit all points weakly in order, except we allow a point to be visited while the other curve is in a segment. This ordinarily extends the free space, but it does not in this structure. This is because our segments are now all vertical, so if a point was not within 1 of another, nowhere on the segment is within 1 of the other point either.

\begin{corollary}\corlab{deletioncont}
    Given a value $\delta$ and curves $\curveA$ and $\curveB$ in $\Re^2$, determining if the weak continuous \Frechet distance between the curves can be made less than or equal to $\delta$ by deleting any number of points from $\curveB$, is NP-hard.
    
    It is also NP-hard if, given an additional value $k$, deletions are limited to $k$ deletions from $\curveA$, $\curveB$, or both.
\end{corollary}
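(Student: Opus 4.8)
The plan is to reduce from the discrete weak deletion problem, whose NP-hardness was established in \thmref{deletionmain} (and in \corref{uptok} for the bounded variant), by lifting an $\Re^1$ instance to $\Re^2$ so that every continuous traversal is forced to behave like a discrete one. First I would replace each scalar coordinate $x$ occurring in $\curveA$ or $\curveB$ by the planar point $(x,0)$, so both curves start out lying on the $x$-axis. This alone is not enough, since a continuous walker can slide past a vertical gap of the kind in \figref{fig:SimpleDeletion} with no deletion at all. To block this, I would intersperse a single common ``spike'' vertex --- the point $(0,h)$ for a value $h$ that is enormous relative to all coordinates and to $\thresh$ (one may informally take $h=\infty$) --- between every pair of consecutive vertices of $\curveA$, and likewise between every pair of consecutive vertices of $\curveB$. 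Advancing one step on either curve now requires a detour out to this far-away spike, and because the spike used on $\curveA$ and the spike used on $\curveB$ are the \emph{same} point, a bounded-leash traversal must send both walkers to their spikes at essentially the same time, recreating the synchronized hopping of the discrete model.

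Next I would establish two facts. First, the interspersing leaves the discrete weak edit distance unchanged: any correspondence on the original curves lifts, since a simultaneous step $(\curveA_i,\curveB_j)\to(\curveA_{i\pm1},\curveB_{j\pm1})$ becomes $(\curveA_i,\curveB_j)\to(\text{spike},\text{spike})\to(\curveA_{i\pm1},\curveB_{j\pm1})$ at no additional cost (the spike--spike pair has distance $0$), and a one-sided step is handled identically with one walker idling at its spike; conversely, one can never pair a spike on one curve with a non-spike vertex on the other (their distance is at least $h>\thresh$), so any correspondence on the lifted curves projects back to a valid one of no larger cost, and the distance cannot decrease either. Second, for the lifted curves the weak continuous \Frechet distance equals the weak discrete one: the weak continuous \Frechet distance between polygonal curves is realized either at a vertex--vertex pair or at a vertex--edge pair, but in the lifted instance every edge is incident to a spike and is (nearly) vertical while every non-spike vertex lies on the $x$-axis, so the closest point of such an edge to an $x$-axis vertex is its low endpoint, which is again a vertex; hence every vertex--edge event degenerates to a vertex--vertex event, and a bounded-leash continuous traversal can be snapped to one that rests only at vertices, yielding a discrete correspondence of equal quality. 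Combining the two facts gives $\dedistFrW{\curveA'}{\curveB'}\le\thresh$ if and only if $\dedistDFrW{\curveA}{\curveB}\le\thresh$ for the lifted curves $\curveA',\curveB'$, which proves NP-hardness of the unbounded continuous case.

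For the bounded version I would additionally duplicate each interspersed spike vertex $\SATv$ times in place --- exactly the duplication trick used in the bounded discrete reduction --- so that the budget $k=\SATv$ is too small to delete a whole block of copies; duplicating a vertex changes neither the discrete nor the continuous weak \Frechet distance, so correctness carries over, and the same argument then applies when the $k$ deletions are allowed on $\curveA$, $\curveB$, or both. The step I expect to be the main obstacle is the second fact above: one has to rule out that the extra freedom of the \emph{weak} continuous model --- backtracking, and sliding partway up a (merely near-vertical, if $h$ is finite) spike edge --- lets the adversary bridge a gap that deletion was meant to bridge, and to check that using a large finite $h$ still forces the ``closest point is the low endpoint'' property within the $\thresh=1$ tolerance given that all original coordinates are integers. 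Everything else is a routine lift of the discrete argument already carried out above.
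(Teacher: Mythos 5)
Your proposal is correct and follows essentially the same route as the paper: lift both curves to the $x$-axis in $\Re^2$, intersperse a far-away common spike vertex $(0,\infty)$ (or a sufficiently large finite height) between consecutive vertices to force synchronized, discrete-like traversal, argue the weak continuous distance is realized at vertex--vertex events because the spike edges are vertical, and duplicate the spike vertices $\SATv$ times for the bounded variant. The concern you flag about a large finite height is exactly the point the paper addresses in a footnote, resolved by the integer coordinates and $\delta=1$.
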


As a final result, we now observe how the above extends to the weak vertex-restricted shortcut problem, similarly to how our results in \secref{cdo} extended to the strong vertex-restricted shortcut problem (studied in \cite{df-jdcfds-13,bds-cfdsn-14}).  
Recall that in this problem, $\curveA$ is fixed, and on $\curveB$ you are allowed to shortcut directly from $\curveB_i$ to $\curveB_j$ for any $i<j$ (i.e.\ replace the subcurve with the line segment between its endpoints). You are allowed to shortcut as often as you like and the question is whether you can get the (now weak) \Frechet distance between the resulting curves to be $\leq \thresh$. 
Observe, however, unlimited shortcutting is equivalent to unlimited deletion (i.e. the case considered in 
\thmref{deletionmain}), except deleting the starting or ending vertices must be prohibited as that cannot be achieved by shortcutting. However, we can just add this restriction to the above reduction and it still works. 
Thus we have the following result, analogous to that in \thmref{deletionmain} and the first part of \corref{deletioncont}.

%That is, finding if the discrete and/or continuous weak \Frechet distance is less than $\delta$ when shortcuts are allowed on one curve between points (not between arbitrary points in the middle of segments) is NP-Hard.

\begin{corollary}
    Given a value $\delta$ and curves $\curveA$ and $\curveB$ in $\Re^1$, determining if the weak discrete vertex-restricted shortcut \Frechet distance is less than or equal to $\delta$ is NP-hard.
    Moreover, for curves $\curveA$ and $\curveB$ in $\Re^2$, determining if the weak continuous vertex-restricted shortcut \Frechet distance is less than or equal to $\delta$ is NP-hard.
\end{corollary}

\subsection{Weak Insertion}

We now describe how the reduction used in the prior section for deltion only, as formally described just before \thmref{deletionmain} and shown in \figref{fig:DeletionReduction}, can be modified for insertion only, i.e.\ $\iedistDFrW{\curveA}{\curveB}$ and $\iedistFrW{\curveA}{\curveB}$. 
At the end of this section, we remark how this construction can be easily modified to allow both insertions and deletions, i.e.\ $\edistDFrW{\curveA}{\curveB}$ and $\edistFrW{\curveA}{\curveB}$.

First, on $\curveB$ replace $\widehat{L}$ with $\emptyset$. 
This creates horizontal gaps of width $\SATv$ in what was the variable layer, which can only be overcome by inserting $\SATv$ rows. 
%We wish the choice of the $\SATv$ values we insert to overcome these gaps to correspond to assignments of the corresponding variables.
We wish for the choice of each of the values inserted to overcome these gaps
to correspond to the assignment of a variable. 
Now previously for deletion, assigning $\BoolVar_k$ to \TRUE corresponded to deleting row $10k+4$, and \FALSE to deleting row $10k+6$. 
The problem with insertion is that we are not limited to choosing $10k+4$ and $10k+6$, and could instead insert $10k+5$, which will satisfy columns for both $10k+4$ and $10k+6$, i.e.\ both  $\BoolVar_k$ and $\lnot\BoolVar_k$. 
To prevent inserting $10k+5$, we change the definition of $L_k^+$ (resp.\ $L_k^-$) to be obtained from $L$ by replacing the value $10k+5$ with $10k+7$ (resp.\ $10k+3$). 
This changes the values for columns in $\curveA$ representing $\BoolVar_k$ and $\lnot\BoolVar_k$, while the absence of a variable in a clause is still represented by a column of value $10k+5$. 
This restricts insertions to $10k+4$ and $10k+6$ since they are the only values that are close enough to $10k+5$, while also covering either $10k+3$ or $10k+7$, respectively.
%\footnote{That is, the free space given by $10k+5 \pm 1$ is a superset of the free space given by any other value.} 
Thus we now have a way to represent setting the variable $\BoolVar_k$. 

Unfortunately, the top and bottom layers of rows are now ruined as, by design, there is no row value that will allow passage for both $10k+3$ and $10k+7$ columns simultaneously. 
To fix this, we change both curves by replacing every one of the 9 diagonals (as defined in \secref{abstract}) of our new clause gadget with an appropriately modified basic clause gadget, creating a larger gadget as shown \figref{fig:InsertionReduction}.
%diagonal  of free space with a smaller instance of \figref{fig:L's}'s SAT clause structure, as seen in \figref{fig:InsertionReduction}.
%
Thus the original bottom layer, now has itself three sublayers of rows. 
%In what was originally the bottom layer of rows, there are now three row subsections. 
We use the bottom sublayer to allow horizontal traversal over $\lnot \BoolVar_k$, and the middle sublayer for $\BoolVar_k$ (the top sublayer is not utilized in any particular way). 
%Specifically, let $L^-$ (resp.\ $L^+$) be $L$ but with all values shifted down 1 (resp.\ up 1). Previously the bottom layer was given by a portion of $\curveB$ being set to $L$, which now is set to $$
In this way, regardless of whether $\BoolVar_k$ or $\lnot \BoolVar_k$ appears in the given clause, there will be a way to traverse the bottom layer. The sublayers of the top layer will be identical, to again allow traversal regardless of whether the clause contains $\BoolVar_k$ or $\lnot \BoolVar_k$.
The specific values used for these layers to achieve this are described below and shown in \figref{fig:InsertionReduction}.

\begin{figure}[!h]
    \centering
    \includegraphics[width=1\textwidth]{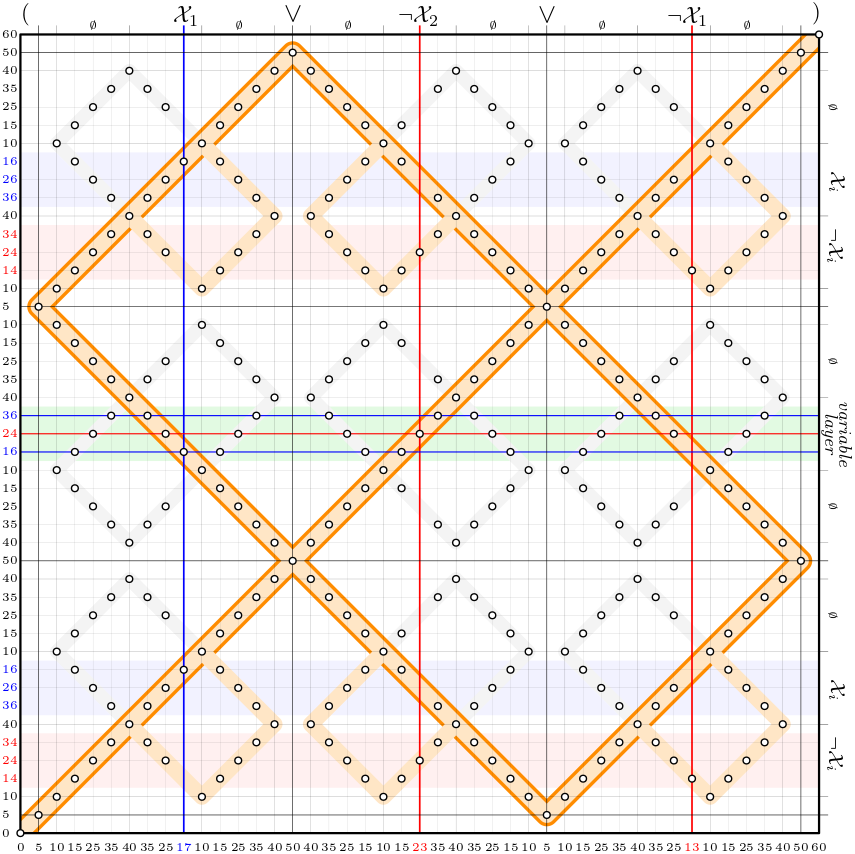}
    \caption{The new enlarged clause gadget for insertion only. 
    Specifically, the trivial clause $(\BoolVar_1 \lor \lnot \BoolVar_2 \lor \lnot \BoolVar_1)$ is represented, with the assumption that there is later an $\BoolVar_3$. Rows 16, 24, and 36 are already inserted. As before, the thick pale orange line shows paths that could be taken in some 3SAT instances, while grey paths are never useful. The dark orange border line merely highlights that the general structure seen in prior figures is still present. As indicated on the right, the pale blue and red strips indicate sublayers permitting travel across $\BoolVar$ and $\lnot\BoolVar$, respectively.}
    \figlab{fig:InsertionReduction}
\end{figure}

The middle row layer is now likewise divided into three row sublayers. Recall the columns in a clause gadget were also divided into three sections, one for each literal, each of which is now divided into three subsections.
Consider any column section corresponding to a literal $\BoolVar_k$ or $\lnot \BoolVar_k$.
By leaving the top and bottom sublayers of the middle row layer as $10i+5$ for all $i$, there will be a column of value $10k+7$ or $10k+3$ that prevents traversal in the middle column subsection of this literal in these sublayers due to a vertical gap that cannot be fixed with insertion. Therefore, the only way across the middle column subsections of the middle row layer will be through the middle row sublayer, i.e.\ our new variable sublayer. As discussed above in this section, we leave this sublayer empty, forcing $\SATv$ insertions to bridge this $\SATv$ width horizontal gap.
%If the inserted values are correct, and the 3SAT instance was satisfiable, there will now be a path in the free space from corner to corner. 
%If values not corresponding to a satisfying assignment of $\SATI$ are inserted, that means there is some clause that cannot be passed horizontally because none of the three variables in it have their corresponding assignments. 
Again, setting the portions of $\curveA$ as described above for each middle subsection of each clause gadget, will imply the inserted rows will correspond to a satisfying assignment to $\SATI$ if they allow passage, and a non-satisfying assignment if they do not allow passage.
This correspondence holds so long as insertions are limited to this variable sublayer, which will be ensured by requiring a budget of exactly $\SATv$ insertions.
%Thus, given a budget of $\SATv$ insertions all must go to the variable sublayer, and there is a path if and only if $\SATI$ is satisfiable.
%
%\footnote{
Specifically, the variable sublayer must clearly be crossed at least once in order to reach the end of $\curveB$, and this requires all $\SATv$ insertions as it contains a horizontal gap of width $\SATv$.
Thus there are no remaining insertions available to attempt any other insertions that might break the correspondence, such as attempting to span horizontal gaps between clauses.
%Even if insertions are used to span horizontal gaps between clauses in the bottom rows such that the end of $\curveA$ can be reached without crossing the gate row layer, the end of $\curveB$ will not be reached, and in-order to reach it the gate rows must be crossed. The only way to cross them is with horizontal movement of $\SATv$.
%} 

Given an instance $\SATI$ of 3SAT, with $\SATv$ variables and $\SATc$ clauses, our precise construction is thus as follows. 
Let $L$ be the ordered set of the elements $10i + 5$ for all $1 \leq i \leq \SATv$.
Let $L^+$ (resp.\ $L^-$) be $L$ but with all values shifted up 1 (resp.\ down 1).
Let $L_k^+$ (resp.\ $L_k^-$) be $L$ but with the value $10k+5$ replaced with $10k+7$ (resp.\ $10k+3$).
We then define groupings of these ordered sets into larger ordered sets, letting $S^R$ denote any ordered set $S$ in reverse order.
Let $G$ be $\langle 10 \rangle \circ L^- \circ \langle 10(\SATv+1) \rangle \circ (L^+)^R \circ \langle 10 \rangle \circ L \circ \langle 10(\SATv+1)\rangle$.
Let $\widehat{G}$ be $\langle 10 \rangle \circ L \circ \langle 10(\SATv+1) \rangle \circ \emptyset \circ \langle 10 \rangle \circ L \circ \langle 10(\SATv+1)\rangle$
Let $G_{k_i}^{\pm}$ represent a literal and be $\langle 10 \rangle \circ L \circ \langle 10(\SATv+1) \rangle \circ (L_{k_i}^{\pm})^R \circ \langle 10 \rangle \circ L \circ \langle 10(\SATv+1) \rangle$, where 
$G_{k_i}^{\pm}=G_{k_i}^{+}$ and $L_{k_i}^{\pm}=L_{k_i}^{+}$ if 
the literal is $\BoolVar_{k_i}$, and $G_{k_i}^{\pm}=G_{k_i}^{-}$ and $L_{k_i}^{\pm}=L_{k_i}^{-}$ if the literal is $\lnot\BoolVar_{k_i}$.
Then we have the following construction of $\curveA$ and $\curveB$.
\begin{itemize}
    \item Let $\curveB = \langle 0,5 \rangle \circ G \circ \langle 10(\SATv+2) \rangle \circ \widehat{G}^R \circ \langle 5 \rangle \circ G \circ \langle 10(\SATv+2),10(\SATv+3) \rangle$.
    \item Let $\curveA^i$ represent clause $i$ of $\SATI$ which contains variables $\BoolVar_{k_1}$, $\BoolVar_{k_2}$, and $\BoolVar_{k_3}$ and therefore $\curveA^i = \langle 5 \rangle \circ G_{k_1}^{\pm} \circ \langle 10(\SATv+2) \rangle \circ (G_{k_2}^{\pm})^R \circ \langle 5 \rangle \circ G_{k_3}^{\pm} \circ \langle 10(\SATv+2) \rangle$.
    \item Let $\curveA = \langle 0 \rangle \circ \curveA^1 \circ \langle 10(\SATv+3) \rangle \circ (\curveA^2)^R \circ \langle 0 \rangle \circ \dots \circ \curveA^\SATc \circ \langle 10(\SATv+3) \rangle$ (if $\SATc$ is even, duplicate one clause so the total number of clauses is odd).
\end{itemize}

\begin{theorem}\thmlab{hardinsert}
    Given values $\delta$ and $k$ and curves $\curveA$ and $\curveB$ in $\Re^1$, determining if the weak discrete \Frechet distance between the curves can be made less than or equal to $\delta$ by inserting up to $k$ points into $\curveB$ is NP-hard.
\end{theorem}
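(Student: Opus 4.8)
The plan is to reduce 3SAT to this problem, reusing the construction described just above for the deletion‑only case together with the modifications introduced in this subsection, and fixing the two numeric inputs as $\delta=1$ and $k=\SATv$. Given an instance $\SATI$ with variables $\BoolVar_1,\dots,\BoolVar_\SATv$ and $\SATc$ clauses, build $\curveA$ and $\curveB$ from the bulleted description preceding the theorem; both curves have $O(\SATv^2\SATc)$ vertices, so the reduction is polynomial. I then need to prove that $\SATI$ is satisfiable if and only if $\iedistDFrW{\curveA}{\curveB}\le \SATv$.

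For the forward implication I would take a satisfying assignment $\phi$ and insert into the (empty) variable sublayer of $\curveB$ the $\SATv$ values $y_1,\dots,y_\SATv$ in increasing index order, where $y_k=10k+6$ if $\phi$ makes $\BoolVar_k$ \TRUE and $y_k=10k+4$ if $\phi$ makes $\BoolVar_k$ \FALSE; this uses exactly $\SATv$ insertions, producing $\curveB'$. I would then exhibit an explicit free path from $(1,1)$ to $(m,n)$ in the weak‑discrete free‑space graph of $\curveA$ and $\curveB'$: it proceeds clause gadget by clause gadget, and inside the $i$th gadget it picks a literal of clause $i$ made true by $\phi$, routes horizontally through the always‑free top and bottom sublayers to that literal's middle column subsection, and crosses it via the variable sublayer, whose inserted rows now lie within distance $1$ of every column there (an "off‑literal" column of value $10j+5$ is covered by $y_j\in\{10j+4,10j+6\}$, and the distinguished column of the chosen literal, of value $10k+7$ for a positive or $10k+3$ for a negative occurrence, is covered by the matching $y_k=10k+6$ or $y_k=10k+4$). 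Checking that consecutive cells on this route are grid‑adjacent, and that the single‑free‑vertex inter‑gadget connectors together with the $\langle 0\rangle$, $\langle 10(\SATv+2)\rangle$, $\langle 10(\SATv+3)\rangle$ gluing values behave as intended, is a direct coordinate verification.

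For the reverse implication, suppose at most $\SATv$ insertions into $\curveB$ produce $\curveB'$ with $\distDFrW{\curveA}{\curveB'}\le 1$, and fix a free path $P$ from $(1,1)$ to $(m,n)$. I would first show all $\SATv$ insertions are forced into the variable sublayer: $P$ must cross the middle column subsection of the middle layer of every clause gadget (the only way from a gadget's entry corner to its exit corner), the top and bottom sublayers of that middle layer are blocked there by a vertical gap at a $10k+7$ or $10k+3$ column which insertions into $\curveB$ cannot bridge, and the variable sublayer spans $\SATv$ columns whose values are about $10\cdot 1+5,\dots,10\SATv+5$ — pairwise more than $2$ apart — so no inserted row covers two of them and at least $\SATv$ insertions are needed merely to cross one such subsection. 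With the budget exhausted, each inserted row covers exactly one variable's column, yielding a well‑defined value $y_k$ "at position $k$"; for the gadget where $P$ crosses the section of variable $k$, $y_k$ must cover the $10k+7$ column (if that literal is $\BoolVar_k$) or the $10k+3$ column (if it is $\lnot\BoolVar_k$), forcing $y_k>10k+5$ or $y_k<10k+5$ respectively. Reading $y_k>10k+5\mapsto\BoolVar_k$ \TRUE and $y_k<10k+5\mapsto\BoolVar_k$ \FALSE (and setting the remaining variables arbitrarily) gives an assignment that makes the literal crossed in each clause true, hence satisfies $\SATI$. The shift of $L_k^{\pm}$'s distinguished value to $10k+7$ / $10k+3$ rather than $10k+6$ / $10k+4$ is exactly what forbids the "cheating" insertion $10k+5$, which would otherwise cover both the positive and negative columns and satisfy $\BoolVar_k$ and $\lnot\BoolVar_k$ at once.

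The main obstacle is the reverse direction's forcing argument and the surrounding coordinate bookkeeping. Conceptually one must rule out an adversary who spends fewer than $\SATv$ insertions, or places insertions outside the variable sublayer (say to bridge a horizontal gap between clause gadgets or inside a top or bottom layer), and still reaches $(m,n)$ without encoding a consistent assignment; this is handled by verifying that each inter‑gadget connector column has a single free vertex whose row cannot be moved by inserting into $\curveB$ (so gadgets are traversed only between their designated corners) and that every top or bottom (sub)layer is already fully connected and insertion‑free. The most delicate low‑level check is confirming that the chosen numeric values keep the three sublayers of the top and bottom layers traversable under every assignment — i.e., that the horizontal service lanes for $\BoolVar_k$ and $\lnot\BoolVar_k$ never accidentally acquire an unbridgeable gap — and that allowing backtracking grants no extra power, since every gap in the construction is a genuine disconnection of the free‑space graph.
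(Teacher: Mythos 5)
Your proposal is correct and follows essentially the same route as the paper's proof: the identical construction (empty variable sublayer forcing exactly $\SATv$ insertions, distinguished column values $10k+7$/$10k+3$ to outlaw the $10k+5$ cheat, and the $L^+$/$L^-$ service sublayers keeping the top and bottom layers traversable), with the same budget-exhaustion argument to force each inserted value to encode one variable's assignment. The points you flag as delicate are exactly the checks the paper carries out.
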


Recall that for the analogous first theorem for the deletion only case, \thmref{deletionmain}, unlimited deletions were allowed (via the addition of a containment gadget).
We observe here, however, that unlimited insertions on $\curveB$ is easily polynomial time solvable.
Specifically, if there is a row in the free space with no free vertex, then this row is not passable regardless of what other rows are inserted. Conversely, two free vertices in consecutive rows can always reach each other by inserting a diagonal in between them (i.e.\ inserting rows corresponding to the column values in between them).
Thus the \Frechet distance can be made $\leq \delta$ if and only if every row of the free space has at least one free vertex, or stated in terms of the values on $\curveB$ and $\curveA$, every point in $\curveB$ is within $\delta$ of some point in $\curveA$.
Moreover, note that unlimited insertions on both curves always allows a \Frechet distance of 0 by concatenating $\curveA$ before $\curveB$ on $\curveB$ and $\curveB$ after $\curveA$ on $\curveA$. 

While unlimited insertions will not yield a hardness result, we can still modify the above reduction to extend to the continuous case. This is achieved in a similar way as was done for deletion, namely moving the curves to the $x$-axis in $\Re^2$ and then placing $(0,\infty)$ between consecutive vertices. However, now for our empty variable sublayer on $\curveB$, we instead include $\SATv-1$ such $(0,\infty)$ points, in preparation of variable assignment rows being inserted between them.\footnote{It is not possible to `cheat' this set-up by inserting gate-rows after (not in-between) the $(0,\infty)$ points, because on $\curveA$ the variables are in-between the $(0,\infty)$ points, which forces the same on $\curveB$ as seen in \figref{fig:SimpleInfinityInsertion}.}
%in \apndref{figures}}
\begin{figure}[t]
    \centering
    \includegraphics[width=0.5\textwidth]{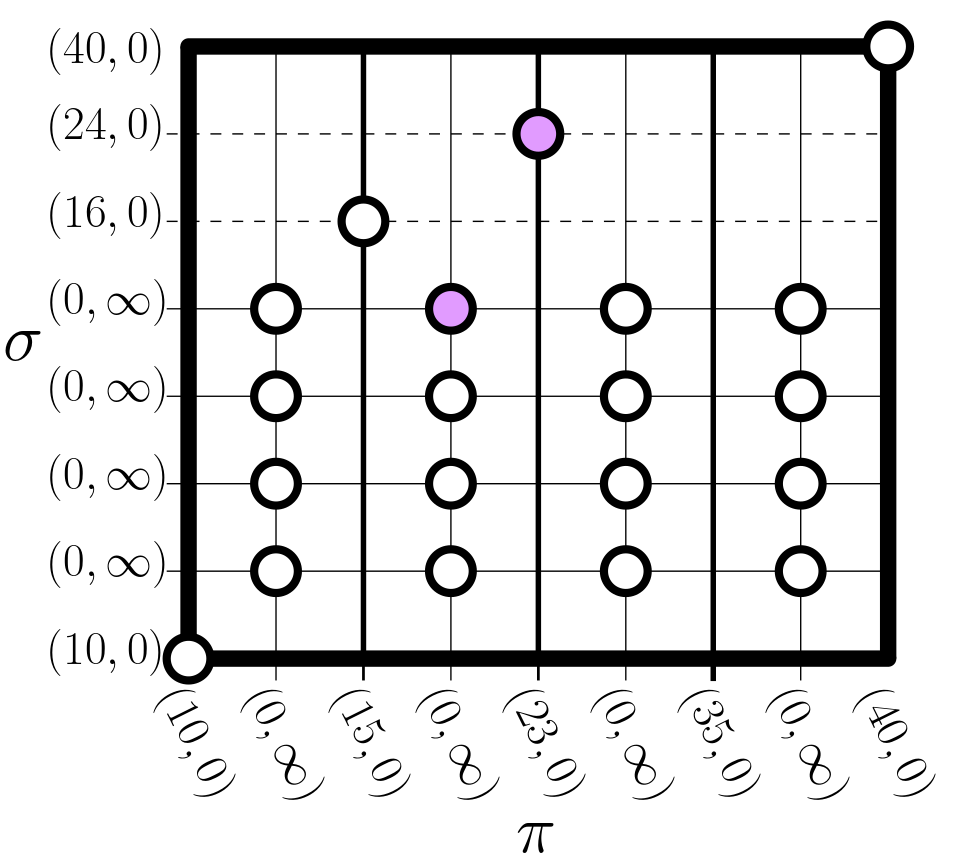}
    \caption{A zoomed in example of continuous insertion over $\lnot \BoolVar_2$ where $16$ and $24$ (dashed) have been inserted after the $(0,\infty)$ rows in an attempt to cheat the system. The created gap (highlighted) prevents this from working.}
    \figlab{fig:SimpleInfinityInsertion}
\end{figure}

\begin{corollary}\corlab{continousinserthard}
    Given values $\delta$ and $k$ and curves $\curveA$ and $\curveB$ in $\Re^2$, determining if the weak continuous \Frechet distance between the curves can be made less than or equal to $\delta$ by inserting up to $k$ points into $\curveB$, is NP-hard.
\end{corollary}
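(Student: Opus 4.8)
The plan is to reduce from the $\Re^1$ discrete insertion hardness of \thmref{hardinsert}, lifting both curves into the plane exactly as in the deletion case (\corref{deletioncont}). Given the curves $\curveA,\curveB\subseteq\Re^1$ produced by the reduction in the previous subsection, replace every coordinate $i$ by the point $(i,0)$ and intersperse a tall ``spike'' vertex between consecutive vertices of both curves; the spike is $(0,M)$ for a sufficiently large value $M$ (morally $M=\infty$), except that in the empty variable sublayer of $\curveB$ we instead place $\SATv-1$ consecutive copies of $(0,M)$, leaving $\SATv$ slots into which variable-assignment points can be inserted. The insertion budget stays exactly $\SATv$.

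First I would check that interspersing the spikes changes neither the weak discrete edit distance nor, after lifting, the weak continuous one. For the discrete claim, any grid move can be simulated at no extra cost by simultaneously stepping both frogs onto their shared $(0,M)$ spike and then onto the intended destination, and conversely the two curves must visit their spikes simultaneously since, for $M$ large, nothing else is within $\thresh=1$ of a spike; hence the lifted discrete instance admits $\le\SATv$ insertions realizing weak discrete \Frechet distance $\le\delta$ iff $\SATI$ is satisfiable. For the continuous claim, the weak continuous \Frechet distance is realized at a vertex--vertex pair (identical to the discrete case) or at a vertex--edge pair; in the latter case the edge is one of the nearly vertical spike edges $\pntC\,(0,M)$, and since the other point lies on the $x$-axis and the spike edge points essentially straight up, its closest point on that edge is its bottom vertex $\pntC$, reducing again to a vertex--vertex distance. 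With integer coordinates, $\delta=1$, and $M$ taken large enough, the slight tilt from a finite $M$ introduces no new sub-$\delta$ event, so weak continuous \Frechet $\le\delta$ iff weak discrete \Frechet $\le\delta$ on these curves.

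Next I would argue that the $\SATv-1$ spikes in the variable sublayer preserve the assignment correspondence from \thmref{hardinsert}. The sublayer must be crossed and contains a horizontal gap of width $\SATv$, so all $\SATv$ insertions are forced into its $\SATv$ slots, one assignment value strictly between each consecutive pair of spikes, leaving no insertions available to ``cheat'' elsewhere (e.g.\ spanning inter-clause gaps). The remaining worry — inserting assignment rows just outside, rather than between, the spikes — is ruled out because on $\curveA$ the literal columns lie between its $(0,M)$ points, which forces the analogous structure on $\curveB$ and leaves an uncloseable gap, as illustrated in \figref{fig:SimpleInfinityInsertion}. Chaining the three steps gives that $\SATI$ is satisfiable iff the lifted instance can be brought within weak continuous \Frechet distance $\delta$ using $\le\SATv$ insertions, which establishes NP-hardness.

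The step I expect to be the main obstacle is the combination of the second and third steps: one must rule out that the extra freedom of weak (backtracking) continuous traversals — including partially traversing or re-using a spike edge — lets the traversal bypass a gap that genuinely blocks the discrete version, and one must pin down precisely how inserted points interact with the finite-$M$ spikes in the variable sublayer so that the ``$\SATv$ insertions $\Leftrightarrow$ exactly one value per slot'' equivalence holds on the nose.
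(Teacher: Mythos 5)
Your proposal matches the paper's argument essentially step for step: lift the $\Re^1$ instance of \thmref{hardinsert} to the $x$-axis in $\Re^2$, intersperse tall $(0,\infty)$ (or large finite $(0,M)$) spikes, place $\SATv-1$ of them in the empty variable sublayer of $\curveB$, show discrete and continuous weak distances coincide because any vertex--edge event on a near-vertical spike edge reduces to a vertex--vertex event, and rule out off-slot insertions via the structure of $\curveA$ as in \figref{fig:SimpleInfinityInsertion}. No meaningful divergence from the paper's proof.
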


%\jp{I believe continuous unlimited insertions on one curve is always ``yes'' if range($\curveB$) can fit in range($\curveA$) within $\delta$. E.g. 15 vs 10,20 works}
%\jp{Limited insertions on both curves appears hard if we want it. You can't spend any of your insertions on $\curveA$ because you need them all for the gate rows. It is true that insertions on $\curveA$ can help cross the gap once, but you'll need to cross it multiple times and then you'll have run out of insertions. If you only need to cross it once, the 3SAT is necessarily satisfiable anyway. The logic may not hold for continuous...)
%}

%\subsection{Weak Edit Distance}
%\br{I changed the theorem below to only allow deletions on $\curveB$. Yes, we could allow them on $\curveA$ as well, but not sure we should unless we allow insertions on $\curveA$ too, for which we should really carefully verify first.}

Finally, recall that for the deletion only case, by imposing a budget of $\SATv$ deletions, we were able to confine deletions to the variable layer by duplicating each row of the other layers $\SATv$ times. Thus duplicating all rows in the insertion only construction, similarly means that allowing budgeted deletion of the rows will not affect the \Frechet distance. We thus have the following result, which could be stated as a corollary, though we instead state as a theorem as it now applies to  
$\edistDFrW{\curveA}{\curveB}$ and $\edistFrW{\curveA}{\curveB}$ rather than $\iedistDFrW{\curveA}{\curveB}$ and $\iedistFrW{\curveA}{\curveB}$.

\begin{theorem}\thmlab{hardedit}
Given values $\delta$ and $k$ and curves $\curveA$ and $\curveB$ in $\Re^1$, determining if the $\delta$-threshold discrete \Frechet edit distance is less than or equal to $k$ is NP-hard.

Moreover, for curves $\curveA$ and $\curveB$ in $\Re^2$, determining if the $\delta$-threshold continuous \Frechet edit distance is less than or equal to $k$ is NP-hard.
\end{theorem}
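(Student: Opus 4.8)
The plan is to reduce this to the insertion-only hardness already established in \thmref{hardinsert} (and, for the continuous part, \corref{continousinserthard}) by a small modification that renders deletions harmless. Recall that the construction behind \thmref{hardinsert} turns a 3SAT instance $\SATI$ with $\SATv$ variables into curves $\curveA,\curveB$ in $\Re^1$ so that $\iedistDFrW{\curveA}{\curveB}\le\SATv$ iff $\SATI$ is satisfiable, where the only useful insertions are the $\SATv$ rows needed to bridge the width-$\SATv$ horizontal gap of the (empty) variable sublayer. First I would take exactly that construction and replace every vertex of $\curveB$ by $\SATv+1$ consecutive identical copies of itself (the variable sublayer of $\curveB$ remains empty; in the continuous version below it keeps its $\SATv-1$ separator points, each likewise duplicated). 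Setting the edit budget to $k=\SATv$, I claim $\edistDFrW{\curveA}{\curveB}\le\SATv$ iff $\SATI$ is satisfiable; since the curves grow only by the factor $\SATv+1$, this is a polynomial-time reduction from 3SAT.

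For correctness, the forward direction is immediate: the $\SATv$-insertion realization guaranteed by \thmref{hardinsert} remains valid after the duplication, because inserting (or having) consecutive identical copies of a vertex does not change the weak discrete \Frechet distance. For the reverse direction, take any sequence of at most $\SATv$ edits to $\curveB$ producing $\curveB'$ with $\distDFrW{\curveA}{\curveB'}\le\delta$; we may assume no inserted vertex is subsequently deleted. Every deletion in this sequence removes one copy of a vertex that still retains at least one copy afterward, since erasing all $\SATv+1$ copies of a vertex would already exhaust more than the budget. As deleting one of several consecutive identical vertices leaves $\distDFrW{\curveA}{\cdot}$ unchanged, a sequence using $j\ge 1$ deletions achieves its distance using only its $\SATv-j\le\SATv-1$ insertions; but fewer than $\SATv$ insertions cannot bridge the width-$\SATv$ gap of the variable sublayer, which every traversal of $\curveB$ must cross --- a contradiction. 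Hence $j=0$, the sequence is purely insertions of size at most $\SATv$, and \thmref{hardinsert} shows $\SATI$ is satisfiable.

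For the ``moreover'' claim, I would instead start from the lifted construction behind \corref{continousinserthard} --- all coordinates placed on the $x$-axis, a large-value point $(0,\infty)$ interspersed between consecutive vertices, and $\SATv-1$ such points already sitting in the variable sublayer of $\curveB$ --- and then apply the same duplication of each vertex of $\curveB$ into $\SATv+1$ identical copies, again with $k=\SATv$. The (now degenerate or vertical) interspersed edges still force every realizing vertex-to-edge distance to equal a vertex-to-vertex distance, exactly as in \corref{continousinserthard}, so the discrete analysis above carries over and yields $\edistFrW{\curveA}{\curveB}\le\SATv$ iff $\SATI$ is satisfiable.

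The main obstacle is precisely the reverse-direction argument that a mixture of deletions and insertions cannot defeat the gadget: one must be sure that, with the duplication in place, no combination of a handful of deletions together with fewer than $\SATv$ insertions can span any of the ``insurmountable'' horizontal or vertical gaps the reduction depends on (between clause gadgets, and across the shifted-literal columns with values $10k+3$ or $10k+7$). The duplication neutralizes the deletion side --- nothing can be fully erased, and partial erasure is a no-op for the distance --- while the exact budget neutralizes the insertion side, but one must verify the two bounds compose: every feasible edit sequence is, after discarding its (distance-preserving, budget-wasting) deletions, a pure-insertion sequence of size at most $\SATv$, to which \thmref{hardinsert} already applies.
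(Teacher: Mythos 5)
Your proposal is correct and takes essentially the same route as the paper: the paper also obtains \thmref{hardedit} by taking the insertion-only construction of \thmref{hardinsert} (resp.\ \corref{continousinserthard}), duplicating every row of $\curveB$ so that a budget of $\SATv$ deletions can never fully remove any row and partial removal leaves the weak \Frechet distance unchanged, and then observing that any deletions merely waste budget needed for the $\SATv$ insertions that bridge the variable sublayer. Your write-up is, if anything, slightly more explicit than the paper's about why a mixed edit sequence with $j\ge 1$ deletions cannot succeed.
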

}

\RegVer{\hardnessSubsection}

%Additionally, there is no reason that all points can't be duplicated $\SATv$ times to `allow' deletion on one or both curves while also making it useless.

%\begin{corollary}
%    Given values $\delta$ and $k$ and curves $\curveA$ and $\curveB$ in $\Re^1$, determining if the weak discrete (or continuous if lifted to $\Re^2$) \Frechet distance between the curves can be made less than or equal to $\delta$ by inserting up to $k$ points into $\curveB$ and deleting up to $k$ points from $\curveB$, $\curveA$, or both, is NP-hard.
%\end{corollary}

%\bibliographystyle{alpha}
\bibliographystyle{plainurl}
\bibliography{refs}

\begin{thebibliography}{10}

\bibitem{DBLP:journals/corr/AgarwalFPY15}
Pankaj~K Agarwal, Kyle Fox, Jiangwei Pan, and Rex Ying.
\newblock Approximating dynamic time warping and edit distance for a pair of point sequences.
\newblock In {\em Proceedings of the 32nd International Symposium on Computational Geometry}, pages 6:1--6:16, 2016.

\bibitem{akpw-mca-15}
Mahmuda Ahmed, Sophia Karagiorgou, Dieter Pfoser, and Carola Wenk.
\newblock {\em Map Construction Algorithms}.
\newblock Springer, 2015.

\bibitem{ag-cfdpc-95}
Helmut Alt and Michael Godau.
\newblock Computing the {Fr{\'e}chet} distance between two polygonal curves.
\newblock {\em Int. J. Comput. Geometry Appl.}, 5:75--91, 1995.

\bibitem{afkks-dsfds-15}
Rinat~Ben Avraham, Omrit Filtser, Haim Kaplan, Matthew~J. Katz, and Micha Sharir.
\newblock The discrete and semicontinuous {Fr{\'{e}}chet} distance with shortcuts via approximate distance counting and selection.
\newblock {\em {ACM} Trans. Algorithms}, 11(4):29:1--29:29, 2015.

\bibitem{DBLP:journals/corr/Bringmann14}
Karl Bringmann.
\newblock Why walking the dog takes time: {Fr{\'{e}}chet} distance has no strongly subquadratic algorithms unless {SETH} fails.
\newblock In {\em Proceedings of the IEEE 55th Annual Symposium on Foundations of Computer Science}, pages 661--670, 2014.

\bibitem{DBLP:journals/jocg/BringmannM16}
Karl Bringmann and Wolfgang Mulzer.
\newblock Approximability of the discrete {Fr{\'{e}}chet} distance.
\newblock {\em JoCG}, 7(2):46--76, 2016.

\bibitem{blopuv-cfducod-23}
Kevin Buchin, Maarten L{\"{o}}ffler, Tim Ophelders, Aleksandr Popov, J{\'{e}}r{\^{o}}me Urhausen, and Kevin Verbeek.
\newblock Computing the {Fr{\'{e}}chet} distance between uncertain curves in one dimension.
\newblock {\em Comput. Geom.}, 109:101923, 2023.
\newblock \href {https://doi.org/10.1016/j.comgeo.2022.101923} {\path{doi:10.1016/j.comgeo.2022.101923}}.

\bibitem{bds-cfdsn-14}
Maike Buchin, Anne Driemel, and Bettina Speckmann.
\newblock Computing the {Fr{\'{e}}chet} distance with shortcuts is np-hard.
\newblock In Siu{-}Wing Cheng and Olivier Devillers, editors, {\em 30th Annual Symposium on Computational Geometry}, page 367. {ACM}, 2014.
\newblock \href {https://doi.org/10.1145/2582112.2582144} {\path{doi:10.1145/2582112.2582144}}.

\bibitem{bp-kfd-22}
Maike Buchin and Lukas Plätz.
\newblock The k-outlier fr\'echet distance, 2022.
\newblock \href {https://arxiv.org/abs/2202.12824} {\path{arXiv:2202.12824}}.

\bibitem{c-idalp-18}
Timothy~M. Chan.
\newblock Improved deterministic algorithms for linear programming in low dimensions.
\newblock {\em ACM Trans. Algorithms}, 14(3), 2018.

\bibitem{cdgnw-ammrf-11}
Daniel Chen, Anne Driemel, Leonidas~J. Guibas, Andy Nguyen, and Carola Wenk.
\newblock Approximate map matching with respect to the {Fr\'echet} distance.
\newblock In {\em Proc. 13th Meeting on Algorithm Engineering and Experiments}, pages 75--83, 2011.

\bibitem{CHEN2004}
Lei Chen and Raymond Ng.
\newblock On the marriage of {L}p-norms and edit distance.
\newblock In {\em Proceedings of the 30th International Conference on Very Large Databases}, pages 792--803, 2004.

\bibitem{Chen2005}
Lei Chen, M~Tamer {\"O}zsu, and Vincent Oria.
\newblock Robust and fast similarity search for moving object trajectories.
\newblock In {\em Proceedings of the 2005 ACM SIGMOD International Conference on Management of Data}, pages 491--502, 2005.

\bibitem{cd-ckfd-22}
Jacobus Conradi and Anne Driemel.
\newblock {On Computing the k-Shortcut {Fr{\'{e}}chet} Distance}.
\newblock In {\em Proc. 49th Intern. Colloquium Automata, Languages, Programming}, pages 46:1--46:20, 2022.

\bibitem{df-jdcfds-13}
Anne Driemel and Sariel Har{-}Peled.
\newblock Jaywalking your dog: Computing the {Fr{\'{e}}chet} distance with shortcuts.
\newblock {\em {SIAM} J. Comput.}, 42(5):1830--1866, 2013.
\newblock \href {https://doi.org/10.1137/120865112} {\path{doi:10.1137/120865112}}.

\bibitem{fk-adfdt-20}
Omrit Filtser and Matthew~J. Katz.
\newblock Algorithms for the discrete {Fr{\'{e}}chet} distance under translation.
\newblock {\em J. Comput. Geom.}, 11(1):156--175, 2020.

\bibitem{fl-aged-22}
Kyle Fox and Xinyi Li.
\newblock Approximating the geometric edit distance.
\newblock {\em Algorithmica}, 84(9):2395--2413, 2022.

\bibitem{Gold:2018:DTW:3266298.3230734}
Omer Gold and Micha Sharir.
\newblock Dynamic time warping and geometric edit distance: Breaking the quadratic barrier.
\newblock {\em ACM Transactions on Algorithms}, 14(4):50, 2018.

\bibitem{ghms-apsmlp-93}
Leonidas~J. Guibas, John Hershberger, Joseph S.~B. Mitchell, and Jack Snoeyink.
\newblock Approximating polygons and subdivisions with minimum link paths.
\newblock {\em Int. J. Comput. Geom. Appl.}, 3(4):383--415, 1993.
\newblock \href {https://doi.org/10.1142/S0218195993000257} {\path{doi:10.1142/S0218195993000257}}.

\bibitem{hr-fdre-14}
Sariel Har{-}Peled and Benjamin Raichel.
\newblock The {Fr{\'{e}}chet} distance revisited and extended.
\newblock {\em {ACM} Trans. Algorithms}, 10(1):3:1--3:22, 2014.
\newblock \href {https://doi.org/10.1145/2532646} {\path{doi:10.1145/2532646}}.

\bibitem{jxz-psadf-08}
Minghui Jiang, Ying Xu, and Binhai Zhu.
\newblock Protein structure-structure alignment with discrete {Fr\'{e}chet} distance.
\newblock {\em J. Bioinformatics and Computational Biology}, 6(1):51--64, 2008.

\bibitem{Marteau2009}
Pierre-Fran{\c{c}}ois Marteau.
\newblock Time warp edit distance with stiffness adjustment for time series matching.
\newblock {\em IEEE Transactions on Pattern Analysis and Machine Intelligence}, 31(2):306--318, 2009.

\bibitem{m-lpltd-84}
Nimrod Megiddo.
\newblock Linear programming in linear time when the dimension is fixed.
\newblock {\em J. ACM}, 31(1):114–127, 1984.

\bibitem{Sankararaman:2013:MMS:2525314.2525360}
Swaminathan Sankararaman, Pankaj~K Agarwal, Thomas M{\o}lhave, Jiangwei Pan, and Arnold~P Boedihardjo.
\newblock Model-driven matching and segmentation of trajectories.
\newblock In {\em Proceedings of the 21st ACM SIGSPATIAL International Conference on Advances in Geographic Information Systems}, pages 234--243, 2013.

\bibitem{skb-fdbas-07}
E.~Sriraghavendra, Karthik K., and Chiranjib Bhattacharyya.
\newblock Fr{\'{e}}chet distance based approach for searching online handwritten documents.
\newblock In {\em Proc. 9th Intern. Conf. Document Analysis and Recognition}, pages 461--465, 2007.

\bibitem{Stojmirovic2009GeometricAO}
Aleksandar Stojmirovic and Yi-kuo Yu.
\newblock Geometric aspects of biological sequence comparison.
\newblock {\em Journal of Computational Biology}, 16(4):579--611, 2009.

\bibitem{wmdtsk-exrmd-13}
Xiaoyue Wang, Abdullah Mueen, Hui Ding, Goce Trajcevski, Peter Scheuermann, and Eamonn Keogh.
\newblock Experimental comparison of representation methods and distance measures for time series data.
\newblock {\em Data Mining and Knowledge Discovery}, 26(2):275--309, 2013.

\end{thebibliography}

\RegVer{
\appendix

\section{Minimum Vertex Curve Endpoints Reduction}\apndlab{mvcreduction}

Let $\mv{\curveA}$ be the analogue of $\mv{s,t,\curveA}$ from \defref{mv}, except where the starting points $s$ and $t$ are not specified. 
Here we show that the problem of computing $\mv{s,t,\curveA}$ can be reduced to computing $\mv{\curveA'}$, where $\curveA'$ is obtained from $\curveA$ by prepending and appending a constant number of vertices.
So consider any line $\ell$ through $s$. 
There are two points on $\ell$ at distance exactly $\thresh$ from $s$, call them $p_1$ and $p_2$. %Additional define a third point $p_3=p_1$. 
Analogously define the points $q_1$ and $q_2$ for $t$. Then we set $\curveA' = \seq{p_1,p_2,p_1,p_2,s}\circ \curveA \circ \seq{t,q_2,q_1,q_2,q_1}$ as shown in \figref{fig:MV_PiPrime}.

\begin{figure}[h]
    \centering
    \begin{subfigure}[b]{.45\textwidth}
        \centering
        \includegraphics[width=1\textwidth]{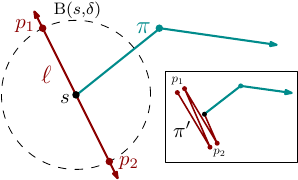}
        \subcaption{An example of $\ell$ and $\curveA$ being used to construct $\curveA'$ (space added along $\ell$ for readability)}
        \figlab{fig:MV_PiPrime}
    \end{subfigure}
    \hspace{2em}
    \begin{subfigure}[b]{.45\textwidth}
        \centering
        \includegraphics[width=1\textwidth]{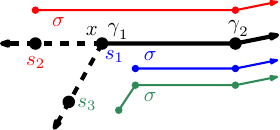}
        \subcaption{Three cases of $s$, where \textcolor{blue}{$s_1 \Rightarrow \curveB = \gamma$}, \textcolor{red}{$s_2 \Rightarrow \curveB=\seq{s,\gamma_2,\ldots,\gamma_k}$}, and \textcolor{seagreen}{$s_3 \Rightarrow \curveB=s\circ \gamma$}}
        \figlab{fig:MV_s_cases_complex}
    \end{subfigure}
    \caption{}
    \figlab{fig:MV_figs}
\end{figure}

Let $\mv{\curveA'}=\curveB'=\{\curveB_1',\ldots,\curveB_n'\}$.
Since $\distFr{\curveA'}{\curveB'}\leq \thresh$, there is a bijective mapping between traversals of the two curves such that paired points are within distance $\thresh$. 
Fix any such $\thresh$-realizing traversal. 
Let $x$ denote the point on $\curveB'$ which the point $s$ on $\curveA'$ got mapped to. 
Let $\eta$ denote the entire subcurve of $\curveB'$ before and including  $x$, and let $\gamma=\seq{\gamma_1,\ldots,\gamma_k}$ denote the entire subcurve after and including $x$. 
%Thus $\curveB'$ and $\eta\circ \gamma$ are the same curve other than vertices at $x$ and so  $\eta\circ \gamma$. (Technically the point $x$ on \curveB')
%
%Similarly, let $y$ denote the point on $\curveB'$ which the point $t$ got mapped to, and let $\gamma$ denote the subcurve of $\curveB'$ from $x$ to $\curveB'_n$.
%
We wish to consider the curve $s\circ \gamma$, though it may contain redundant vertices. 
So define a curve $\curveB$, where $\curveB=\gamma$ if $x=s$. Now if $x\neq s$, note that $\gamma_1=x$ might lie on the segment $s\gamma_2$. (and note $\gamma_2,\ldots,\gamma_k$ are vertices of $\curveB'$). Thus in this case let $\curveB=\seq{s,\gamma_2,\ldots,\gamma_k}$ if $x$ lies on $s\gamma_2$, and otherwise $\curveB=s\circ \gamma$, as shown in \figref{fig:MV_s_cases_complex}.

%Observe that $\distFr{\curveA'}{s\circ \gamma}\leq \thresh$, since $p_1,p_2\in B(s,\delta)$ and so we can can stand still at $s$ on $s\circ \gamma$ while on $\curveA'$ we traverse $\seq{p_1,p_2,p_1,p_2,s}$, and then we can stand still at $s$ on $\curveA'$ while on $s\circ \gamma$ we traverse from $s$ to $x$, and then afterwards follow the portions of the $\thresh$-realizing traversal of $\curveA'$ and $\curveB'$ corresponding to their remaining subcurves.
%
Observe, that $\distFr{\curveA'}{\curveB}\leq \thresh$, since $p_1,p_2\in B(s,\delta)$ and so we can can stand still at $s$ on $\curveB$ while on $\curveA'$ we traverse $\seq{p_1,p_2,p_1,p_2,s}$, and then we can stand still at $s$ on $\curveA'$ while on $\curveB$ we traverse from $s$ to $x$, and then afterwards follow the portions of the $\thresh$-realizing traversal of $\curveA'$ and $\curveB'$ corresponding to their remaining subcurves. 
Moreover, the later part of the traversal just described also implies $\distFr{\curveA}{\curveB}\leq \thresh$.

Note that $|\curveB|\leq |\curveB'|$ in all cases except when $x$ is not on the segment $s\gamma_2$ ($s_3$ in \figref{fig:MV_s_cases_complex}), yet $x$ does lie on the segment $\curveB'_1\curveB'_2$ (in which case $\curveB'_2=\gamma_2$). However, we now argue that this is not possible, by arguing the $s$ must occur on $\eta$.

Suppose that $s$ does not occur on $\eta$. 
We claim that $x$ must occur strictly after $\curveB'_3$ on $\curveB'$, which would give a contradiction as then $|\curveB|<|\curveB'|$ (i.e.\ $\curveB'$ would not be $\mv{\curveA'}$ since we argued $\distFr{\curveA'}{\curveB}\leq \thresh$).
Let $\ell^\perp$ be the line perpendicular to $\ell$ and passing through $s$, and let  $H_1$ (resp.\ $H_2$) denote the open halfplane bounded by $\ell^\perp$ and containing $p_1$ (resp.\ $p_2$). Observe that the only point on $\ell^\perp$ within distance $\thresh$ to either $p_1$ or $p_2$ is $s$. Thus if $s$ does not occur on $\eta$, in order to match the subcurve $\seq{p_1,p_2,p_1,p_2}$, the curve $\eta$ must go from $H_1$ to $H_2$, back to $H_1$, and then back to $H_2$ again.
At minimum, this requires a vertex for the first visit to $H_1$, then a vertex later to turn from $H_2$ back to $H_1$, and then a third vertex to turn from $H_1$ back again to $H_2$.
Thus $x$ would occur strictly after $\curveB'_3$, giving a contradiction as described above, and so $s$ must occur on $\eta$.   

%Thus we know that the point $s$ occurs on $\eta$. Let the curve $\curveB$ obtained from $\curveB'$, by shortening the curve to start at the first occurrence of $s$. Then since $\seq{p_1,p_2,p_1,p_2,s}\subset B(s,\thresh)$ and since $\distFr{\curveA'}{\curveB'}\leq \thresh$ it follows that $\distFr{\curveA'}{\curveB}\leq \thresh$. As $\curveB$ is a shortening of $\curveB'$, clearly $|\curveB|\leq |\curveB'|$, and thus like $\curveB'$, $\curveB$ is a minimum vertex curve with \Frechet distance $\leq \thresh$ to $\curveA'$. \br{Proof ok up to this point.}

%Again, for some $\thresh$-traversal of $\curveA'$ $\curveB$, there must be some point $x$ on $\curveB$ which $\curveA'$ maps to. However, now since $\curveB$ starts at $s$ we can conclude this point $x$ must lie on the first segment of $\curveB$, since otherwise the replacing the portion ....
%%Moreover, $x$ must lie on the first segment of $\curveB$, as otherwise it would imply $\curveB$ is not a minimum vertex curve with \Frechet distance $\leq \thresh$ to $\curveA'$, as recall $\distFr{\curveA'}{s\circ \gamma}\leq \thresh$.

In summary, for the reduction we construct $\curveA'$, compute $\mv{\curveA'}=\curveB'$, and then compute a $\thresh$-traversal of $\curveA'$ and $\curveB'$ using the standard \Frechet distance algorithm. Then we construct the curve $\curveB$ as described above, except where we perform the above steps both with respect to the $s$ side and the $t$ side of the curve. 
As argued above, $\curveB$ is a minimum vertex curve starting at $s$ and ending at $t$ such that the \Frechet distance to $\curveA'$ is $\leq \thresh$.
Since we argued $\distFr{\curveA}{\curveB}\leq \thresh$, this implies it is a minimum vertex curve with \Frechet distance $\leq \thresh$ to $\curveA$ restricted to starting at $s$ and ending at $t$, 
because for any curve $\zeta$ starting at $s$ and ending at $t$, if $\distFr{\curveA}{\zeta}\leq \thresh$, then $\distFr{\curveA'}{\zeta}\leq \thresh$.

As, $\curveA'$ has only a constant number of vertices more than $\curveA$, computing $\mv{\curveA'}$ takes $O(m^2\log^2 m)$ time with \thmref{stabtime}, and this dominates the running time as computing $\curveB$ from $\mv{\curveA'}$ can be done in $O(m^2\log m)$ time with the standard \Frechet distance algorithm.

%then shorten $\curveB'$ to start at it first occurrence of $s$ and last occurrence of $t$. Call this new curve $\curveB$. The above argued $\distFr{\curveA'}{\curveB}\leq \thresh$, \br{....which should imply $\distFr{\curveA}{\curveB}\leq \thresh$?....} Moreover, computing $\mv{\curveA'}$ in $O(m^2\log^2 m)$ time with \thmref{stabtime} dominates the running time, as $\curveA'$ has only a constant number of vertices more than $\curveA$, and computing $\curveB$ from $\mv{\curveA'}$ can be done with a linear scan.

%Thus in summary, for the reduction we construct $\curveA'$, compute $\mv{\curveA'}=\curveB'$, and then shorten $\curveB'$ to start at it first occurrence of $s$ and last occurrence of $t$. Call this new curve $\curveB$. The above argued $\distFr{\curveA'}{\curveB}\leq \thresh$, \br{....which should imply $\distFr{\curveA}{\curveB}\leq \thresh$?....} Moreover, computing $\mv{\curveA'}$ in $O(m^2\log^2 m)$ time with \thmref{stabtime} dominates the running time, as $\curveA'$ has only a constant number of vertices more than $\curveA$, and computing $\curveB$ from $\mv{\curveA'}$ can be done with a linear scan.

}

% \section{Additional Figures}
% \apndlab{figures}
% \begin{figure}[h]
%     \centering
%     \begin{subfigure}[b]{1\textwidth}
%         \centering
%         \includegraphics[width=\textwidth]{figs/ContinuousTwo.png}
%         \subcaption{\centering Continuous distance is $<=1$ when discrete is $>1$ after interspersing by $(0,2)$. Numbers represent simultaneous traversal order and $i^*$ indicates an invalid stop were it discrete \Frechet.}
%         \figlab{fig:ContinuousTwo}
%     \end{subfigure}
%     \label{fig:ContinuousTwo}
%     \vspace{2em}
%     \begin{subfigure}[b]{1\textwidth}
%         \centering
%         \includegraphics[width=\textwidth]{figs/ContinuousInfty.png}
%         \subcaption{\centering Continuous distance is the same as discrete after interspersing by $(0,\infty)$.}
%         \figlab{fig:ContinuousInfty}
%     \end{subfigure}
%     \caption{Red and blue represent the different curves which are shown in their actual $\Re^2$ image space, i.e.\ these figures are no showing the free space.}
%     \figlab{fig:ContinuousJustification}
% \end{figure}

% \begin{figure}[h]
%     \centering
%     \includegraphics[width=0.5\textwidth]{figs/SimpleInfinityInsertion.png}
%     \caption{A zoomed in example of continuous insertion over $\lnot \BoolVar_2$ where $16$ and $24$ (dashed) have been inserted after the $(0,\infty)$ rows in an attempt to cheat the system. The created gap (highlighted) prevents this from working.}
%     \figlab{fig:SimpleInfinityInsertion}
% \end{figure}

\end{document}